\newtheorem{theorem}{Theorem}
\newtheorem{lemma}[theorem]{Lemma}
\newtheorem{proposition}[theorem]{Proposition}
\newtheorem{observation}[theorem]{Observation}
\newtheorem{corollary}[theorem]{Corollary}
\newtheorem{definition}[theorem]{Definition}
\def\PSPACE{\mathsf{PSPACE}}
\def\NL{\mathsf{NL}}
\newcommand{\problemx}[3]{
\par\noindent\underline{\sc#1}\par\nobreak\vskip.2\baselineskip
\begingroup\clubpenalty10000\widowpenalty10000
\setbox0\hbox{\bf INPUT:\ \ }\setbox1\hbox{\bf QUESTION:\ \ }
\dimen0=\wd0\ifnum\wd1>\dimen0\dimen0=\wd1\fi
\vskip-\parskip\noindent
\hbox to\dimen0{\box0\hfil}\hangindent\dimen0\hangafter1\ignorespaces#2\par
\vskip-\parskip\noindent
\hbox to\dimen0{\box1\hfil}\hangindent\dimen0\hangafter1\ignorespaces#3\par
\endgroup}
\newcommand{\N}{\mathbb{N}}
\newcommand{\Nat}{\mathbb{N}}
\newcommand{\Z}{\mathbb{Z}}
\newcommand{\A}{\mathcal{A}}
\newcommand{\B}{\mathcal{B}}
\newcommand{\J}{\mathcal{J}}
\newcommand{\T}{\mathcal{T}}
\newcommand{\modstate}{\mathsf{Mod}}
\newcommand{\elzc}{\textsc{ZE}}
\newcommand{\trans}[1]{\stackrel{#1}{\longrightarrow}}
\newcommand{\gt}[1]{\trans{#1}}
\newcommand{\sgn}{\text{sgn}}
\newcommand{\poly}{\textsc{poly}}
\newcommand{\per}{\mathsf{per}}
\newcommand{\goto}{\mathsf{goto}}
\newcommand{\lev}{\mathsf{EqL}}
\newcommand{\eqlevel}{\mathsf{EqL}}
\newcommand{\il}{\mathsf{IL}}
\newcommand{\depictlev}[1]{\stackrel{#1}{\longleftrightarrow}}
\newcommand{\base}{\mathsf{base}}
\newcommand{\act}{\Sigma}
\newcommand{\docatrace}{\textsf{Doca-Eq}}
\newcommand{\shc}{\mathsf{SC}}
\newcommand{\stable}{\mathsf{St}}
\newcommand{\unstable}{\varepsilon}
\newcommand{\reset}{\mathsf{Res}}
\newcommand{\sink}{\mathsf{sink}}
\newcommand{\modp}{\mathsf{Mod}}
\newcommand{\trule}[1]{\stackrel{#1}{\mapsto}}
\newcommand{\fixres}{\mathsf{FixRes}}
\newcommand{\LTSext}{\T_{\mathsf{ext}}}
\newcommand{\effect}{\mathsf{effect}}
\begin{document}

\title{Equivalence of Deterministic One-Counter Automata is NL-complete}
\author{
Stanislav B\"ohm,
Stefan G\"oller,
Petr Jan\v{c}ar
}
\date{}

\maketitle
\begin{abstract}
\noindent
We prove that language equivalence of deterministic one-counter automata is 
NL-complete. This improves the superpolynomial time complexity
upper bound 
shown by Valiant and Paterson in 1975.
Our main contribution is to prove that two deterministic one-counter 
automata are inequivalent if and only if they can be distinguished by a word of length polynomial in the size of the two input automata.

\end{abstract}

\section{Introduction}\label{s:intro}

In theoretical computer science, one of the most
fundamental decision problems is the {\em equivalence problem} which asks
whether two given machines behave equivalently.
Among the various models of computation -- such as Turing machines, random access machines
and loop programs, just to mention a few of them -- the equivalence problem already becomes undecidable when one imposes
strong restrictions on their time and space consumption.

Emerging from formal language theory, a classical model of computation is that of pushdown automata.
A folklore result is that already universality (and hence equivalence) of pushdown automata is undecidable.
Concerning {\em deterministic pushdown automata (dpda)}, it is fair to say that
the computer science community knows very little about the equivalence problem and its complexity.

Oyamaguchi proved that the equivalence problem for 
real-time dpda (dpda without $\varepsilon$-transitions) is decidable \cite{Oyam87}.
It took significant further innovation to show the decidability
for general dpda, which is the
celebrated result by S{\'{e}}nizergues \cite{Seni01}, see also \cite{Seni02}. 
A couple of years later, Stirling showed that dpda equivalence is in fact primitive
recursive \cite{Stir02}, and his bound is still the best known upper bound for this problem.
Probably due to its intricacy, this fundamental problem has not attracted too
much research in the past ten years;
only recently a simplified proof has been announced \cite{Jan12}, with 
no substantial improvement of the complexity bound.

It is burdensome to realize that
for equivalence of dpda there is still an enormous 
complexity gap, where the mentioned 
upper bound is far from 
the best known lower bound, i.e. from
$\mathsf{P}$-hardness (which straightforwardly follows from
$\mathsf{P}$-hardness of the emptiness problem).

The same complexity gap persists even for real-time dpda.
Thus, further subclasses of dpda have been studied.
A $\mathsf{coNP}$ upper bound is known \cite{Seni03} for {\em finite-turn dpda} which
are dpda where the number of switches between pushing and popping phases is bounded.
For {\em simple dpda} (which are single state and real-time dpda),
equivalence is even decidable in polynomial time \cite{HiJeMo96}
(see~\cite{ConLasota10} for the currently best known upper bound).

{\em Deterministic one-counter automata (doca)} are one of
the simplest infinite-state computational models, extending deterministic finite
automata just with one nonnegative integer counter;
doca are thus dpda over a singleton stack
alphabet plus a bottom stack symbol.
Doca were first studied by Valiant and
Paterson in 1975 \cite{VP75}; they showed
that equivalence
is decidable in time $2^{O(\sqrt{n\log n})}$, and
a simple analysis of their proof reveals that the equivalence problem is in $\PSPACE$.
The problem is easily shown to be  $\mathsf{NL}$-hard, there is
however
an exponential gap 
between $\mathsf{NL}$ and $\mathsf{PSPACE}$.
There were attempts to settle the complexity 
of the doca equivalence problem (later we mention some) but 
the problem proved intricate; only recently 
$\mathsf{NL}$-completeness was established 
for real-time one-counter automata~\cite{BG11} but it was far from clear
if and how the proof can be extended to the general case.

Let us mention that a convenient and equi-succinct way to present a doca 
is to partition the control states (and thus the configurations)
into \emph{stable states}, in which the automaton waits for a letter to be read,
and into \emph{reset states}, in which the counter is reset
to zero and the residue class of the current counter value modulo some specified number
 determines the successor (stable) state.
Technically speaking, the difference between deterministic one-counter automata and 
their real-time variant is the lack of reset states in the real-time case.
The presence of reset states substantially increases the difficulty of the
equivalence problem.

One reason seems to be that a doca can exhibit a behaviour with
exponential periodicity, demonstrated by the following example 
(which slightly adapts the version from~\cite{VP75}).
We take a family
$(\A_n)_{n\geq 1}$ where $\A_n$
is a doca 
accepting the regular language 
$L_n=\{a^mb_i\mid 1\leq i\leq n, m\equiv 0\ (\text{mod }p_i)\}$, where $p_i$ 
denotes the $i^{\text{th}}$ prime number.
The index of the Myhill-Nerode congruence of $L_n$ is obviously
$2^{\Omega(n)}$ but we can easily construct $\A_n$
with $O(n^2\log n)$ states.
The example also demonstrates that doca are 
exponentially more succint than
their real-time variant, since one can prove that
real-time deterministic one-counter automata accepting $L_n$ 
have $2^{\Omega(n)}$ states. 
It is also easy to show that 
doca are strictly more expressive than their
real-time variant.
Analogous expressiveness and succinctness results hold for dpda and real-time
dpda, respectively.

As mentioned above, this increase in difficulty in the presence of $\varepsilon$-transitions
is confirmed by the fact that it took more than a decade to lift the decidability of real-time dpda \cite{Oyam87} to the general case \cite{Seni01,Seni02}.

{\bf Our contribution and overview.} 
The main result of this paper is that equivalence
of doca is $\mathsf{NL}$-complete,
thus closing the exponential complexity gap that has been existing for 
over thirty-five years ever since doca were introduced.

The above-mentioned exponential behavior of doca is 
reflected in our central notion of 
\emph{extended deterministic transition system} $\LTSext(\A)$ that is attached to each doca $\A$.
This system
includes a special finite deterministic transition system
which might be exponentially large in the size of
$\A$
and which corresponds to the \emph{special-mode} variant of 
stable configurations. Roughly speaking, in the special mode
we do not count with reaching the zero value in the counter 
unless a reset state is visited, and each reset-state visit 
finishes the special mode. 
Hence the special mode assumes that the counter is positive
and it only requires to remember 
finite information which is sufficient to perform the resets
correctly; in more detail, only the current control state
and the current residue classes of the counter value
w.r.t. the numbers associated with reset states are needed.

For understanding the shortest words distinguishing 
two stable inequivalent configurations of $\A$, it turns out useful 
to include also the special-mode variants of the configurations
in the study.
This allows us to show that \emph{shortest distinguishing
words} for two zero configurations have \emph{polynomial length}.

In Section~\ref{s:definitions}  we introduce basic definitions and state our main
result that equivalence of doca is $\mathsf{NL}$-complete.
A proof of the central claim on polynomial length
is given 
in Section 3 which is in turn divided into the following parts.
We give a brief overview of shortest positive paths in the transition system of
a doca in Section \ref{sub:pospaths}; this is the only part which is
derived directly from~\cite{VP75}.
In Section \ref{sub:extdoca} 
we introduce the above mentioned central notion
$\LTSext(\A)$,
and we make a straightforward analysis of some useful related notions
in Sections~\ref{sub:elzc}--\ref{sub:eq-lines}.
In particular, in Section \ref{sub:il} we study 
the \emph{independence level} of a configuration,
as the length of a shortest distinguishing word for the configuration and
its special-mode variant.
This allows us to make various useful observations, 
e.g. about \emph{linear relations} between counter values 
of configurations with the same independence level 
in Section~\ref{sub:eq-lines}.

Sections \ref{sub:line-climbing} and \ref{subsec:gapssmall} contain
the main argument. Sections \ref{sub:line-climbing} shows that when 
following a shortest distinguishing word for two zero configurations,
we cannot get a long \emph{line-climbing} segment
in which the counter values grow at both
sides, keeping a linear relation entailed by keeping the same
independence levels. 
Section~\ref{subsec:gapssmall} then shows that 
a shortest distinguishing word
for two zero configurations
cannot be long without having a long line-climbing segment. 

In Section~\ref{S App} we add a remark on the \emph{regularity
problem}. In Appendix we sketch the standard ideas of showing that 
the deterministic one-counter automata as introduced in \cite{VP75}
and the above-mentioned reset model that we work
with are equi-succinct. We also make clear that 
our simple form of
language equivalence, called \emph{trace equivalence}, 
does not bring any loss of
generality.

{\bf Related work.}
As mentioned above, doca were introduced by Valiant and
Paterson in \cite{VP75}, where
the above-mentioned 
$2^{O(\sqrt{n\log n})}$ time upper bound
for language equivalence was proven.
Polynomial time algorithms for language equivalence and
inclusion for strict subclasses of
doca were provided in \cite{HWT95,HWT98}.
In \cite{BeRo87,FRo95} polynomial time learning algorithms were presented for doca.
Simulation and bisimulation problems on one-counter automata
were studied in \cite{BGJ10,JKM00,JMS99,Mayr03}.
In recent years one-counter automata have attracted a lot of attention 
in the context of formal verification
\cite{HKOW-09concur,GoLo10,GHOW10,GoMaTo09}.

\smallskip
\noindent
\emph{Remark}:
In \cite{BeRo87,Roos88} it is stated that equivalence of doca can 
be decided in polynomial time. 
Unfortunately, 
the proofs provided in \cite{BeRo87,Roos88} 
were not exact enough to be verified,
and they raise several questions which are unanswered to date.

\section{Definitions and results}\label{s:definitions}

By $\N$ we denote the set  $\{0,1,2,\ldots\}$
of non-negative integers, and by $\Z$
the set of all integers.
For a finite set $X$, by $|X|$ we denote its cardinality.

By $\Sigma^*$  we denote the set of finite sequences of
elements of $\Sigma$, i.e. of \emph{words} over $\Sigma$.
For $w\in\Sigma^*$, $|w|$ denotes the length of $w$.
By $\varepsilon$ we denote
the empty word; hence $|\varepsilon|=0$.
If $w=uv$ then $u$ is a \emph{prefix} of $w$ and $v$ is a
\emph{suffix} of $w$.

By $\div$ we denote integer division; for $m,n\in\N$
where $n>0$ we have $m=(m\div n)\cdot n + (m\bmod n)$. 
We use \,``$\bmod$'' in two ways, clarified by the following example: 
$3 = 18 \bmod 5$, $8\neq 18\bmod 5$, $3\equiv 18\, (\bmod\, 5)$, 
$8\equiv 18\, (\bmod\, 5)$.
For $m\in\Z$, $|m|$ denotes the absolute value of $m$.

We use $\omega$ to stand for infinity;
we stipulate
$z<\omega$ and $\omega+z=z+\omega=\omega$ for all $z\in\Z$.

\medskip

A {\em deterministic labelled transition system}, a \emph{det-LTS} 
for short,
is a tuple
\begin{center}
$\T=(S_\stable, S_\unstable,\Sigma,
(\trule{a})_{a\in\Sigma},\trule{\varepsilon})$
\end{center}
where
$S_\stable$ and  $S_\unstable$ are 
(maybe infinite) disjoint sets of {\em stable states} and 
{\em unstable states}, respectively,
$\Sigma$ is
a finite \emph{alphabet},
$\trule{a}\subseteq S_\stable\times (S_\stable\cup S_\unstable)$,
for $a\in\Sigma$, and 
$\trule{\varepsilon}\subseteq 
S_\unstable\times S_\stable$ are sets of 
\emph{labelled transitions}; 
for each $s\in S_\unstable$ there is precisely one 
$t\in S_\stable$ such that $s\trule{\varepsilon}t$, whereas
for any $s\in S_\stable$ and $a\in\Sigma$ there is at most 
one $t\in S_\stable\cup S_\unstable$
such that $s\trule{a}t$.
 For all $w\in\Sigma^*$, we define relations 
$\trans{w}\subseteq S\times S$,
where $S = S_\stable\cup S_\unstable$,
inductively:
$s\trans{\varepsilon}s$ for each $s\in S$; 
if $s\trule{\varepsilon}t$ then $s\gt{\varepsilon}t$;
if $s\trule{a}t$ ($a\in\Sigma$) then 
$s\gt{a}t$;
if $s\gt{u}s'$ and $s'\gt{v}t$ 
($u,v\in\Sigma^*$) then $s\gt{uv}t$.

By $s\trans{w}$ we denote that $w$ is \emph{enabled
in} $s$, 
i.e. $s\trans{w}t$ for some $t$.

\medskip

Given  $\T=(S_\stable, S_\unstable,\Sigma,
(\trule{a})_{a\in\Sigma},\trule{\varepsilon})$,
\emph{trace equivalence} $\sim$ on $S=S_\stable\cup S_\unstable$ 
is defined as follows:
\begin{center}
$s\sim t\quad$ if
$\quad\forall w\in\act^*: s\gt{w} \,\Leftrightarrow\, t\gt{w}$.
\end{center}
Hence two states are equivalent iff they enable the same set of words
(also called traces).
A word $w\in\Sigma^*$ is a \emph{non-equivalence witness}  \emph{for} $(s,t)$,
a \emph{witness for $(s,t)$} for short,  
if $w$ is enabled in precisely one of $s,t$. 

\medskip

\emph{Remark.}
By the above definitions,
$s\trule{\varepsilon}t$ implies $s\sim t$.
This could suggest merging the states $s$ and $t$
but
we keep them separate since this is convenient
in the definitions of det-LTSs generated by
deterministic one-counter automata, as given below.

\medskip

We put 
$\act^{\leq i}=\{w\in\act^*;\, |w|\leq i\}$,
and 
we note that $\sim\,=\bigcap\,\{\sim_i\mid i\in\Nat\}$ where 
the equivalences
$\sim_0\,\supseteq\, \sim_1\,\supseteq\,\sim_2\,\supseteq\dots$ are
defined as follows:
\begin{center}
$s\sim_i t\quad$ if 
$\quad\forall w\in\act^{\leq i}: s\gt{w} \,\Leftrightarrow\, t\gt{w}$.
\end{center}
Each pair of states $(s,t)$ has the
\emph{equivalence level}, 
the \emph{eqlevel} for short,
$\eqlevel(s,t)\in\N\cup\{\omega\}$:

\begin{center}
$\eqlevel(s,t)=\begin{cases}
\omega & \text{ if $s\sim t$},\\
\max\{j\in\N\mid s\sim_j t\} & \text{ otherwise.}
\end{cases}
$
\end{center}
We also write  $s\depictlev{e}t$ instead of
$\eqlevel(s,t)=e$ (where $e\in\N\cup\{\omega\}$).
We note that the length of any \emph{shortest} witness for $(s,t)$,
where $s\not\sim t$, is  $\lev(s,t)+1$.
We also highlight the next simple fact (valid since 
our LTSs are \emph{deterministic}).

\begin{observation}\label{obs:eqleveldrop}
Suppose $s\gt{w}s'$ and $t\gt{w}t'$ in a given det-LTS. Then we have:
\begin{enumerate}
\item
$\lev(s',t')\geq \lev(s,t)-|w|$. (Hence $s'\sim t'$ if $s\sim t$.)
\item
If $w$ is a (proper) prefix of a witness for $(s,t)$ then 
$\lev(s',t')= \lev(s,t)-|w|$.
\end{enumerate}
\end{observation}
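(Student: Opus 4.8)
The plan is to exploit the only structural feature invoked in the statement, determinism. Two facts follow from it: from a stable state, following a word $w\in\Sigma^*$ reaches at most one stable state; and an $\varepsilon$-step from an unstable state to its unique stable successor does not change which words are subsequently enabled. Together these let us cancel a common prefix: with $s\gt{w}s'$ and $t\gt{w}t'$ as given, for every $u\in\Sigma^*$ we have $s\gt{wu}$ iff $s'\gt{u}$, and likewise $t\gt{wu}$ iff $t'\gt{u}$. Consequently $u$ is a witness for $(s',t')$ if and only if $wu$ is a witness for $(s,t)$. This correspondence, together with the recorded fact that a shortest witness for an inequivalent pair $(p,q)$ has length exactly $\lev(p,q)+1$, is the engine for both parts.

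For part 1 I would argue directly from the definition $\lev(s,t)=\max\{j\mid s\sim_j t\}$. Put $e=\lev(s,t)$ and $k=|w|$; I claim $s'\sim_{e-k}t'$ (trivially true if $e<k$). Suppose not: then there is a witness $u$ for $(s',t')$ with $|u|\le e-k$, so by the prefix correspondence $wu$ is a witness for $(s,t)$ with $|wu|\le e$. But the shortest witness for $(s,t)$ has length $e+1$, a contradiction. Hence $s'\sim_{e-k}t'$, i.e.\ $\lev(s',t')\ge e-k$, which is exactly the claimed inequality. The parenthetical assertion is the borderline case $e=\omega$: if $s\sim t$ then no witness for $(s,t)$ exists, so by the correspondence none exists for $(s',t')$ either, giving $s'\sim t'$.

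For part 2 I would pin down the hypothesis as: $w$ is a prefix of a \emph{shortest} witness for $(s,t)$ (this is the reading that makes the equality correct, and it is the only way the assumption on $w$ enters). Since $s\gt{w}s'$ and $t\gt{w}t'$ are both defined, $w$ is enabled in both $s$ and $t$, hence $w$ is not itself a witness; a shortest witness therefore factors as $w'=wv$ with $|w'|=e+1$ and $v\neq\varepsilon$ (so $w$ is automatically a proper prefix). By the prefix correspondence $v$ is a witness for $(s',t')$, whence $\lev(s',t')\le |v|-1=(e+1-k)-1=e-k$. Combined with the lower bound from part 1 this yields $\lev(s',t')=e-k=\lev(s,t)-|w|$.

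The only delicate point is the one just flagged: the inequality in part 1 can be strict, so equality is genuinely a statement about shortest witnesses. If $w$ merely begins \emph{some} witness that is not shortest, then following $w$ may land on a pair whose eqlevel strictly exceeds $\lev(s,t)-|w|$, because a shorter distinguishing word for $(s,t)$ can avoid the letters of $w$ altogether. Thus the heart of part 2 is that a step decreases the eqlevel by exactly one precisely when it lies on a shortest witness; the determinism-based cancellation of the prefix $w$ is what turns ``$w$ lies on a shortest witness'' into the exact drop, and no finer counting is required.
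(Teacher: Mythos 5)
Your proof is correct and is essentially the argument the paper intends: the paper states this observation without proof as a ``simple fact,'' and your route --- determinism-based cancellation of the common prefix $w$ (so that $u$ witnesses $(s',t')$ iff $wu$ witnesses $(s,t)$) combined with the recorded fact that a shortest witness has length $\lev(s,t)+1$ --- is the standard one. Your clarification that part 2 must be read with $w$ a prefix of a \emph{shortest} witness is accurate and matches the paper's own later uses of the observation (e.g., in the proof of Proposition~\ref{prop:auxpmmodpm}(4), where it is applied to a shortest witness $av$), since, as you note, a prefix of a non-shortest witness can land on a pair whose eqlevel strictly exceeds $\lev(s,t)-|w|$.
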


\medskip

 A {\em deterministic one-counter automaton},
 a \emph{doca} for short,
 is a tuple
\begin{center} 
$\A=(Q_\stable,Q_\reset,\Sigma,\delta,(\per_s)_{s\in Q_\reset},
(\goto_s)_{s\in Q_\reset})$
\end{center}
where 
$Q_{\stable}$
and $Q_{\reset}$
are disjoint finite sets of 
 \emph{stable control states} and
\emph{reset control states}, respectively,
$\Sigma$ is a finite \emph{alphabet}, 
$\delta\subseteq Q_\stable\times \Sigma
\times\{0,1\}\times (Q_\stable\cup Q_\reset)\times \{-1,0,1\}$ is
a set of \emph{(transition) rules},
 $\per_s\in\N$ are \emph{periods} satisfying
 $1\leq\per_s\leq |Q_\stable|$,
and $\goto_s:\{0,1,2,\dots,\per_s{-}1\}\rightarrow Q_{\stable}$ are
\emph{reset mappings}.   
For each 
$p\in Q_\stable$, $a\in\Sigma$, $c\in\{0,1\}$ there is at most one
pair $(q,j)$
(where $q\in Q_\stable\cup Q_\reset$, $j\in\{{-}1,0,1\}$) such
that $(p,a,c,q,j)\in\delta$; moreover, if $c=0$ then $j\neq {-}1$.
The tuples $(p,a,0,q,j)\in\delta$ are called the \emph{zero rules},
the tuples  $(p,a,1,q,j)\in\delta$ are the \emph{positive rules}.

\medskip

A doca $\A=(Q_\stable,Q_\reset,\Sigma,\delta,(\per_s)_{s\in Q_\reset},
(\goto_s)_{s\in Q_\reset})$
defines the det-LTS 
\begin{equation}\label{eq:TA}
\T(\A)=(Q_\stable\times\N,Q_\reset\times\N,\Sigma,
(\trule{a})_{a\in\Sigma},\trule{\varepsilon})
\end{equation}
where  $\trule{a}$ and $\trule{\varepsilon}$ are 
defined by the following (deduction) rules.
\begin{enumerate}
\item
If $(p,a,1,q,j)\in\delta$ and $n>0$ then 
$(p,n)\trule{a}(q,n{+}j)$.
\item
If $(p,a,0,q,j)\in\delta$ then 
$(p,0)\trule{a}(q,j)$.
(Recall that $j\in\{0,1\}$ in this case.)
\item
If $s\in Q_\reset$ and $n\geq 0$
then $(s,n)\trule{\varepsilon}(q,0)$ where 
$q=\goto_s(n\bmod \per_s)$. 
\end{enumerate}
An example of a doca with the respective det-LTS 
is sketched in Fig.~\ref{fig:doca-example}.

By a \emph{configuration} $C$ of the  doca $\A$ we mean $(p,m)$, 
usually written as $p(m)$, where $p$ is its
control state and $m\in\N$ 
is its \emph{counter value}.
If $C=p(0)$ then it is a \emph{zero configuration}. 
If $p\in Q_\stable$ then $C=p(m)$ is a
\emph{stable configuration}; if $p\in Q_\reset$ then $p(m)$ is 
a \emph{reset configuration}.

\begin{figure}[ht]
\centering
\epsfig{scale=0.25,file=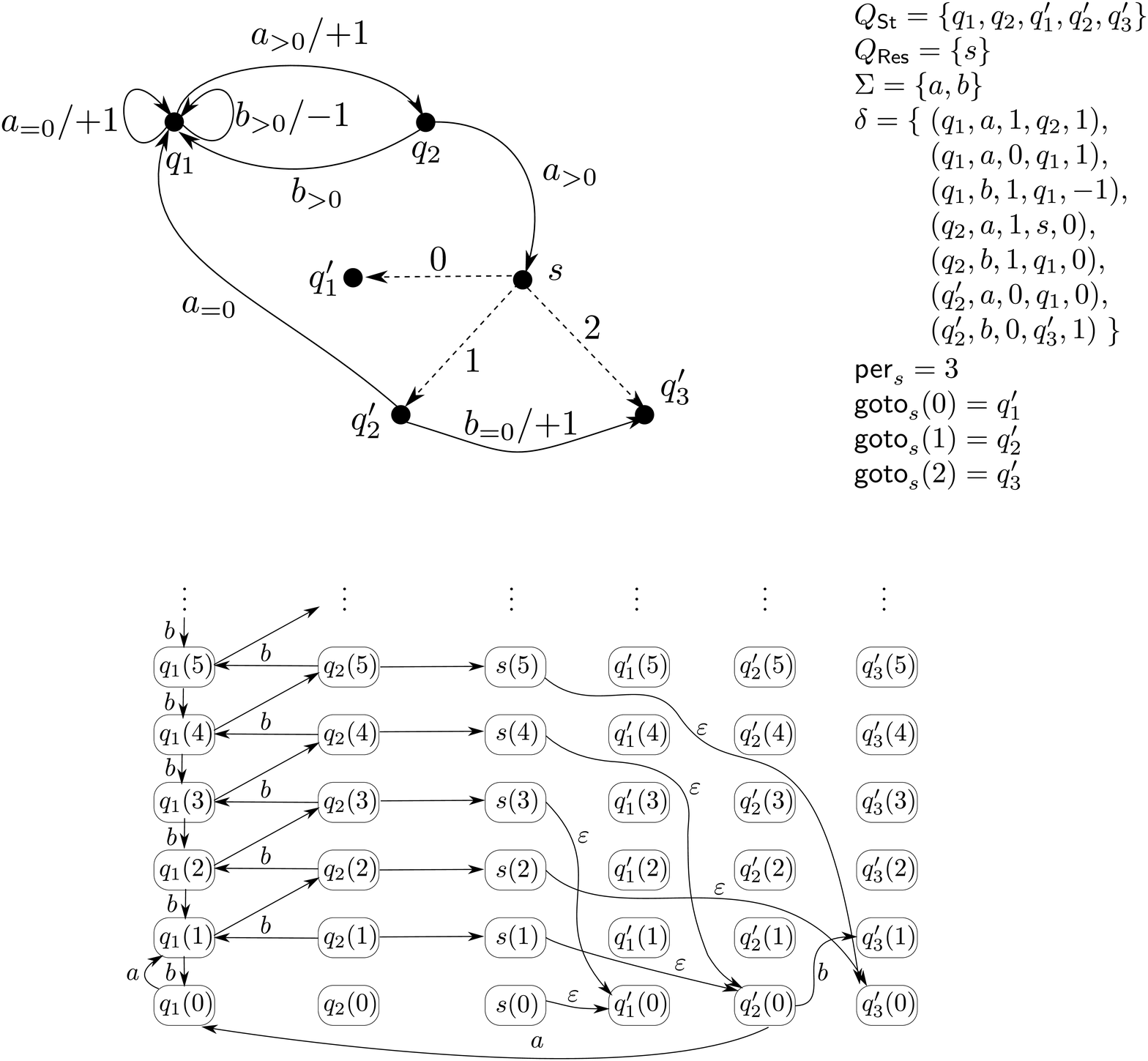}
\caption{
A doca $\A$, presented by a graph, and a fragment of $\T(\A)$}\label{fig:doca-example}
\end{figure}

The definition of (general) det-LTSs induces
the relations $\gt{w}$ ($w\in\Sigma^*$) on 
$Q\times\N$ where $Q=Q_\stable\cup Q_\reset$.
We are interested in the  
\emph{doca equivalence problem},
denoted

\medskip

$\docatrace$:
\begin{quote}
\emph{Instance}: A doca $\A$
and two stable zero
configurations 
$p(0), q(0)$.

\emph{Question}: Is $p(0)\sim q(0)$ in $\T(\A)$~?
\end{quote}

Our main aim is to show the following theorem.

\begin{theorem}\label{th:polywitness}
There is a polynomial $\poly:\N\rightarrow\N$ 
such that for any 
$\docatrace$ instance $\A, p(0),q(0)$ where
$\A$ has $k$ control states
we have that
$p(0)\not\sim q(0)$ implies
$\lev(p(0),q(0))\leq \poly(k)$. 
\end{theorem}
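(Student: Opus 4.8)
The plan is to bound the equivalence level $\lev(p(0),q(0))$ of two inequivalent stable zero configurations by a polynomial in the number of control states $k$. The overall strategy is to follow a shortest witness $w$ for $(p(0),q(0))$ and track the pair of configurations $(C_i,D_i)$ reached after reading the prefix of length $i$. By Observation~\ref{obs:eqleveldrop}, along a shortest witness the eqlevel strictly decreases by exactly one at each step, so the number of steps equals $\lev(p(0),q(0))+1$; my goal is therefore to show this run cannot be too long. The key structural tool will be the extended system $\LTSext(\A)$ and the special-mode variants of configurations promised in the introduction: the idea is that as long as neither side is near the counter value zero (and no reset is triggered), the behaviour is governed only by finite residue information, so the ``genuinely infinite-state'' behaviour is confined to where the counter is small or a reset fires.

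First I would set up the bookkeeping: at each position $i$ record the control states and counter values on both sides, together with their residues modulo the relevant reset periods $\per_s$. I would classify the positions of the run into two regimes. In the \emph{line-climbing} regime, both counter values are large and growing while maintaining a fixed linear relation forced by equality of independence levels (the notion studied in Sections~\ref{sub:il}--\ref{sub:eq-lines}); here I would invoke the result of Section~\ref{sub:line-climbing} that no such segment can be long, because a long line-climbing segment would let one construct a shorter distinguishing word, contradicting minimality of $w$. In the complementary regime the counters stay small or a reset is crossed, and I would argue that the number of distinct ``small'' or reset-adjacent configuration-pairs is polynomially bounded (using that periods are at most $|Q_\stable|$ and the residue-tuples live in a space whose size is controlled), so by determinism these cannot recur along a shortest witness without creating a loop that could be excised.

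The decomposition itself is the heart of the argument, and it is essentially the content flagged for Section~\ref{subsec:gapssmall}: I would show that a shortest distinguishing word cannot be long \emph{without} containing a long line-climbing segment. Concretely, between consecutive ``small-counter or reset'' events the run must either stay short or enter a line-climbing segment; bounding both the number of such events (polynomially) and the length of each line-climbing segment (polynomially, via Section~\ref{sub:line-climbing}) and the short stretches between them yields the overall polynomial bound. To glue these pieces I would rely on the shortest-positive-path estimates of Section~\ref{sub:pospaths} derived from~\cite{VP75}, which give polynomial bounds on how quickly the counter can be driven between prescribed values, thereby controlling the lengths of the intervening stretches.

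The main obstacle, as I see it, is controlling the interaction between the two sides when their counter values differ but remain linked by an induced linear relation, and ensuring that crossing a reset does not reset the whole analysis in an uncontrolled way. The delicate point is proving that a long line-climbing segment genuinely contradicts minimality: one must exhibit an alternative, strictly shorter witness, which requires understanding precisely how the linear relation between the two counters is preserved and how it can be short-circuited by jumping ahead in the period. I expect this pumping-style argument — showing that a repeated residue-configuration inside a line-climbing segment lets one cut out a chunk while preserving inequivalence — to be where the real work lies, and it is exactly why the special-mode variants in $\LTSext(\A)$ are introduced: they make the relevant state space finite and make the recurrence (and hence the excisability) visible.
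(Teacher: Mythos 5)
Your high-level decomposition (line-climbing segments are short; a long shortest witness forces a long line-climbing segment) matches the paper's plan as advertised in its overview, but the two load-bearing steps you sketch fail as stated. First, the excision argument. Along a shortest witness the eqlevel drops by exactly one per step (Observation~\ref{obs:eqleveldrop}(2)), so no configuration-pair ever repeats, and cutting a repeated residue-configuration or control-state cycle out of the synchronized run lands you at a \emph{different} pair whose eqlevel is strictly \emph{higher} (Observation~\ref{prop:shortestsubpath}(2)); the suffix of $w$ is then in general not a witness there, so ``cut out a chunk while preserving inequivalence'' does not go through. The paper's Prop.~\ref{prop:climbshort} does not excise at all: it reduces the climbing path to a small auxiliary doca $\B$ over triples $(p,q,L)$, applies Lemma~\ref{lem:valiantpositpath} there, and then bounds the resulting sequence of \emph{exit pairs} obtained by cutting $0,1,2,\dots$ copies of the cycle --- a strange periodic sequence (Def.~\ref{def:strangeseq}) --- via a counting argument over eqlevel tuples, using that the minimum in each eqlevel triangle/rectangle is attained at least twice (Cor.~\ref{prop:cormineqlevel}) together with the fewness facts Props.~\ref{prop:ideafewc}, \ref{prop:ideafewcfix} and~\ref{prop:conseqilpm}. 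Nothing playing this role appears in your sketch, and it is precisely the ``real work'' you defer.

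Second, your bound on the complementary regime rests on the claim that the residue-tuples ``live in a space whose size is controlled.'' That is false: $Q_\modp$ has cardinality $\prod_{s\in Q_\reset}\per_s$, which can be $2^{\Omega(k)}$ (the prime-periods example in the introduction), and the paper explicitly notes this. The paper's substitute is Lemma~\ref{prop:fewc}: although $Q_\modp$ is exponential, the set $\elzc$ of \emph{eqlevels} between zero configurations is small. Relatedly, you have no mechanism for one-side resets: after such a step the pair is $(p'(m'),q'(0))$ with $m'$ possibly huge, so it is neither ``small-counter'' nor covered by any polynomial count of pairs, and a direct decomposition of the single run from $(p(0),q(0))$ has no way to restart the analysis there --- you name this obstacle but do not resolve it. The paper resolves it by an induction you are missing entirely: order $\elzc$ as $e_0<e_1<\dots<e_f$, note $f$ is small (Lemma~\ref{prop:fewc}), and prove every gap $e_{i+1}-e_i$ is small (Lemma~\ref{prop:bnotminshort}). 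One then only needs to control an eqlevel-decreasing path until its $b$-value meets one of $\ell,r,o$, each of which is tethered to $\elzc$ (since $o_j\in\elzc\cup\{\omega\}$ and $\il$ has the small linear form of Prop.~\ref{prop:indlevel}); this is where the b-solitary prefix and the usual/unusual segment analysis live, and Theorem~\ref{th:strongpolywitness} follows from few elements times small gaps. Without this $\elzc$-gap induction, your gluing step does not yield a polynomial bound.
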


Using Theorem~\ref{th:polywitness}, we easily get the next theorem.
\begin{theorem}\label{th:nlcomplete}
$\docatrace$ 
is $\NL$-complete.
\end{theorem}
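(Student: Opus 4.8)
The plan is to derive Theorem~\ref{th:nlcomplete} from Theorem~\ref{th:polywitness} by exhibiting matching $\NL$ upper and lower bounds. The lower bound is the easy direction: $\NL$-hardness of $\docatrace$ is already asserted in the introduction (and is standard), so I would reduce from a canonical $\NL$-complete problem such as directed graph (non-)reachability, encoding reachability into the enabledness of suitable words in a fixed-counter doca, noting the reduction is computable in logarithmic space. I will state this briefly and focus the argument on membership in $\NL$.

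For the upper bound I would first recall that $\NL$ is closed under complement (Immerman--Szelepcs\'enyi), so it suffices to decide the complementary problem $p(0)\not\sim q(0)$ in nondeterministic logarithmic space. By Theorem~\ref{th:polywitness}, if $p(0)\not\sim q(0)$ then there is a shortest witness word $w$ with $|w|=\lev(p(0),q(0))+1\leq \poly(k)+1$, where $k$ is the number of control states. The nondeterministic algorithm guesses such a witness $w$ one letter at a time and verifies, using Observation~\ref{obs:eqleveldrop} implicitly, that $w$ is enabled in exactly one of the two configurations. The crucial point is that we never store $w$ in full; we maintain only the two current configurations $C=p'(m)$ and $D=q'(n)$ reached by the guessed prefix, together with a step counter bounded by $\poly(k)+1$.

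The main obstacle is the space used to represent configurations, since the counter values $m,n$ reached along a witness of polynomial length could a priori be exponential, and hence not storable in logarithmic space. Here I would use the structure of a doca: each letter changes a counter by at most $1$, and the only way to produce large values along a path of length $\ell\leq\poly(k)+1$ is by repeated increments, so $m,n\leq \ell\leq \poly(k)+1$; in binary this is $O(\log k)$ bits, which fits in logarithmic space. The step counter, the two control states (each drawn from the $k$ states), and the two binary counter values thus occupy $O(\log k)$ space in total. At each step the algorithm consults the input description of $\A$ to apply the appropriate zero rule, positive rule, or reset transition (computing $n\bmod\per_s$ and reading off $\goto_s$), all in logarithmic space; the reset case requires a mod computation but $\per_s\leq|Q_\stable|\leq k$ and the counter value is polynomially bounded, so this is routine. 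Finally the algorithm checks that the guessed word is enabled in precisely one of the two tracked configurations and accepts accordingly, completing the $\NL$ membership proof.
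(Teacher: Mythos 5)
Your proposal is correct and matches the paper's proof essentially step for step: the lower bound via digraph reachability, and the upper bound via a nondeterministic logspace algorithm that guesses a distinguishing word letter by letter, storing only the two current configurations (with counter values bounded by the polynomial witness length of Theorem~\ref{th:polywitness} and hence representable in binary in $O(\log k)$ bits), and then closing with $\NL=\text{co-}\NL$. Your added remarks on the reset-step arithmetic ($n \bmod \per_s$ with $\per_s \leq k$) just make explicit a routine detail the paper leaves implicit.
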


\begin{proof}
The lower bound follows easily from
$\NL$-hardness of digraph reachability.

On the other hand,
given a $\docatrace$ instance $\A, p(0),q(0)$, a nondeterministic
algorithm can perform the phases $j=0,1,2,\dots$ described as follows.
In phase $j$, there is a pair
$(p_j(m_j),q_j(n_j))$
in memory, the counter values $m_j, n_j$ written in binary;
for $j=0$ we have $(p_j(m_j),q_j(n_j))=(p(0),q(0))$.
If $\lev(p_j(m_j),q_j(n_j))>0$ then
a letter $a$ is nondeterministically chosen, and 
$(p_j(m_j),q_j(n_j))$ is replaced with 
$(p_{j+1}(m_{j+1}),q_{j+1}(n_{j+1}))$ where 
$p_j(m_j)\gt{a}p_{j+1}(m_{j+1})$
and $q_j(m_j)\gt{a}q_{j+1}(m_{j+1})$.

If $p(0)\not\sim q(0)$ then
Theorem~\ref{th:polywitness} guarantees that a pair 
$(p_j(m_j),q_j(n_j))$ with $\lev(p_j(m_j),q_j(n_j))=0$ can 
be thus reached 
by using only logarithmic space.

Hence  $\docatrace$ is in co-$\NL$. 
Since $\NL$$=$co-$\NL$, we are done.
\end{proof}

\section{Proof of Theorem 1}\label{s:theorem1}

\emph{Convention.}
When considering a doca $\A$, we will always tacitly assume 
the notation
\begin{equation}\label{eq:doca}
\A=(Q_\stable,Q_\reset,\Sigma,\delta,(\per_s)_{s\in Q_\reset},
(\goto_s)_{s\in Q_\reset})
\end{equation}
if not said otherwise. 
We also reserve $k$ for denoting the number of control states, i.e.
\begin{center}
$k=|Q_\stable|+|Q_\reset|$.
\end{center}
To be more concise
in the later reasoning concerning a given doca $\A$,
we use the words ``few'', ``small'', or ``short''
when we mean that the relevant quantity is bounded 
by a polynomial in $k$; 
the polynomial is always
independent of $\A$.
By a small rational number we mean $\rho=\frac{a}{b}$
or $\rho=-\frac{a}{b}$ where 
$a,b\in\N$ are small.
We also say that
\begin{center}
\emph{a set is small if its cardinality is a small number}.
\end{center}
We note that if all elements of a set 
$X$ of (integer or rational) numbers
are small
then $X$ is a small set; the opposite is not true in general.
 We often tacitly use the fact that 
\begin{center}
\emph{a quantity arising as the sum or the product 
of two small
quantities is also small}.
\end{center}
Though these expressions might look informal, they can be always
easily 
replaced by the formal statements which they abridge.
By this convention, 
Theorem~\ref{th:polywitness} says that the eqlevel of
any pair of zero configurations is small when finite.

\emph{Remark.}
It will be always obvious that we could calculate a concrete
respective polynomial whenever we use 
``few'', ``small'',
``short'' in our claims.
But such calculations would add tedious technicalities, 
and they would be not particularly rewarding w.r.t. the degree of
the polynomials. We thus prefer a transparent concise proof which
avoids technicalities whenever possible.

\subsection{Shortest positive paths in {\large {\bf$\T(\A)$}}}\label{sub:pospaths}

We first define the notion of paths in general det-LTSs, 
and then we look at special paths in $\T(\A)$, for a doca $\A$.

\begin{definition}\label{def:paths}
Given a det-LTS $\T=(S_\stable, S_\unstable,\Sigma,
(\trule{a})_{a\in\Sigma},\trule{\varepsilon})$,
a \emph{path} in $\T$ is a sequence 
\begin{center}
$s_0\gt{a_1}s_1\gt{a_2}\dots\gt{a_z}s_z$ ($z\in\N$)
\end{center}
 where $s_i\in S_\stable$ and 
$a_i\in\Sigma$
(for all $i, 0\leq i\leq z$);
it is a path \emph{from its start}
$s_0$ \emph{to its end} $s_z$.
For any $i_1,i_2$, where $0\leq i_1\leq i_2\leq z$,
the sequence $s_{i_1}\gt{a_{i_1+1}}s_{i_1+1}\gt{a_{i_1+2}}\cdots
\gt{a_{i_2}}s_{i_2}$ 
is a \emph{subpath} of the above path.
Slightly abusing notation, we will also use $s\gt{w}$ and
$s\gt{w}t$ ($s,t\in S_\stable$) to denote paths.

We also refer to $s\gt{a}t$ where $s,t\in S_\stable$ and $a\in \Sigma$
as to a \emph{step}. 
If $s\trule{a}t$ then it is  a \emph{simple step};
if $s\trule{a}s'\trule{\varepsilon}t$ then it is 
 a  \emph{combined step}. 
 The \emph{length of a path} $s_0\gt{a_1}s_1\gt{a_2}\dots\gt{a_z}s_z$
 is $z$, i.e.   the number of its steps.
\end{definition}

\medskip

When discussing the det-LTS $\T(\A)$ for a doca $\A$,
we use the term \emph{reset steps} instead of
combined steps. We now concentrate on positive paths in $\T(\A)$,
defined as follows.

\begin{definition}
Given a doca $\A$ (in notation~(\ref{eq:doca})),
a \emph{path} 
\begin{equation}\label{e:path}
p_0(m_0)\gt{a_1} p_1(m_1)\gt{a_2}\cdots \gt{a_z}p_z(m_z)
\end{equation}
in $\T(\A)$ is \emph{positive} if each step
$p_i(m_i)\gt{a_{i+1}}p_{i+1}(m_{i+1})$ ($0\leq i<z$)
is simple and is induced by a positive rule
$(p_i,a_{i+1},1,p_{i+1},j)\in \delta$ (where $j=m_{i+1}{-}m_i$).

The \emph{effect} (or the \emph{counter change}) 
\emph{of the path} (\ref{e:path}) is
$m_z{-}m_0$; if the path is positive, its effect is
an integer in the interval $[{-}z,z]$.
The path~(\ref{e:path}) is a \emph{control state cycle} if 
it is positive and we have $z>0$ and $p_z=p_0$.
\end{definition}
We note that if (\ref{e:path}) is positive then there is 
no reset step in the path and
$m_i>0$ for all $i, 0\leq i<z$; 
but we can have $m_z=0$.

The next lemma can be easily derived from Lemma $2$ in~\cite{VP75};
we thus only sketch the idea.
The claim of the lemma is illustrated
in Fig.~\ref{fig:valiant}.
\begin{figure}[ht]
\centering
\epsfig{scale=0.2,file=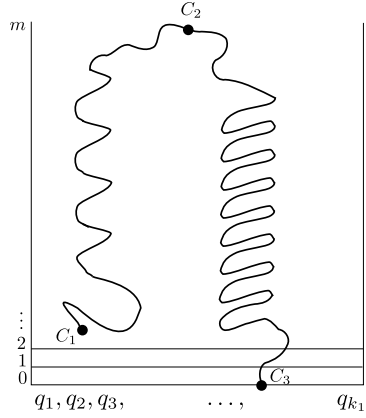}
\caption{Shortest positive paths in $\T(\A)$,
one from a configuration $C_1$ to $C_2$
and one from $C_2$ to a zero configuration $C_3$.
(Only the stable control states $q_1,q_2,\dots,q_{k_1}$ are depicted.)}\label{fig:valiant}
\end{figure}

\begin{lemma}\label{lem:valiantpositpath}
If there is a positive path 
from $p(m)$ to $q(n)$
in $\T(\A)$ then some of the shortest positive paths 
from $p(m)$ to $q(n)$
is of the form  
\begin{center}
$p(m)\gt{u_1}p'(m')\gt{v^i}p'(m'{+}id)\gt{u_2}q(n)$
\end{center}
where  $u_1$ is a short word, called the \emph{pre-phase}, 
$p'(m')\gt{v}p'(m'{+}d)$ is a short control state cycle with 
the
effect $d\in\Z$,
and $u_2$ is a short word, called the \emph{post-phase}. 
(The cycle $v$ is repeated $i$ times, where $i\geq 0$.)
\end{lemma}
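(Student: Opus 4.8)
\textbf{Proof plan for Lemma~\ref{lem:valiantpositpath}.}
The plan is to show that any shortest positive path can be ``normalized'' into the claimed three-part form (short pre-phase, repeated short control-state cycle, short post-phase) by exploiting the fact that a positive path is entirely determined by its sequence of positive rules, together with a pigeonhole argument on control states. First I would observe that along a positive path~(\ref{e:path}) no reset step occurs and the counter stays positive strictly before the last configuration, so each step is a simple step induced by a positive rule; hence the path is completely described by its word $a_1a_2\cdots a_z$ and the starting configuration $p(m)$, and the counter values are $m_i = m + \sum_{\ell\le i} j_\ell$ where $j_\ell\in\{-1,0,1\}$ is the counter change of the $\ell^{\text{th}}$ rule. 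The key point is that whether a given positive rule is enabled depends on the current configuration only through the condition $m_i>0$; the target control state and the counter change are fixed by the rule.

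Next I would apply the standard ``cut-and-paste'' argument. Consider a shortest positive path from $p(m)$ to $q(n)$ and look at the sequence of control states $p_0,p_1,\dots,p_z$. If $z$ exceeds $k=|Q_\stable|$, then by pigeonhole some control state repeats, say $p_i=p_j$ with $i<j$; the subpath from $p_i(m_i)$ to $p_j(m_j)$ is a control-state cycle with some effect $d\in\Z$. The main structural observation is that one can freely \emph{insert copies} of a positive-effect ($d\ge 0$) cycle and \emph{delete} copies of a negative-effect ($d\le 0$) cycle while preserving positivity, because inserting a nonnegative-effect cycle only raises the intermediate counter values (keeping them positive), and deleting a cycle of nonpositive effect likewise keeps them positive. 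This is exactly Valiant--Paterson's Lemma~2 idea: a shortest path uses at most one ``long repetition'', and that repetition is a single short cycle iterated. So I would argue that in a shortest path the pre-phase $u_1$ (before the repeated cycle) and the post-phase $u_2$ (after it) are each short, since any repeated control state inside a phase could be contracted (if its effect has the ``wrong'' sign for shortening) or would witness a cheaper path, contradicting minimality; and the single repeated cycle $v$ itself can be taken short because a control-state cycle with no internal control-state repetition has length at most $k$.

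The delicate part of the bookkeeping is ensuring that the rearranged path is still a \emph{valid} positive path, i.e.\ that every intermediate counter value remains strictly positive after we reorder and iterate. To handle this I would separate the effect-sign cases: choose the cycle $p'(m')\gt{v}p'(m'{+}d)$ to be the one whose iteration accounts for the growth, and verify that along the pre-phase, during the $i$ iterations, and along the post-phase, the counter never drops to zero prematurely—this follows by comparing the minimum counter value of the normalized path with that of the original shortest path, using that an extra iteration of a cycle shifts a suffix of the path uniformly up or down by $d$. I expect this positivity-preservation check, together with the precise quantitative claim that $u_1,v,u_2$ are all short (polynomial in $k$), to be the main obstacle; the cleanest route is to cite the structure established for shortest integer-counter paths in~\cite{VP75} and adapt it to the det-LTS $\T(\A)$, since the only difference here is the finite control-state set $Q_\stable$ replacing the single counter, which merely changes the pigeonhole bound from a numeric modulus to $k$. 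Once positivity is secured, matching the path to the displayed normal form is immediate.
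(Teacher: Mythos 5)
Your overall route --- pigeonhole on the $k$ control states to extract control-state cycles, a cut-and-paste normalization into pre-phase/iterated-cycle/post-phase, and a final deferral to Lemma~2 of \cite{VP75} for the quantitative bounds --- is the same as the paper's, which itself only sketches the argument and cites \cite{VP75} for the technical details. However, one of your central structural claims is wrong as stated, and it sits precisely where the real work hides. You assert that one can ``freely insert copies of a positive-effect cycle and delete copies of a negative-effect cycle while preserving positivity,'' and that any repeated control state inside a phase could be contracted, contradicting minimality. But the path must end in the \emph{fixed} configuration $q(n)$: deleting (or inserting) a single cycle of nonzero effect changes the total counter effect, so the modified path no longer reaches $q(n)$ at all. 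Taken literally, your contraction argument would show that a shortest positive path contains no cycle of nonzero effect, which is false --- when $|n-m|$ is large, the path is forced to iterate some cycle many times, and the pre- and post-phases of a genuinely shortest path may well contain repeated control states that cannot be excised in isolation. Only zero-effect cycles can be cut out unconditionally.

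The missing idea is \emph{compensation}: deletions and insertions must be paired so that the net effect of the path is preserved. This is exactly what the paper's sketch does in the case $|n-m|<k^2$ --- it removes $d_2$ copies of a cycle of effect $d_1>0$ together with $d_1$ copies of a cycle of effect $-d_2<0$, keeping the endpoint fixed while strictly shortening the path --- and, in the main case $m\geq n+k^2$, what the ``most effective cycle'' exchange of \cite{VP75} achieves: iterations of less efficient cycles are traded against iterations of the cycle maximizing the ratio $\frac{|\text{effect}|}{\text{length}}$, plus short transfer words, with the multiplicities chosen so that the total effect is unchanged. Your positivity bookkeeping (comparing minimum counter values, noting that removing a negative-effect cycle shifts the suffix of counter values upward) is fine as far as it goes, but positivity is the easier half; endpoint preservation is what forces the $d_1$/$d_2$ pairing and the associated divisibility bookkeeping, and your proposal never states it. Since you ultimately propose citing the structure from \cite{VP75} anyway --- as the paper does --- the argument is repairable, but as written the ``free deletion'' step would fail.
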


\begin{proof} (Sketch.)
Lemma $2$ in~\cite{VP75} considers the case when $m\geq n+k^2$.
The cycle $v$ shown by that lemma has the length in $\{1,2,\dots,k\}$
and the effect in 
$\{-1,-2,\dots,-k\}$. The length of the pre-phase plus the post-phase
is bounded by $k^2$.
The idea is to use a most effective control state cycle for
repeating (with the largest ratio $\frac{|\text{effect}|}{\text{length}}$),
and to add the ``cost'' of reaching that cycle from $p(m)$
and of reaching $q(n)$ from the end of
the repeated cycle. The technical details can be found in~\cite{VP75}.

The situation with $n\geq m+k^2$ is handled symmetrically.
Having solved the case $|n-m|\geq k^2$, the case  
$|n-m|< k^2$ is obvious, as can be seen in Fig.~\ref{fig:valiant}:
if a long path is going up via a short cycle with a positive effect
$d_1$
and then down via another short cycle with a negative effect $-d_2$,
then it can be shortened by removing $d_2$ copies of the first cycle
and $d_1$ copies of the second cycle.
Hence $|n-m|< k^2$ implies that there is a short positive path 
$p(m)\gt{u_1}q(n)$ (with $v=u_2=\varepsilon$).
\end{proof}

It is useful to highlight the following corollary of the previous lemma.

\begin{corollary}\label{cor:valiantpositpath}
If $|m-n|$ is small and there is 
a positive path from $p(m)$ to $q(n)$ then there is a short 
positive path from $p(m)$ to $q(n)$.
\end{corollary}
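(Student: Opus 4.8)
The plan is to read the corollary straight off Lemma~\ref{lem:valiantpositpath}. Since there is a positive path from $p(m)$ to $q(n)$, the lemma hands us a \emph{shortest} positive path of the canonical shape
\[
p(m)\gt{u_1}p'(m')\gt{v^i}p'(m'{+}id)\gt{u_2}q(n)
\]
with $u_1,u_2$ short, $v$ a short control state cycle of effect $d\in\Z$, and $i\geq 0$. Its length is $|u_1|+i\,|v|+|u_2|$, in which the only potentially large quantity is the repetition count $i$. So the whole task reduces to showing that $i$ is small whenever $|m-n|$ is small, after which smallness of the length follows because sums and products of small quantities are small.

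To bound $i$ I would first compare effects. The total effect of the displayed path is $n-m$, so writing $e_1,e_2$ for the effects of the pre-phase $u_1$ and the post-phase $u_2$ we get the identity $n-m = e_1 + i\,d + e_2$. Because $u_1,u_2$ are short and each positive step changes the counter by at most $1$, both $|e_1|\leq|u_1|$ and $|e_2|\leq|u_2|$ are small; likewise $|d|\leq|v|$ is small. The main point is then a dichotomy on $d$. If $d\neq 0$, then $|d|\geq 1$, whence $i=|i\,d|/|d|\leq|i\,d|=|(n-m)-e_1-e_2|\leq|n-m|+|e_1|+|e_2|$, a sum of small quantities; so $i$ is small and the path is short.

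The hard part is the case $d=0$, where the effect identity no longer constrains $i$ at all; here the counting argument collapses and one must instead invoke minimality. When $d=0$ the configuration at the start and the end of each cycle is literally $p'(m')$, so deleting all $i$ copies of $v$ yields the path $p(m)\gt{u_1}p'(m')\gt{u_2}q(n)$, whose two halves are subpaths of the original and hence still positive. If $i>0$ this would be a \emph{strictly shorter} positive path, contradicting the choice of a shortest one; thus $i=0$ and the length is just $|u_1|+|u_2|$, which is small. I expect this zero-effect subcase to be the only genuine subtlety: the bound derived from the effect equation breaks down precisely when the cycle contributes nothing, and one has to observe that in that situation the repetitions are redundant and can be excised without destroying positivity. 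Everything else is the routine bookkeeping that sums and products of small quantities remain small.
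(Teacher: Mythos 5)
Your proof is correct and takes essentially the same route as the paper, which does not even spell out a separate argument but reads the corollary directly off Lemma~\ref{lem:valiantpositpath}: the only unbounded quantity in the canonical form $u_1v^iu_2$ is the repetition count $i$, and $i$ is controlled by the total counter change $n-m$, which is small by hypothesis. Your explicit handling of the $d=0$ edge case by a minimality/cut-off argument is a detail the paper leaves tacit (in its construction the repeated cycle has nonzero effect whenever $i>0$), but it is correct and makes the derivation self-contained from the lemma's statement alone.
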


\subsection{The extended det-LTS {\large $\LTSext(\A)$}}\label{sub:extdoca}

We now introduce a central notion, 
the det-LTS $\LTSext(\A)$,
which extends the det-LTS $\T(\A)$ defined in~(\ref{eq:TA}),
for a given doca
 $\A=(Q_\stable,Q_\reset,\Sigma,\delta,(\per_s)_{s\in Q_\reset},
(\goto_s)_{s\in Q_\reset})$.

Before giving a formal definition, 
we give an intuitive explanation.
Let us (temporarily) imagine that $\A$ has also a \emph{special mode of behaviour},
besides
the \emph{normal mode} defined previously; 
let any configuration $p(m)$ have
its special-mode analogue $\overline{p}(m)$.
For any \emph{positive} counter value $m>0$,
each transition $p(m)\trule{a}q(m{+}j)$ ($a\in\Sigma$)
induces the transition $\overline{p}(m)\trule{a}\overline{q}(m{+}j)$.
Further,
any transition 
$s(m)\trule{\varepsilon}q(0)$ ($s\in Q_\reset$, $m\geq 0$) induces  
$\overline{s}(m)\trule{\varepsilon}q(0)$; hence 
the special mode is finished by
any reset step, after which 
the normal mode applies.
\emph{A crucial property of the special mode}
is that whenever a configuration
$\overline{p}(0)$, where $p\in Q_\stable$,
is entered 
(by a non-reset step), a multiple 
(the
least common multiple, say) $\Delta\in\Nat$ 
of all periods $\per_s$, $s\in Q_\reset$,
is silently added to the counter
(we put $\Delta=1$ when $Q_\reset=\emptyset$). 
Hence the zero rules are never used
in the special mode 
since the counter is always positive (until a possible reset step is
performed). If we added the special-mode configurations and the
respective transitions to $\T(\A)$, we would easily observe
that
\begin{itemize}
\item
$p(m)\sim_m \overline{p}(m)$ (thus $\lev(p(m),\overline{p}(m))\geq
m$);
\item
$p(m)\not\sim \overline{p}(m)$ iff there is a positive path 
$p(m)\gt{u}q(0)$ (and thus
$\overline{p}(m)\gt{u}\overline{q}(0)=\overline{q}(\Delta)$)
\\
for some $q\in Q_\stable$
such that $q(0)\not\sim \overline{q}(0)$ 
(i.e. $q(0)\not\sim \overline{q}(\Delta)$);
\item
if $m\equiv m'\,(\bmod\, \per_s)$ for all $s\in Q_\reset$
then 
$\overline{p}(m)\sim\overline{p}(m')$;
\item
if $s\in Q_\reset$ and 
$m\equiv m'\,(\bmod\, \per_s)$ then
$\overline{s}(m)\sim\overline{s}(m')$.
\end{itemize}
In the special mode of $\A$,
the concrete value $m$ of the counter is not important once we know 
the tuple 
$(c_s)_{s\in Q_\reset}$ where $c_s=m\bmod \per_s$;
in a reset configuration $\overline{s}(m)$, 
knowing just $c=m\bmod \per_s$ is sufficient.

We do not formalize the above notions and claims, since they only 
serve us for a better understanding of the definition of  $\LTSext(\A)$
given below. The det-LTS $\LTSext(\A)$ arises from $\T(\A)$ 
by adding a finite set $Q_\modp$ of stable states
and a finite set $Q_\fixres$ of unstable states and the transitions
defined below.
The transitions from $Q_\modp$ will only lead to 
$Q_\modp\cup Q_\fixres$, whereas the $\varepsilon$-transitions from 
$Q_\fixres$ lead to zero configurations in $\T(\A)$. 
There are no 
transitions leading 
from the configurations in $\T(\A)$ to $Q_\modp\cup Q_\fixres$,
and the subgraph of $\LTSext(\A)$ arising by the restriction 
to the configurations of
$\T(\A)$ is $\T(\A)$ itself.
We thus also safely use the same symbols $\trule{a}$,
$\trule{\varepsilon}$ in both $\T(\A)$ and $\LTSext(\A)$. 
An example is sketched in Fig.~\ref{fig:texta}.

\begin{figure}[ht]
\centering
\epsfig{scale=0.4,file=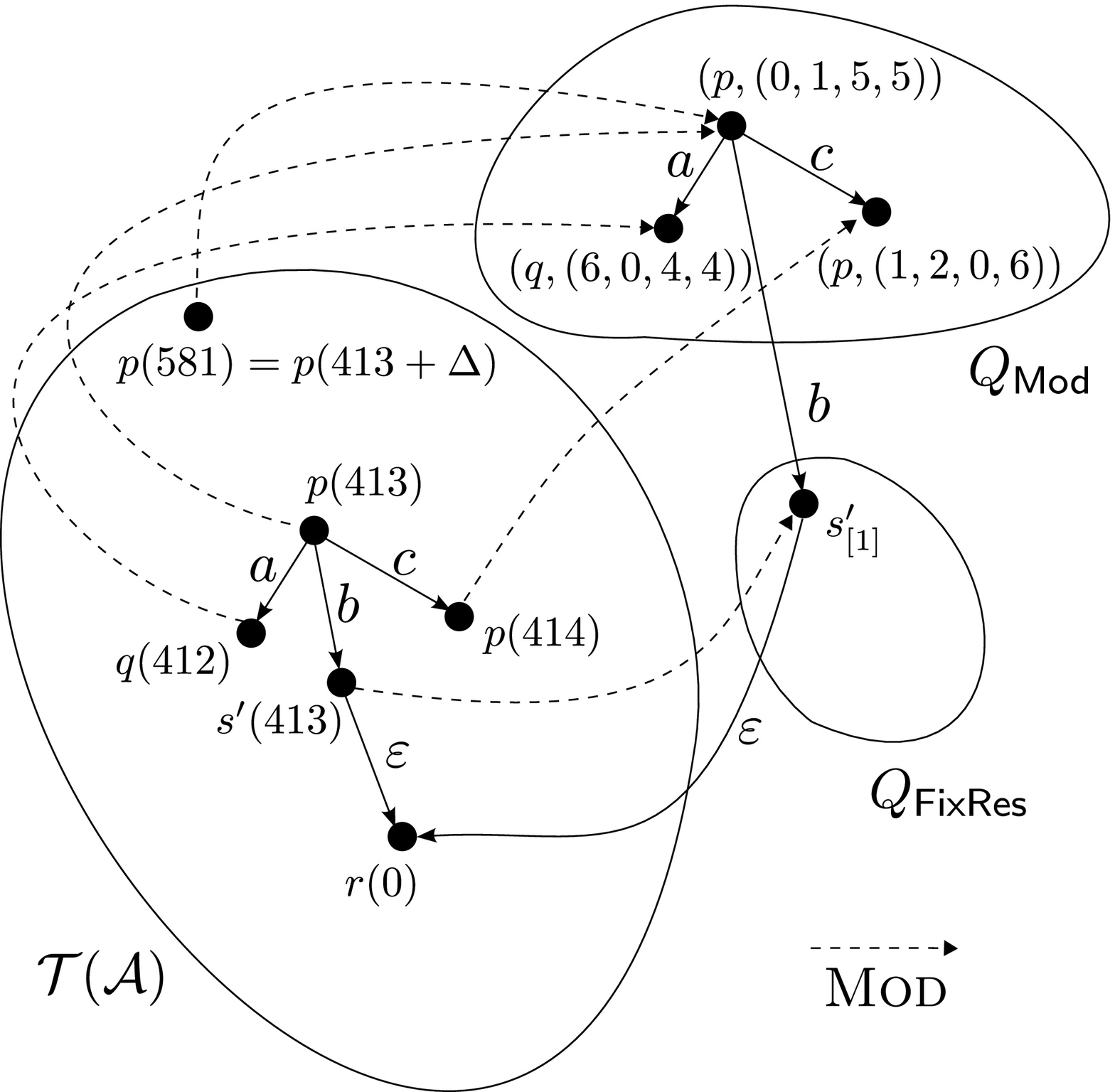}
\caption{
A fragment of $\LTSext(\A)$ where: $\{p, q, r\} \subseteq Q_\stable$, 
$Q_\reset=\{s, s', s'', s'''\}$,
$\{a, b, c\} \subseteq \Sigma$,
$\{ (p, a, 1, q, -1), (p, b, 1, s', 0), (p, c, 1, p, 1) \} \subseteq \delta$,
$(\per_s, \per_{s'}, \per_{s''}, \per_{s'''}) = (7,4,6,8)$, $\goto_{s'}(1) = r$, $\Delta = lcm\{7,4,6,8\} = 168$. 
}\label{fig:texta}
\end{figure}

\begin{definition}\label{def:extTA}
Given a doca $\A=(Q_\stable,Q_\reset,\Sigma,\delta,(\per_s)_{s\in Q_\reset},
(\goto_s)_{s\in Q_\reset})$, with the associated det-LTS 
\ $\T(\A)=(Q_\stable\times\N,Q_\reset\times\N,\Sigma,
(\trule{a})_{a\in\Sigma},\trule{\varepsilon})$, we define the
det-LTS
\begin{center}
$\LTSext(\A)=
((Q_\stable\times\N)\cup Q_\modp,
(Q_\reset\times\N)\cup Q_\fixres,\Sigma,
(\trule{a})_{a\in\Sigma},\trule{\varepsilon})$ 
\end{center}
as the extension of $\T(\A)$ where 
\begin{itemize}
\item
$Q_\modp=\{(p,(c_s)_{s\in Q_\reset})\mid p\in Q_\stable, 0\leq
c_s\leq\per_s-1\}$,
\item
$Q_\fixres=\{ s_{[c]}\mid s\in Q_\reset, 0\leq c\leq \per_s-1\}$,
and
\item
the  additional transitions
are defined by the following (deduction) rules:
\begin{enumerate}
\item
If $(p,a,1,q,j)\in\delta$ and $q\in Q_\stable$
then for each 
$(p,(c_s)_{s\in Q_\reset})\in Q_\modp$
we have 
\begin{center}
$(p,(c_s)_{s\in Q_\reset})\trule{a}(q,(c'_s)_{s\in Q_\reset})$
\end{center}
where 
$c'_s=(c_s{+}j)\bmod\per_s$ for each $s\in Q_\reset$.
\item
If $(p,a,1,s',j)\in\delta$ and  $s'\in Q_\reset$ then 
for each
$(p,(c_s)_{s\in Q_\reset})\in Q_\modp$
we have 
\begin{center}
$(p,(c_s)_{s\in Q_\reset})\trule{a}s'_{[c]}$
\end{center}
where $c=(c_{s'}{+}j)\bmod\per_{s'}$.
\item
For each $s_{[c]}\in Q_\fixres$ we have
$s_{[c]}\trule{\varepsilon}q(0)$ where  
$q=\goto_s(c)$. 
\end{enumerate}
\end{itemize}
A \emph{configuration} $C$ is a state in $\LTSext(\A)$. 
If $C\in Q_\modp$ or $C=p(m)$
where $p\in Q_\stable$ then $C$ is \emph{stable},
otherwise $C$ is \emph{unstable}.

Moreover, we define the mapping 
\begin{center}
$\modstate:((Q_\stable\cup Q_\reset)\times\N)\rightarrow (Q_\modp\cup
Q_\fixres)$:
\end{center}
\begin{itemize}
\item
if $p\in Q_\stable$ then $\modstate(p(m))=(p,(c_s)_{s\in Q_\reset})\in
Q_\modp$ 
where 
$c_s=m\bmod \per_s$ for all $s\in Q_\reset$;
\item
if  $s\in Q_\reset$ then 
$\modstate(s(m))=s_{[c]}\in Q_\fixres$ 
where 
$c=m\bmod \per_s$.
\end{itemize}
\end{definition}

\noindent
We note that the cardinality of $Q_\modp$ might be exponential in $k$ 
(i.e. in the number of control states of $\A$).
On the other hand, $Q_\fixres$ is small;
this is a crucial fact for some claims in the next auxiliary
propositions.
We stipulate $\min\emptyset=\omega$, and recall that
$z+\omega=\omega$ for any $z\in \N$.

\begin{proposition}\label{prop:auxpmmodpm}
\hfill 
\begin{enumerate}
\item
If $(p,(c_s)_{s\in Q_\reset})\gt{w}(q,(c'_s)_{s\in Q_\reset})$ then 
for each $(p,(d_s)_{s\in Q_\reset})\in Q_\modp$ we have
$(p,(d_s)_{s\in Q_\reset})\gt{w}(q,(d'_s)_{s\in Q_\reset})$ 
where $d'_s{-}c'_s \equiv d_s{-}c_s \,(\bmod \ \per_s)$ for all $s\in
Q_\reset$.
\item
If $(p,(c_s)_{s\in Q_\reset})\gt{w}s'_{[c]}$ then 
for each $(p,(d_s)_{s\in Q_\reset}) \in Q_\modp$ we have
$(p,(d_s)_{s\in Q_\reset})\gt{w}s'_{[d]}$ where 
$d{-}c \equiv d_{s'}{-}c_{s'} \,(\bmod \,\per_{s'})$. 
\item
For any $s\in Q_\reset$ we have
$s(m)\sim \modstate(s(m))$.
\item
If $p\in Q_\stable$ then $\lev(p(m),\modstate(p(m)))=$
\\
$=\min\{z+\lev(q(0),\modstate(q(0)))\mid q\in Q_\stable$ and 
$z$ is the length of a positive path from $p(m)$ to $q(0)\}$.
\item
For any $p\in Q_\stable$, $m\in\Nat$, and $w\in\Sigma^*$
there is some small positive $d\in\N$ such that 
\begin{itemize}
\item
either 
for each
$m'$ such that $m'\equiv m\,(\bmod\, d)$
we have that $\modstate(p(m'))$ enables $w$,
\item
or 
for each
$m'$ such that $m'\equiv m\,(\bmod\, d)$
we have that $\modstate(p(m'))$ does not enable $w$.
\end{itemize}

\end{enumerate}
\end{proposition}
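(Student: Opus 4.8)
The plan is to prove each of the five parts in turn, with parts (1)--(2) being essentially syntactic, parts (3)--(4) following from the design of the special mode, and part (5) being the genuinely new claim requiring the combinatorics of residues. For parts (1) and (2), I would argue by induction on $|w|$. The key observation is that in $Q_\modp$ the letter-transitions of Definition~\ref{def:extTA} act on each coordinate $c_s$ purely by the additive update $c_s\mapsto(c_s{+}j)\bmod\per_s$, where the increment $j$ is determined solely by the rule $(p,a,1,q,j)\in\delta$ and hence is independent of the starting residue tuple. Therefore, whether a word $w$ is enabled from $(p,(c_s)_{s})$ depends only on the control-state sequence it traces, which is shared by $(p,(d_s)_{s})$; and along the common run the coordinatewise difference $d_s-c_s$ is preserved modulo $\per_s$ at every step, giving the stated congruence on the endpoints. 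Part (2) is the same argument stopped at the moment a reset state $s'$ is reached, reading off the single relevant coordinate.

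For part (3), I would simply unfold the definitions: a reset configuration $s(m)$ makes one $\varepsilon$-step to $(\goto_s(m\bmod\per_s))(0)$, while $\modstate(s(m))=s_{[m\bmod\per_s]}$ makes the $\varepsilon$-step to the same target $(\goto_s(m\bmod\per_s))(0)$ by rule~3 of Definition~\ref{def:extTA}; since after the $\varepsilon$-step both sit at the identical normal-mode configuration, they enable exactly the same traces, so $s(m)\sim\modstate(s(m))$. Part (4) is the core structural identity and is where I would invest the most care among (1)--(4). The inequality ``$\leq$'' comes from following any positive path $p(m)\gt{u}q(0)$ of length $z$: by the special-mode simulation the images under $\modstate$ track the run with the same letters (the counter staying positive means no zero rule interferes), reducing a witness for $(q(0),\modstate(q(0)))$ to one for $(p(m),\modstate(p(m)))$ prefixed by $u$, and invoking Observation~\ref{obs:eqleveldrop}. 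The reverse inequality ``$\geq$'' requires showing that a \emph{shortest} witness distinguishing $p(m)$ from $\modstate(p(m))$ must begin by driving the counter down to zero along a positive path, because so long as the counter stays positive the normal and special modes are indistinguishable (they apply identical rules); the first point of divergence can only be a zero configuration $q(0)$ versus its image, which forces the path $p(m)\gt{u}q(0)$ to be positive and the remaining witness to distinguish $(q(0),\modstate(q(0)))$.

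The main obstacle will be part (5), the ``eventual periodicity in $m$'' of the predicate ``$\modstate(p(m))$ enables $w$'', with a \emph{small} period $d$. The natural approach is induction on $|w|$. The predicate depends on $m$ only through the residue tuple $(m\bmod\per_s)_{s\in Q_\reset}$, so a priori it is periodic with period $\Delta=\operatorname{lcm}\{\per_s\}$, which is exponentially large and hence not ``small''; the whole point is to do better. The strategy is: writing $w=aw'$, the first letter $a$ either is enabled from the control state $p$ (a condition independent of $m$) leading to a successor, or not; if the successor is again in $Q_\modp$, one recurses on $w'$ with an updated control state, where crucially only those coordinates $c_s$ that the continuation \emph{actually interrogates} matter. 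The delicate point is that along the run only finitely many resets can be triggered, and I expect to argue that the enabledness of $w$ hinges on at most a \emph{small} number of reset states being visited --- so that $d$ can be taken as the product of the $\per_s$ over just those few relevant $s\in Q_\reset$, each factor bounded by $|Q_\stable|$, keeping $d$ small. Making precise which coordinates are ``relevant'' for a fixed finite word $w$, and bounding their number polynomially, is the technical crux; I would isolate it by tracking, as the induction peels off letters of $w$, the set of reset states reachable within $|w|$ steps and showing this set is small because each step can reach at most one new reset state.
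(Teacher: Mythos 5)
Your treatment of parts (1)--(4) is essentially the paper's own proof: induction on $|w|$ for (1)--(2), the one-step unfolding of the $\varepsilon$-transitions for (3), and for (4) the paper likewise gets ``$\leq$'' from the fact that positive paths are mirrored under $\modstate$, and proves ``$\geq$'' by induction on $\lev(p(m),\modstate(p(m)))$: the first letter of a shortest witness must come from a positive rule into a \emph{stable} state, the reset case being excluded exactly by part (3). Your ``first point of divergence is a zero configuration reached by a positive path'' compresses this correctly.

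Part (5) is where your plan has a genuine gap, in two respects. First, the quantitative step fails on the paper's own terms: even if you showed that only a \emph{small} number (i.e., polynomially many, per the paper's convention) of reset coordinates are relevant, taking $d$ to be the product of their periods does not yield a small $d$ --- a product of polynomially many factors, each bounded by $k$, can be of order $k^{\poly(k)}$, i.e., exponential. This is precisely the exponential-periodicity phenomenon of the paper's introductory example $L_n$ (the lcm of the first $n$ primes), so no argument of the shape ``few relevant periods, hence small product'' can work; you would need a \emph{constant} bound on the number of relevant coordinates. Second, and this is the missing idea: the structure of $\LTSext(\A)$ gives that constant bound for free, and it is \emph{one}. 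Within $Q_\modp$ every transition is induced by a positive rule and is enabled irrespective of the residue tuple, so as long as the run of $\modstate(p(m))$ on $w$ stays in $Q_\modp$, enabledness does not depend on $m$ at all. The moment the run reaches some $s'_{[c]}\in Q_\fixres$, the $\varepsilon$-step lands in a concrete zero configuration $q(0)$ of $\T(\A)$ with $q=\goto_{s'}(c)$, and since there are no transitions from $\T(\A)$ back into $Q_\modp\cup Q_\fixres$, everything after this first reset is fully determined with no further dependence on $m$; by your own part (2), $c$ depends on $m$ only through $m\bmod\per_{s'}$. The paper therefore simply sets $d=\per_{s'}$ for the \emph{first} reset state $s'$ encountered (and $d=1$ if the run never leaves $Q_\modp$), with $\per_{s'}\leq|Q_\stable|$ small. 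Your induction tracking ``the set of reset states reachable within $|w|$ steps'' is thus both insufficient (the product bound) and unnecessary (only the first reset matters).
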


\begin{proof}
Points 1 and 2 can be easily shown by induction on $|w|$, using
Def.~\ref{def:extTA}.

Point 3 is obvious since $s(m)\trule{\varepsilon}q(0)$ 
and $\modstate(s(m))\trule{\varepsilon}q(0)$ for the appropriate $q\in
 Q_\stable$.

\medskip

Point 4:

We first note that 
if $p(m)\gt{w}q(n)$ is a positive path (in $\T(\A)$) 
then we have 
$\modstate(p(m))\gt{w}\modstate(q(n))$ (as can be easily
shown by induction on $|w|$).

One part of the equality, namely 
$\lev(p(m),\modstate(p(m)))\leq
\min\{\dots\}$, is thus clear; it remains to show 
\begin{equation}\label{eq:bigger}
\lev(p(m),\modstate(p(m)))\geq\min\{z+\lev(q(0),\modstate(q(0)))\mid \dots\}.
\end{equation}
The case where
$p(m)\sim\modstate(p(m))$ 
is trivial.
We
thus further consider only the cases $p(m)\not\sim\modstate(p(m))$,
and
we proceed by induction on  
$\lev(p(m),\modstate(p(m)))$.
If $\lev(p(m),\modstate(p(m)))=0$ then we obviously must have 
$m=0$, and~(\ref{eq:bigger}) is trivial in any case with $m=0$.  

Let us now assume
$m>0$, and
let $av$ ($a\in\Sigma$) be a shortest witness for
$(p(m),\modstate(p(m)))$.
We must have some $(p,a,1,q,j)\in\delta$, and thus
$p(m)\trule{a}q(m{+}j)$ and $\modstate(p(m))\trule{a}\modstate(q(m{+}j))$
(as can be easily checked). Point 3 excludes the case 
$q\in Q_\reset$, hence $q\in Q_\stable$.
By recalling Observation~\ref{obs:eqleveldrop}(2), and using the
induction hypothesis for $q(m+j),\modstate(q(m+j))$, we finish the
proof easily: $\lev(p(m),\modstate(p(m)))=1+
\lev(q(m{+}j),\modstate(q(m{+}j)))
\geq 1+z+\lev(q'(0),\modstate(q'(0)))$
where $z$ is the length of some positive path from $q(m{+}j)$ to
$q'(0)$, and $1{+}z$ is thus 
the length of some positive path from $p(m)$ to
$q'(0)$.

\medskip

Point 5:

By recalling  Points 1 and 2, we easily note the following fact:
\\
If $\modstate(p(m_1))\gt{u}C\in Q_\modp\cup Q_\fixres$
then for any $m_2$ there is $C'\in Q_\modp\cup Q_\fixres$
such that 
$\modstate(p(m_2))\gt{u}C'$;
moreover, if $\modstate(p(m_1))\gt{u}s_{[c]}$ then 
$\modstate(p(m_2))\gt{u}s_{[c]}$ for any $m_2$ such that 
 $m_2\equiv m_1\,(\bmod\,\per_s)$.

Hence if $w=uv$ where $\modstate(p(m))\gt{u}s_{[c]}$ then 
the claim is satisfied by $d=\per_s$, and otherwise it is satisfied
even by $d=1$.
\end{proof}

We recall that $C\depictlev{e}C'$ means 
 $\lev(C,C')=e$.

\begin{proposition}\label{prop:ideafewc}
\hfill
\begin{enumerate}
\item
For any $p,q\in Q_\stable$ and $m,n\in \N$
there are small positive $d_1,d_2\in\N$ such that 
for any $m',n'\in\N$ we have:
if $m'\equiv m\,(\bmod\, d_1)$ and $n'\equiv n\,(\bmod\, d_2)$ then
\begin{center}
$\lev(\modstate(p(m')),\modstate(q(n')))\leq
\lev(\modstate(p(m)),\modstate(q(n)))$.
\end{center}
\item
The set $\{\,e\mid \text{ there are }  C,C'\in Q_\modp\text{ s.t. }
C\depictlev{e}C'\}$ is small.
\end{enumerate}
\end{proposition}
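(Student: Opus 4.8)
For Part~1, the plan is to fix a distinguishing word and then invoke Proposition~\ref{prop:auxpmmodpm}(5). If $\modstate(p(m))\sim\modstate(q(n))$ the inequality is trivial and we may take $d_1=d_2=1$; otherwise let $w$ be a shortest witness for $(\modstate(p(m)),\modstate(q(n)))$, so that $|w|=\lev(\modstate(p(m)),\modstate(q(n)))+1$. Applying Proposition~\ref{prop:auxpmmodpm}(5) to $(p,m,w)$ yields a small $d_1$ on whose residue class the enabledness of $w$ from $\modstate(p(m'))$ is constant, and applying it to $(q,n,w)$ yields a small $d_2$ with the analogous property for $\modstate(q(n'))$. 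For $m'\equiv m\,(\bmod\,d_1)$ and $n'\equiv n\,(\bmod\,d_2)$ the word $w$ is then enabled in $\modstate(p(m'))$ exactly when it is in $\modstate(p(m))$, and in $\modstate(q(n'))$ exactly when in $\modstate(q(n))$; since $w$ witnesses $(\modstate(p(m)),\modstate(q(n)))$ it also witnesses $(\modstate(p(m')),\modstate(q(n')))$, giving $\lev(\modstate(p(m')),\modstate(q(n')))\le|w|-1$.

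For Part~2, the point is that largeness of an eqlevel need not mean many eqlevels. Two base observations supply small sets of (possibly large) values: the set $E_0=\{\lev(q_1(0),q_2(0))\mid q_1,q_2\in Q_\stable\}$ is small because there are at most $|Q_\stable|^2$ pairs of stable zero configurations, and $E_1=\{\lev(q(0),\modstate(q(0)))\mid q\in Q_\stable\}$ is small as it has at most $|Q_\stable|$ values. The plan is to show that every finite eqlevel $\lev(C,C')$ with $C,C'\in Q_\modp$ either is small or equals $z+e$ for some small $z$ and some $e\in E_0\cup E_1$; the set of all such values then has cardinality at most $\poly(k)\cdot(|E_0|+|E_1|)$, which is small.

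To locate the decomposition I would follow a shortest witness and examine its first resolving event. By Proposition~\ref{prop:auxpmmodpm}(1,2), enabledness of a letter from a state of $Q_\modp$ depends only on its control component, and this component evolves deterministically under each letter independently of the residues $(c_s)_s$; the residues enter only when a reset rule fires, fixing the target $\goto_{s'}(\cdot)(0)$ of the exit into $\T(\A)$. Hence, while both sides stay in $Q_\modp$, the relevant information is the control pair together with the accumulated counter changes read modulo the (small) periods $\per_{s'}$, i.e.\ a state of a product automaton of small size. A shortest witness resolves either by an enabledness mismatch, or by one or both sides firing a reset into inequivalent continuations; in each case a shortcut argument at a repeated product state shows the prefix $u$ up to the first resolving event is short (any longer prefix could be contracted at a repeated state without altering the resolving event, contradicting minimality). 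If both sides reset to stable zero configurations we obtain $\lev(C,C')=|u|+e_0$ with $|u|$ small and $e_0\in E_0$, and if the mismatch is already in enabledness then $\lev(C,C')=|u|$ is small.

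The main obstacle is the mixed resolving event, in which one side fires a reset (entering a zero configuration $q_1(0)$ of $\T(\A)$) while the other remains in $Q_\modp$ at some $D'$; here one must control eqlevels of the form $\lev(D',q_1(0))$. The plan is to interpose $\modstate(q_1(0))\in Q_\modp$ and use the standard ultrametric property of eqlevels, namely $\lev(X,Z)\ge\min\{\lev(X,Y),\lev(Y,Z)\}$ with equality whenever the two inner values differ: this relates $\lev(D',q_1(0))$ to $\lev(q_1(0),\modstate(q_1(0)))\in E_1$ and to the mode--mode eqlevel $\lev(D',\modstate(q_1(0)))$, the latter handled by the same decomposition by induction on the eqlevel. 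The delicate point, where I expect the real work to lie, is the tie case of the ultrametric inequality (equal inner eqlevels), which the bound alone does not settle; closing it requires a direct analysis of the mode-versus-zero comparison showing that it too reduces, after a short prefix, either to a pair of stable zero configurations or to an enabledness mismatch, so that its value again lands in $\{\text{small}\}\cup\{z+e\mid z\text{ small},\,e\in E_0\cup E_1\}$.
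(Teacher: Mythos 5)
Your Part~1 is correct and is essentially verbatim the paper's own proof. Part~2, however, has genuine gaps, and the key missing idea is that Part~1 \emph{already suffices}: the paper never decomposes witnesses at all. It associates with every tuple $(p,m,q,n)$ the tuple $(d_1,d_2,c_1,c_2)$ where $d_1,d_2$ are the moduli from Part~1 and $c_1=m\bmod d_1$, $c_2=n\bmod d_2$; if $(p,m_1,q,n_1)$ and $(p,m_2,q,n_2)$ share this associated tuple, then applying the Part~1 inequality \emph{in both directions} forces $\lev(\modstate(p(m_1)),\modstate(q(n_1)))=\lev(\modstate(p(m_2)),\modstate(q(n_2)))$. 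Since $d_1,d_2$ are small, the number of tuples $(p,q,d_1,d_2,c_1,c_2)$ is small, so the eqlevel function factors through a small quotient and the set of its values is small. This one-paragraph quotient argument replaces your entire decomposition machinery.

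Your alternative route fails at three concrete points. First, the ``tie case'' you defer is not a residual technicality but the whole problem: when $\il(q_1(0))=\lev(D',\modstate(q_1(0)))$, the ultrametric property only gives $\lev(D',q_1(0))\geq$ this common value, leaving it completely unpinned; and the ``direct analysis of the mode-versus-zero comparison'' you invoke to close it is precisely the statement of Proposition~\ref{prop:ideafewcfix}(2), which the paper proves by the same quotient trick --- so at this point your argument is circular. Second, your induction on the eqlevel does not preserve a fixed polynomial bound: each decomposition step contributes a short prefix $|ua|$ to the additive part $z$ of $z+e$, and since the recursion depth is unbounded (the eqlevel only drops by at least $1$ per step, and distinct $Q_\modp$-pairs never repeat but $Q_\modp$ is exponentially large), the accumulated $z$ grows without any small bound; the conclusion ``$z$ small'' is not an inductive invariant. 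Third, the ``product automaton of small size'' underlying your shortcut argument is wrong in exactly the mixed case: the side that survives in $Q_\modp$ carries its \emph{full} residue tuple $(c_s)_{s\in Q_\reset}$ --- of possibly exponential multiplicity, as the paper notes after Definition~\ref{def:extTA} --- and this whole tuple governs all of its future resets. Contracting at a repetition of (control pair, residues modulo the firing period) changes the surviving state $D'$ into some $D''$ agreeing only modulo that one period, so $\lev(q_1(0),D'')$ may exceed $\lev(q_1(0),D')$ and the shortened word need not yield a shorter witness; the minimality contradiction evaporates, and with it the claim that the prefix up to the first mixed resolving event is short.
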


\begin{proof}
Point 1:
\\
If $\modstate(p(m))\sim \modstate(q(n))$ 
then the claim is trivial. We thus assume
$\modstate(p(m))\not\sim \modstate(q(n))$ 
and let 
$w$
be a shortest witness for $(\modstate(p(m)),\modstate(q(n)))$.

By Prop.~\ref{prop:auxpmmodpm}(5), 
$p,m,w$ give rise to $d_1$ and $q,n,w$ give rise to $d_2$ such that 
precisely one of 
$\modstate(p(m')),\modstate(q(n'))$ enables $w$ when
$m'\equiv m\,(\bmod\, d_1)$
and $n'\equiv n\,(\bmod\, d_2)$.
In this case $w$ is a witness (not necessarily a shortest) for 
$(\modstate(p(m')),\modstate(q(n')))$, and the claim thus follows.

\medskip

Point 2:
\\
It is obvious that the set in Point 2 is equal to 
\begin{center}
$\{\,e\mid \modstate(p(m))\depictlev{e}\modstate(q(n))$ for some 
$p,q\in Q_\stable, m,n\in\N\}$.
\end{center}
With every tuple $(p,m,q,n)$ we associate a fixed tuple
$(d_1,d_2,c_1,c_2)$ where $d_1,d_2$ are those guaranteed by Point~1,
and $c_1=m\bmod d_1$, $c_2=n\bmod d_2$.
If two tuples  $(p,m_1,q,n_1)$, $(p,m_2,q,n_2)$   have the same 
associated tuple $(d_1,d_2,c_1,c_2)$ then 
$\lev(\modstate(p(m_1)),\modstate(q(n_1)))=
\lev(\modstate(p(m_2)),\modstate(q(n_2)))$, as follows by applying
Point~1 in both directions. 
Since the number of possible tuples $(p,q,d_1,d_2,c_1,c_2)$ is small,
we are done.
\end{proof}
The next proposition can be proved 
analogously as the previous one.

\begin{proposition}\label{prop:ideafewcfix}
\hfill
\begin{enumerate}
\item
For any $p,q\in Q_\stable$ and $m,n\in \N$
there is some small positive $d\in\N$ such that 
for any $m'\in\N$ we have:
if $m'\equiv m\,(\bmod\, d)$ then
\begin{center}
$\lev(\modstate(p(m')),q(n))\leq
\lev(\modstate(p(m)),q(n))$.
\end{center}
\item
For any (fixed) $q(n)$, the set 
$\{\,e\mid \text{ there is }  C\in Q_\modp\text{ s.t. }
C\depictlev{e}q(n)\}$ is small.
\end{enumerate}
\end{proposition}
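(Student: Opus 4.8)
The plan is to mirror the proof of Proposition~\ref{prop:ideafewc} almost verbatim, the only structural simplification being that the second argument $q(n)$ is now a fixed configuration of $\T(\A)$ rather than a varying element of $Q_\modp$. Since the set of words enabled in $q(n)$ does not depend on any residue class, a single modulus $d$ controlling the first argument suffices, in place of the pair $d_1,d_2$ used before; this is the only substantive change.

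For Point~1, I would first dispose of the trivial case $\modstate(p(m))\sim q(n)$. Otherwise, fix a \emph{shortest} witness $w$ for $(\modstate(p(m)),q(n))$, so that $|w|=\lev(\modstate(p(m)),q(n))+1$. Applying Proposition~\ref{prop:auxpmmodpm}(5) to $p$, $m$ and $w$ yields a small positive $d$ such that whether $\modstate(p(m'))$ enables $w$ is constant across the residue class $m'\equiv m\,(\bmod\, d)$ (the value being the one realized by $m$ itself). Now, since $w$ is a witness, it is enabled in exactly one of $\modstate(p(m))$ and $q(n)$; the enabling status of the fixed $q(n)$ does not change with $m'$, and by the choice of $d$ neither does that of $\modstate(p(m'))$. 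Hence for every $m'\equiv m\,(\bmod\, d)$ the word $w$ is enabled in exactly one of $\modstate(p(m'))$ and $q(n)$, i.e.\ $w$ is a (not necessarily shortest) witness for $(\modstate(p(m')),q(n))$. By Observation~\ref{obs:eqleveldrop} this gives $\lev(\modstate(p(m')),q(n))\leq |w|-1=\lev(\modstate(p(m)),q(n))$, as required.

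For Point~2, I would first observe that the set in question equals $\{\,e\mid \modstate(p(m))\depictlev{e}q(n)\text{ for some } p\in Q_\stable,\, m\in\N\}$, exactly as in the corresponding step of Proposition~\ref{prop:ideafewc}(2). Then, following that proof, I associate with every pair $(p,m)$ the triple $(p,d,c)$, where $d$ is the modulus guaranteed by Point~1 for $(p,m)$ and $c=m\bmod d$. If $(p,m_1)$ and $(p,m_2)$ carry the same triple $(p,d,c)$, then $m_1\equiv m_2\,(\bmod\, d)$, and applying Point~1 in both directions yields $\lev(\modstate(p(m_1)),q(n))=\lev(\modstate(p(m_2)),q(n))$. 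Since $p$ ranges over a set of size $k$, $d$ is small, and $c<d$ is therefore small, the number of possible triples $(p,d,c)$ is small; hence so is the number of distinct eqlevels, which is the claim.

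The only genuinely delicate point I anticipate is the reduction in Point~2 from an arbitrary $C\in Q_\modp$ to states of the form $\modstate(p(m))$: the $\modstate$-images are precisely the jointly realizable (``consistent'') residue tuples, whereas $Q_\modp$ as a set also contains tuples not attained by any single counter value, and one must ensure such tuples contribute no new eqlevel. As this is the identical step already asserted in Proposition~\ref{prop:ideafewc}(2), I would dispatch it in the same way, relying on Proposition~\ref{prop:auxpmmodpm}(1)--(2) (which show that the enabling of any word depends only on the residue of the first reset state encountered along its forced run); everything else is a direct specialization of the earlier argument and requires no new machinery.
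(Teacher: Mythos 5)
Your proposal is correct and takes essentially the same approach as the paper, whose entire proof is the remark that the proposition ``can be proved analogously as the previous one'': your single-modulus specialization of Proposition~\ref{prop:ideafewc} via Proposition~\ref{prop:auxpmmodpm}(5) for Point~1, and the triple-counting over $(p,d,c)$ for Point~2, is exactly that analogy. Two tiny remarks: the bound $\lev(\modstate(p(m')),q(n))\leq |w|-1$ follows from the definitional fact that a shortest witness for an inequivalent pair has length $\lev(\cdot,\cdot)+1$ rather than from Observation~\ref{obs:eqleveldrop}, and the realizability subtlety you flag (tuples in $Q_\modp$ not of the form $\modstate(p(m))$) is glossed over by the paper itself (``it is obvious''), so your explicit appeal to the first-reset dependence underlying Proposition~\ref{prop:auxpmmodpm}(1)--(2) is, if anything, more careful than the original.
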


\subsection{Eqlevels of pairs of zero 
configurations}\label{sub:elzc}

Let us recall $\LTSext(\A)$ defined in Def.~\ref{def:extTA}.
We could view the elements of $Q_\modp\cup Q_\fixres$ as additional
control states of
$\A$; in these states the counter value would play no role
and could be formally viewed as zero. This observation justifies the
name ``zero configurations'' in the following definition.

\begin{definition}\label{def:elzc}
Given a doca $\A$ as in~(\ref{eq:doca}), 
with the associated $\LTSext(\A)$ by Def.~\ref{def:extTA},
a state $C$ in $\LTSext(\A)$ is 
a \emph{zero configuration} if either 
$C\in Q_\modp\cup Q_\fixres$
or $C=p(0)$ 
where $p\in Q_\stable\cup Q_\reset$.

We define the set $\elzc\subseteq\N$ (Zero configurations Eqlevels) as follows:
\begin{center}
$\elzc=\{\,e\in\N\mid$
there are two stable zero configurations $C,C'$ s.t. 
$C\depictlev{e}C'\}$.
\end{center}
\end{definition}

\noindent
We thus have 
$\elzc=\textsc{E}_1\cup \textsc{E}_2\cup \textsc{E}_3$
where

\smallskip

$\textsc{E}_1=\{\,e\in\N\mid p(0)\depictlev{e}q(0)$ for some
$p,q\in Q_\stable\}$,

$\textsc{E}_2=
\{\,e\in\N\mid p(0)\depictlev{e}C$ for some
$p\in Q_\stable, C\in Q_\modp\}$,

$\textsc{E}_3=
\{\,e\in\N\mid C\depictlev{e}C'$ for some
$C,C'\in Q_\modp\}$.

\smallskip

\noindent
Since the set $\{\,p(0)\mid p\in Q_\stable\}$ is obviously small,
by Prop.~\ref{prop:ideafewc}(2) and~\ref{prop:ideafewcfix}(2)
we easily derive the following claim.

\begin{lemma}\label{prop:fewc}
The set $\elzc$ is small.
\end{lemma}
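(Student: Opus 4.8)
The plan is to show that $\elzc = \textsc{E}_1 \cup \textsc{E}_2 \cup \textsc{E}_3$ is small by arguing that each of the three constituent sets is small, since a finite union of small sets is small. The decomposition is already given in the excerpt, so the task reduces to bounding the three pieces, and two of them are handled almost immediately by the propositions proved earlier.

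First I would dispose of $\textsc{E}_3$. This set collects all finite eqlevels $\lev(C,C')$ for pairs $C,C' \in Q_\modp$, which is exactly the set whose smallness is asserted by Proposition~\ref{prop:ideafewc}(2). So $\textsc{E}_3$ is small with no further work. Next I would handle $\textsc{E}_2$, the eqlevels $\lev(p(0),C)$ for $p \in Q_\stable$ and $C \in Q_\modp$. The key observation is that the set $\{p(0) \mid p \in Q_\stable\}$ is small, since it is indexed by the stable control states and $|Q_\stable| \leq k$. For each fixed $q(n) = p(0)$ in this small set, Proposition~\ref{prop:ideafewcfix}(2) tells us that the set $\{e \mid C \depictlev{e} q(n) \text{ for some } C \in Q_\modp\}$ is small. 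Taking the union over the small collection of configurations $p(0)$, $p \in Q_\stable$, yields $\textsc{E}_2$ as a small union of small sets, hence small.

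Finally, $\textsc{E}_1$ consists of eqlevels $\lev(p(0),q(0))$ with $p,q \in Q_\stable$; but there are at most $|Q_\stable|^2 \leq k^2$ such pairs, so $\textsc{E}_1$ has at most $k^2$ elements and is trivially small. Combining the three bounds, $\elzc$ is a union of three small sets and is therefore small, which is the claim of the lemma.

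I do not anticipate a genuine obstacle here: the real content lives in Propositions~\ref{prop:ideafewc} and~\ref{prop:ideafewcfix}, which already encode the nontrivial modular-periodicity arguments via Proposition~\ref{prop:auxpmmodpm}(5). The only point requiring minor care is the bookkeeping that a \emph{finite} (indeed, small-indexed) union of small sets is small — this is precisely the kind of closure fact the paper's ``small'' convention permits us to invoke, since smallness means cardinality bounded by a fixed polynomial in $k$ and we are taking a union over a small index set of small sets. No new estimate needs to be derived; the proof is essentially a routine assembly of the preceding results.
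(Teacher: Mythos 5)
Your proof is correct and follows essentially the same route as the paper: the paper's own argument is exactly the decomposition $\elzc=\textsc{E}_1\cup\textsc{E}_2\cup\textsc{E}_3$, with $\textsc{E}_3$ handled by Proposition~\ref{prop:ideafewc}(2), $\textsc{E}_2$ by Proposition~\ref{prop:ideafewcfix}(2) together with the smallness of $\{p(0)\mid p\in Q_\stable\}$, and $\textsc{E}_1$ bounded trivially by the number of pairs of stable control states. Your bookkeeping remark about small unions of small sets is exactly the closure property the paper's convention tacitly uses.
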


\noindent
The lemma does not claim that the elements of $\elzc$ are small
numbers. This will be shown
in the following  
subsections; i.e., we will prove
the next theorem
which strengthens Theorem~\ref{th:polywitness}.

\begin{theorem}\label{th:strongpolywitness}
There is a polynomial $\poly:\N\rightarrow\N$ 
such that 
$\max\,\{\,e\mid e\in\elzc\}\leq \poly(k)$ 
(for any doca $\A$ with $k$ control states).
\end{theorem}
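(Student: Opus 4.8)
The plan is to reduce the statement to bounding the length of a shortest witness for a pair of stable zero configurations, and then to bound that length by combining the two principal results proved in Sections~\ref{sub:line-climbing} and~\ref{subsec:gapssmall}. Since $\elzc$ is small, hence finite, by Lemma~\ref{prop:fewc}, the maximum $e^\ast=\max\elzc$ is well defined, and it suffices to show that $e^\ast$ is small. Fix a pair of stable zero configurations $C,C'$ with $C\depictlev{e^\ast}C'$ and a shortest witness $w=a_1a_2\cdots a_{e^\ast+1}$ for $(C,C')$, and trace the induced sequence of pairs $(C_0,C'_0)\gt{a_1}(C_1,C'_1)\gt{a_2}\cdots$, where $(C_0,C'_0)=(C,C')$. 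By Observation~\ref{obs:eqleveldrop}(2) every proper prefix of $w$ is a prefix of a witness, so $\lev(C_i,C'_i)=e^\ast-i$ drops by exactly one at each step; bounding $e^\ast$ is therefore the same as bounding the number of steps in this sequence.

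First I would equip each stable $C_i=p_i(m_i)$ with its \emph{independence level} $\il(p_i(m_i))=\lev(p_i(m_i),\modstate(p_i(m_i)))$, studied in Section~\ref{sub:il}, and likewise for $C'_i$. Two facts drive the analysis. From $p(m)\sim_m\modstate(p(m))$ one gets $\il(p(m))\ge m$, so a large counter forces a large independence level; and from Proposition~\ref{prop:auxpmmodpm}(4) together with Valiant's shortest-positive-path description (Lemma~\ref{lem:valiantpositpath}) the independence level is, up to a small additive term, an affine function of the counter with small rational slope. Consequently, whenever two configurations on the left and right carry the \emph{same} independence level their counters are tied by a linear relation $\alpha m_i-\beta n_i=\gamma$ with small coefficients, as developed in Section~\ref{sub:eq-lines}. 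A maximal stretch of the traced sequence along which both counters grow while this linear relation (and the common independence-level pattern) is preserved is a \emph{line-climbing} segment.

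With this vocabulary the argument closes by combining the two main lemmas. Section~\ref{sub:line-climbing} shows that no line-climbing segment occurring inside a shortest witness can be long. Along such a segment the left and right control states, together with the phases of the short control-state cycles that, by Lemma~\ref{lem:valiantpositpath}, govern the two independence levels and are coupled through the linear relation of Section~\ref{sub:eq-lines}, range over a small set of \emph{types}; a long segment would therefore repeat a type at two positions whose counters differ only by a bounded amount on each side. One could then simultaneously delete the intervening cycles on both sides, again using Lemma~\ref{lem:valiantpositpath}, to obtain a strictly shorter word that still separates $C$ from $C'$, contradicting the minimality of $w$. Section~\ref{subsec:gapssmall} proves the complementary statement: a shortest witness cannot be long without containing a long line-climbing segment, because each step changes a counter by at most one and each reset collapses the counter to its residue data, so a long witness forces a large counter excursion during which both sides climb a common line for many consecutive steps. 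Combining the two, a witness of length exceeding $\poly(k)$ would contain a line-climbing segment of length exceeding $\poly'(k)$, which is impossible; hence $e^\ast+1$, and with it $\max\elzc$, is small.

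I expect the line-climbing lemma of Section~\ref{sub:line-climbing} to be the main obstacle. The delicate point is to make the pumping rigorous: when a type repeats, one must simultaneously remove matching cycles on the two sides so that the resulting pair is still reached by a genuine path, the linear relation is preserved, the residue information needed for all subsequent resets is unchanged, and the deleted block is not needed to reach the distinguishing letter — so that the shortened word remains a witness. Controlling the base equivalence levels $\lev(q(0),\modstate(q(0)))$ that enter the independence-level recursion of Proposition~\ref{prop:auxpmmodpm}(4) is precisely what forces the extremal choice $e^\ast=\max\elzc$: all such base levels lie in $\elzc$ and are thus at most $e^\ast$, which prevents circularity and lets the periodicity bound be expressed purely in terms of $k$.
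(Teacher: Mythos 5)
Your top-level reduction is where the proposal breaks. You invoke Lemma~\ref{prop:fewc} only to make $e^\ast=\max\elzc$ well defined, but the paper uses the \emph{smallness of the cardinality} of $\elzc$ essentially: it orders $\elzc$ as $e_0<e_1<\cdots<e_f$ with $f$ small and $e_0=0$, and reduces the theorem to showing that every gap $e_{i+1}-e_i$ is small (Lemma~\ref{prop:bnotminshort}), so that $e^\ast$ is a sum of few small gaps. You instead try to bound the whole shortest witness from a maximal pair at once, by combining Prop.~\ref{prop:climbshort} with a claim you attribute to Section~\ref{subsec:gapssmall}, namely that ``a shortest witness cannot be long without containing a long line-climbing segment.'' That is a misreading: Section~\ref{subsec:gapssmall} proves no such statement about an arbitrary full witness. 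The mechanism that forces the line structure $\ell_i=r_i<\omega$ (and hence line-climbing) is b-solitarity (Def.~\ref{def:bsolitary}) together with Cor.~\ref{prop:cormineqlevel} applied to the rectangle $(b,\ell,r,o)$, and b-solitarity is extracted from the \emph{gap dichotomy}: inside a gap $(e_D,e_U)$ each $o_j$ lies in $\elzc$ and is hence $\geq e_U$ or $\leq e_D$, and each $\ell_j,r_j$ is $\geq e_U$ or $\leq e_D+\rho_\textsc{m}\cdot j+\sigma_\textsc{m}$, so $b_j=e_U-j$ avoids all of them for a long stretch unless the gap is small. Along a full witness from $e^\ast$ down to $0$, the value $b_j$ sweeps across \emph{every} element of $\elzc$; near those values the pairs need not be b-solitary ($b_j=o_j$ or $b_j=\ell_j$ can occur), nothing places $(m_j,n_j)$ on IL-equality lines there, and your stated reason (``a long witness forces a large counter excursion during which both sides climb a common line'') is not a proof — a large excursion does not by itself force $\ell_j=r_j$. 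The number and length of these non-line stretches is controlled exactly by the gap decomposition you skipped, so your two cited lemmas do not compose into a bound on $e^\ast$.

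Two secondary points. First, even within a single gap, the paper's route to ``usually $\ell_i=r_i<\omega$'' needs machinery absent from your sketch: the mod-periodicity facts (Prop.~\ref{prop:auxpmmodpm}(5), Prop.~\ref{prop:ideafewc}, Prop.~\ref{prop:ideafewcfix}), Prop.~\ref{prop:conseqilpm}(1), the no-repeat property of eqlevel-decreasing paths, and the analysis splitting the path into usual/unusual segments. Second, your pumping sketch for the line-climbing lemma differs from the paper's actual argument, which passes to an auxiliary doca $\B$ with control states $(p,q,L)$, applies Lemma~\ref{lem:valiantpositpath} to $\B$, and in the exit cases does not produce a shorter witness for the original pair directly; instead it derives \emph{strange periodic sequences} (Def.~\ref{def:strangeseq}, Prop.~\ref{prop:strangeshort}) and contradicts Observation~\ref{prop:shortestsubpath}(2). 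You correctly flagged the delicacy of simultaneous cycle-cutting, but the eqlevel-tuple apparatus of Section~\ref{sub:eqltuples} and Cor.~\ref{prop:cormineqlevel} — which is what makes both the strange-periodic and the b-solitary arguments work — does not appear in your proposal, and without it neither half of your plan goes through.
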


\noindent
Let $e_0<e_1<e_2<\dots<e_f$ be the ordered
elements of $\elzc$. We have shown that $f$ is small 
but we have not yet shown that all $e_i$ are small
numbers.
 W.l.o.g. we
can assume $e_0=0$ (by adding two special control states, say). 
For proving Theorem~\ref{th:strongpolywitness}
it thus suffices to show that the
``gaps'' between $e_i$ and $e_{i+1}$, i.e. the differences
$e_{i+1}{-}e_i$, are small. 
We will later contradict
the existence of
a large gap  between $e_i=e_D$ (Down)
and $e_{i+1}=e_U$ (Up) depicted in
Figure~\ref{fig:gapineqlevels}.

\begin{figure}[ht]
\begin{center}
$e_0--e_1-\dots-e_D-----------------e_U-\dots-e_f$
\end{center}
\caption{Assumption of a large gap in $\elzc$ (to be contradicted
later)}\label{fig:gapineqlevels}
\end{figure}

But we first explore some further 
notions related to a given doca
$\A$ and the det-LTS $\LTSext(\A)$.

\subsection{Independence level}\label{sub:il}

We assume a doca $\A$ as in~(\ref{eq:doca}),
and explore a notion which we have already touched on implicitly.
\begin{definition}
For $p\in Q_\stable$, $m\in\N$ we put
\begin{center}
$\il(p(m))=\lev(p(m),\modstate(p(m)))$.
\end{center}
\end{definition}

\noindent
$\il(p(m))$ can be understood
as an ``Independence Level'' of $p(m)$
w.r.t. the 
concrete 
value $m$.

\begin{proposition}\label{prop:indlevel}
For each $p(m)$ with $\il(p(m))<\omega$ there are  
small rational numbers $\rho$, $\sigma$
(of the type $\frac{a}{b}$, $-\frac{a}{b}$ where $a,b\in\N$ are small)
and some $q\in Q_\stable$ 
such that  
\begin{center}
$\il(p(m))=\rho\cdot m +\sigma+\il(q(0))$.
\end{center}
Moreover, we can require
$\rho\geq 0$, $\rho\cdot m+\sigma\geq 0$, and if $m$ is
larger than a small bound then $\rho>0$.
\end{proposition}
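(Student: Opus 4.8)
The plan is to reduce everything to the recursive characterization of $\il$ supplied by Proposition~\ref{prop:auxpmmodpm}(4), namely
\[
\il(p(m))=\min\{z+\il(q(0))\mid q\in Q_\stable,\ z\text{ is the length of a positive path from }p(m)\text{ to }q(0)\},
\]
and then to feed the minimizing path into the structural description of shortest positive paths from Lemma~\ref{lem:valiantpositpath}. Since $\il(p(m))<\omega$, the minimum above is attained, say by a stable state $q$ together with a path length $z$. Moreover, $z$ must be the length of a \emph{shortest} positive path from $p(m)$ to $q(0)$: if a shorter one existed, the pair $(q,z')$ with $z'<z$ would also lie in the minimized set and give a strictly smaller sum $z'+\il(q(0))$, contradicting minimality. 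This $q$ is the one claimed in the statement, and it remains to rewrite $z$ in the form $\rho\cdot m+\sigma$ with small $\rho,\sigma$.

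First I would apply Lemma~\ref{lem:valiantpositpath} to this shortest positive path, obtaining a decomposition
\[
p(m)\gt{u_1}p'(m')\gt{v^i}p'(m'{+}id)\gt{u_2}q(0)
\]
with short pre-phase $u_1$, short post-phase $u_2$, and a short control-state cycle $v$ of length $|v|$ and effect $d$, repeated $i\geq 0$ times. Writing $e_1,e_2$ for the (small) effects of $u_1,u_2$, the counter bookkeeping gives $m'=m+e_1$ and $m'+id+e_2=0$, whence $id=-(m+e_1+e_2)$. When $i>0$ this forces $d<0$ and determines $i=\tfrac{m+e_1+e_2}{|d|}$, so the length is
\[
z=|u_1|+i\,|v|+|u_2|=\tfrac{|v|}{|d|}\cdot m+\Big(|u_1|+|u_2|+\tfrac{|v|}{|d|}(e_1+e_2)\Big).
\]
Setting $\rho=\tfrac{|v|}{|d|}$ and $\sigma=|u_1|+|u_2|+\tfrac{|v|}{|d|}(e_1+e_2)$, both are small rationals of the required shape, since $|v|,|d|\in\{1,\dots,k\}$ and $|u_1|,|u_2|,|e_1|,|e_2|$ are short; this yields $\il(p(m))=\rho\cdot m+\sigma+\il(q(0))$.

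For the ``moreover'' part, $\rho\geq 0$ is immediate, and $\rho\cdot m+\sigma=z\geq 0$ since $z$ is a path length. The dichotomy on $m$ follows directly from the proof of Lemma~\ref{lem:valiantpositpath}: when $m<k^2$ the shortest positive path to $q(0)$ already has no cycle (i.e.\ $i=0$, $v=u_2=\varepsilon$), so $z$ is short and one may take $\rho=0$, $\sigma=z$; when $m\geq k^2$ the cycle must be used and its effect satisfies $d\in\{-1,\dots,-k\}$, giving $\rho=\tfrac{|v|}{|d|}>0$. Hence $\rho>0$ whenever $m$ exceeds the small bound $k^2$.

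The main obstacle I anticipate is not the algebra but the care needed to argue that the minimizing pair $(q,z)$ from Proposition~\ref{prop:auxpmmodpm}(4) genuinely corresponds to a shortest positive path, so that Lemma~\ref{lem:valiantpositpath} is applicable, and the consistent handling of the degenerate case $i=0$ (small $m$) within the stated sign conditions; the rest is bookkeeping to confirm that each ingredient entering $\rho$ and $\sigma$ is small.
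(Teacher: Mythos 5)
Your proposal is correct and takes essentially the same route as the paper's proof: both reduce $\il(p(m))$ via Proposition~\ref{prop:auxpmmodpm}(4) to the length of a shortest positive path from $p(m)$ to some $q(0)$, decompose that path as $u_1v^iu_2$ by Lemma~\ref{lem:valiantpositpath}, and solve the same counter bookkeeping to get $\rho=\frac{|v|}{-d}$ and a small $\sigma$. Your explicit justifications---that the minimizing $z$ must be a shortest path length, and the split into the degenerate $i=0$ (small $m$, $\rho=0$) case versus the $d<0$ case---merely spell out steps the paper leaves implicit (``we can assume $d<0$'', ``all the claims follow easily'').
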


\noindent
\emph{Convention.}
We will further 
assume that each $p(m)$ with  $\il(p(m))<\omega$ has a fixed
associated equality $\il(p(m))=\rho\cdot m +\sigma + e$ 
where $e=\il(q(0))\in\elzc$ and $\rho,\sigma,q$
have the claimed properties.

\begin{proof}
Suppose $\il(p(m))<\omega$. If $m=0$ then we can take $\rho=\sigma=0$
and $q=p$.
If $m>0$ 
then Prop.~\ref{prop:auxpmmodpm}(4) 
implies that there is some $q\in Q_\stable$
such that 
$\il(p(m))=|w|+\il(q(0))$ 
where $p(m)\trans{w}q(0)$ is a shortest
positive path  from $p(m)$ to $q(0)$.
(Recall the path from $C_2$ to $C_3$ in Fig.~\ref{fig:valiant} as an example.)
By Lemma~\ref{lem:valiantpositpath} we
can assume that $w$ is in the form $u_1v^iu_2$, for a short
prefix $u_1$, a short repeated cycle $v$, and a short suffix $u_2$. 
Hence $|w|=i\cdot |v|+|u_1|+|u_2|$, and
$m=i\cdot (-d) -d_1-d_2$ where $d,d_1,d_2$ are the effects of
(i.e. the counter changes caused by) $v,u_1,u_2$, respectively.
We note that  $d_2\leq 0$ and that we can assume
$d<0$.
Since  $i=\frac{m+d_1+d_2}{-d}$, we get
 $|w|=\frac{|v|}{-d}\cdot m+\frac{|v|\cdot(d_1+d_2)}{-d} +
 |u_1|+|u_2|$.
As $\il(p(m))=|w|+\il(q(0))$, all the claims follow easily.
\end{proof}

Figure~\ref{fig:ilpm} depicts $\il(p(m))$
for a fixed $p\in Q_\stable$ and for a few values $m$, by using 
black circles  $\bullet$; $e_1, e_2, e_3$ are elements of $\elzc$
corresponding to $\il(q(0))$ for several $q$.
There might be some ``irregular'' values $\il(p(m))=z+\il(q(0))$ for
small $m$ and small $z$
but for $m$
larger than a small bound the values $\il(p(m))$ lie on few
lines, starting near some $e_j$ and having small slopes. 
(In fact, we have $1\leq\frac{|v|}{|\effect(v)|}\leq k$ for 
the respective cycles $v$ in $w=u_1v^iu_2$; 
the unit-length for the vertical axis is thus smaller than 
for the horizontal axis in Fig.~\ref{fig:ilpm}.)
The circles $\bullet$ and $\circ$ on one depicted line can correspond 
to the pairs 
$(m_0,z_0+\il(q(0)))$, $(m_0+d, z_0+d'+\il(q(0)))$,
$(m_0+2d, z_0+2d'+\il(q(0)))$,
$\dots$ where $d=|\effect(v)|$ and $d'=|v|$
(and 
$z_0=|u_1u_2|$, $z_0+d'=|u_1vu_2|$, 
 $z_0+2d'=|u_1v^2u_2|$, 
$\dots$).
A white circle $\circ$ depicts that the respective value,
corresponding 
to a positive path $p(m_0{+}id)\gt{u_1v^iu_2}q(0)$, 
is not $\il(p(m_0{+}id))$ since there is another, and shorter, witness 
in this case.

\begin{figure}[ht]
\centering
\epsfig{scale=0.2,file=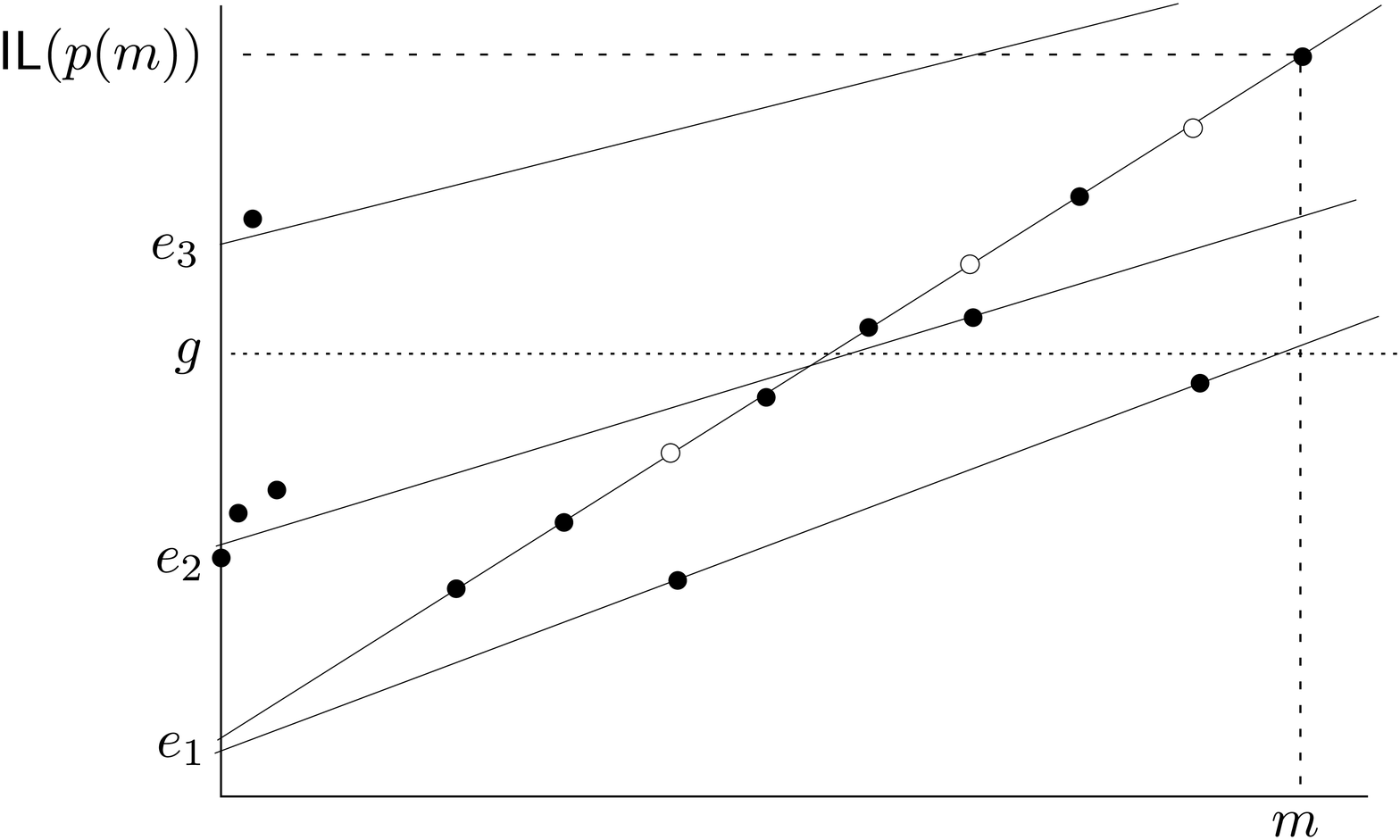}
\caption{Illustrating $\il(p(m))$ as a function of $m$}\label{fig:ilpm}
\end{figure}

We now observe some further facts for later use.

\begin{proposition}\label{prop:conseqilpm}
\hfill
\begin{enumerate}
\item
For each $g\in\N$ there are only few $p(m)$ such that 
$\il(p(m))=g$. 
\item
For any $p(m)$ where $\il(p(m))<\omega$ there are some small
numbers $\base\geq 0$ and $\per>0$ such that the following condition
holds:
\\
for any $m'$ such that $\base\leq m'<m$ and
$m'\equiv m\, (\bmod\ \per)$  
we have $\il(p(m'))<\il(p(m))$.
\end{enumerate}
\end{proposition}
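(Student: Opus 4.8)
The plan is to derive both parts from the linear description of $\il(p(m))$ provided by Proposition~\ref{prop:indlevel}, together with the shortest-positive-path structure of Lemma~\ref{lem:valiantpositpath} and the characterization in Proposition~\ref{prop:auxpmmodpm}(4).

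For Part 1 I would use the fixed associated equality $\il(p(m))=\rho\cdot m+\sigma+e$ (with $e=\il(q(0))\in\elzc$) guaranteed by the Convention following Proposition~\ref{prop:indlevel}. First I would split the pairs $(p,m)$ with $\il(p(m))=g$ according to whether $m$ lies below or above the small bound of Proposition~\ref{prop:indlevel}. Below the bound there are only few values $m$ and few states $p$, hence few such pairs. Above the bound we have $\rho>0$, and I would observe that the number of distinct tuples $(p,\rho,\sigma,e)$ is small: there are at most $k$ states, only few small rationals $\rho,\sigma$ (of the bounded form $\pm\frac{a}{b}$), and $\elzc$ is small by Lemma~\ref{prop:fewc}. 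For each fixed such tuple the equation $\rho m+\sigma+e=g$ has at most one solution $m$, because $\rho>0$ makes $m\mapsto\rho m+\sigma+e$ strictly increasing; hence the assignment $(p,m)\mapsto(p,\rho,\sigma,e)$ is injective on the large pairs with $\il(p(m))=g$. The total count is therefore small, uniformly in $g$.

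For Part 2 I would argue by pumping the repeated cycle down. If $m$ is below a small bound, I simply take $\base=m$ (small) and $\per=1$, so the condition is vacuous. Otherwise I invoke Proposition~\ref{prop:auxpmmodpm}(4) and Lemma~\ref{lem:valiantpositpath} to fix a shortest positive path $p(m)\trans{w}q(0)$ with $w=u_1v^iu_2$ realizing $\il(p(m))=|w|+\il(q(0))$, where $v$ is a short control-state cycle of effect $-d$ with $1\leq d\leq k$, and $u_1,u_2$ are short. I set $\per=d$ and choose $\base$ to be a small bound exceeding both $|u_1|$ and $|\effect(u_1)|+|\effect(u_2)|$. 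For any $m'$ with $\base\leq m'<m$ and $m'\equiv m\,(\bmod\ d)$ we have $m'=m-jd$ for some $j\geq 1$; the bound on $\base$ forces $i-j\geq 0$, so $u_1v^{i-j}u_2$ is enabled from $p(m')$ and reaches $q(0)$. The key observation is that after its short pre-phase $u_1$ the new path arrives at $p'(m'+\effect(u_1))=p'(m+\effect(u_1)-jd)$, which is exactly the configuration the original path occupies after $u_1v^j$; from there both paths follow the same suffix $v^{i-j}u_2$, which is positive in the original, so the tail of the new path is positive verbatim. The short pre-phase stays positive because $m'\geq\base$ absorbs its bounded dips. This positive path from $p(m')$ to $q(0)$ has length $|w|-j|v|$, whence Proposition~\ref{prop:auxpmmodpm}(4) gives $\il(p(m'))\leq |w|-j|v|+\il(q(0))=\il(p(m))-j|v|<\il(p(m))$.

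The main obstacle is precisely the positivity bookkeeping in Part 2: I must ensure that deleting $j$ copies of the cycle yields a genuine positive path, i.e. that the counter never reaches $0$ before the end. This is why $\base$ must be chosen (still small) large enough to dominate the excursions of the short pre-phase, and why I rely on the cycle- and post-phase being inherited from the original positive path rather than re-verifying them. Everything else is routine: the smallness of $\per=d$ and of $\base$ follows from the shortness guarantees of Lemma~\ref{lem:valiantpositpath}, and the smallness claims underlying Part 1 follow from Lemma~\ref{prop:fewc} together with the bounded form of $\rho$ and $\sigma$.
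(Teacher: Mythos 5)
Your proposal is correct and follows essentially the same route as the paper: Part~1 counts solutions of $g=\rho\cdot m+\sigma+e$ over the few possible tuples $(\rho,\sigma,e)$ from the Convention after Prop.~\ref{prop:indlevel}, and Part~2 pumps down the short cycle $v$ in the shortest positive path $p(m)\gt{u_1v^iu_2}q(0)$ with $\per=|\effect(v)|$, concluding via Prop.~\ref{prop:auxpmmodpm}(4). Your more explicit positivity bookkeeping (matching the tail of the shortened path with a suffix of the original, and choosing $\base$ to dominate the pre-phase excursions) is just a careful spelling-out of the paper's choice $\base=|u_1|+|u_2|+|v|$ ``to be safe''.
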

\begin{proof}
Point 1 is intuitively clear from the horizontal line at level $g$ 
in Fig.~\ref{fig:ilpm}.
Formally,  we look
when we can have $g=\rho\cdot m + \sigma +e$ where 
$\il(p(m))=\rho\cdot m + \sigma +e$ is the equality associated with
some $p(m)$ (by Convention after Prop.~\ref{prop:indlevel}).
Since there are only few possibilities for $\rho,\sigma,e$, and we
can have $\rho=0$ only for few (small) values $m$, there are only few
possible $m$ which might fit. 

Point 2 has been intuitively shown by the line with 
black and white circles in Fig.~\ref{fig:ilpm} and by the respective
discussion. To be more formal, we recall 
that $\il(p(m))=|w|+\il(q(0))$ for some $q\in Q_\stable$
and a shortest positive path $p(m)\gt{w}q(0)$ from $p(m)$ to $q(0)$.
We assume
$w=u_1v^iu_2$ for short $u_1,v,u_2$
as in 
the proof of Prop.~\ref{prop:indlevel}.
Hence if $m$ is bigger than a small bound then $i>0$.
Let $\per=|\effect(v)|$.
Then we have
\begin{quote}
$p(m-\per)\xrightarrow{u_1v^{i-1}u_2}q(0)$,

$p(m-2\cdot\per)\xrightarrow{u_1v^{i-2}u_2}q(0)$,

$\dots$,

$p(m-x\cdot\per)\xrightarrow{u_1v^{i-x}u_2}q(0)$
\end{quote}
for $x=(m{-}\base)\div\per$ where we put
$\base=  |u_1|{+}|u_2|{+}|v|$ to be safe, i.e. to guarantee that
$u_1v^{i-j}u_2$ is indeed enabled in $p(m-j\cdot\per)$,
for all $j=1,2,\dots,x$.
Since $p(m-j\cdot\per)\xrightarrow{u_1v^{i-j}u_2} q(0)$ 
is a positive path,
we have $\il(p(m-j\cdot\per))\leq
|u_1v^{i-j}u_2|+\il(q(0))$ by Prop.~\ref{prop:auxpmmodpm}(4).
Since 
$|u_1v^{i-j}u_2|+\il(q(0))<|w|+\il(q(0))=\il(p(m)$, we are done.
\end{proof}

\subsection{Eqlevel tuples}\label{sub:eqltuples}

We introduce the eqlevel tuples illustrated in
Fig.~\ref{fig:quadruple}, assuming a given doca $\A$
as in~(\ref{eq:doca}), with the associated det-LTSs $\T(\A)$ and $\LTSext(\A)$.
A simple property of these tuples considerably simplifies the later
analysis. 

\begin{figure}[ht]
\centering
\epsfig{scale=0.4,file=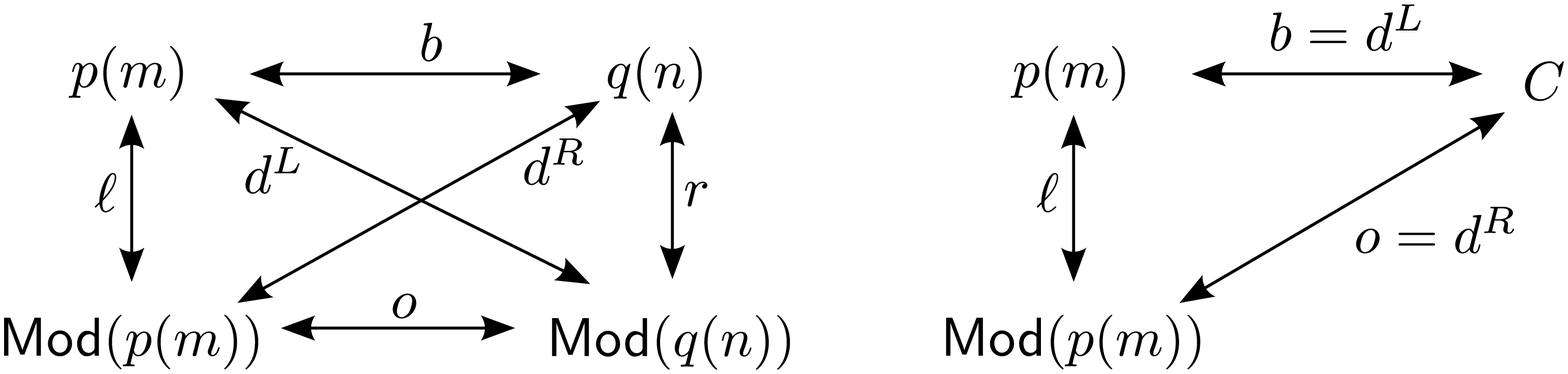}
\caption{
Eqlevel tuple $(b,\ell,r,o,d^L,d^R)$
associated to a pair $(p(m),q(n))$,
and to a pair $(p(m),C)$ where $C\in Q_\modp$}\label{fig:quadruple}
\end{figure}

\begin{definition}\label{def:quadruple}
Each pair $(p(m),q(n))$
of stable configurations in $\T(\A)$
has the \emph{associated eqlevel tuple}
$(b,\ell,r,o,d^L,d^R)$ (of elements from $\N\cup\{\omega\}$) defined
as follows:
\begin{itemize}
\item $b=\lev(p(m), q(n))$ (Basic),
\item
$\ell=\il(p(m))$ (Left), 
\item
$r=\il(q(n))$ (Right),
\item
$o=\lev(\modstate(p(m)), \modstate(q(n))$ (mOd),
\item
$d^L=\lev(p(m), \modstate(q(n))$ (Diagonal Left),
\item
$d^R=\lev(q(n), \modstate(p(m))$ (Diagonal Right).
\end{itemize}
Each pair $(p(m),C)$ where $C\in Q_\modp$ and 
$p(m)$ is a stable configuration in
$\T(\A)$ has    the \emph{associated eqlevel tuple}
$(b,\ell,r,o,d^L,d^R)$ defined as follows:
\begin{itemize}
\item  $b=d^L=\lev(p(m), C)$,
\item $\ell=\il(p(m))$, 
\item $r=\omega$, 
\item $o=d^R=\lev(\modstate(p(m)), C)$.
\end{itemize}
\end{definition}

\medskip
We could similarly associate a tuple to $(C,q(n))$ but this is not
needed in later reasoning.
The following trivial fact 
yields an important corollary for the eqlevel tuples; it holds for
general LTSs but we confine ourselves to the introduced det-LTSs.

\begin{proposition}\label{prop:mineqlevel}
Given states $s_1,s_2,\dots,s_m$
in a  det-LTS where  $m\geq 2$
and $s_1\depictlev{e_1}s_2$, $s_2\depictlev{e_2}s_3$, $\dots$,
$s_{m-1}\depictlev{e_{m-1}}s_m$, $s_m\depictlev{e_m}s_1$, 
the minimum of $\{e_1,e_2,\dots,e_m\}$ cannot be $e_i$ for just
one $i$.
\end{proposition}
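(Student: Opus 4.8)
The plan is to prove the statement by a cyclic triangle-inequality argument for the eqlevel, treating the states $s_1, s_2, \dots, s_m$ as arranged around a cycle. The key observation I would isolate first is a \emph{triangle inequality} for $\lev$ in a deterministic LTS: for any three states $s, t, u$ we have $\lev(s,u) \geq \min\{\lev(s,t), \lev(t,u)\}$. This follows because if $w$ is a shortest witness for $(s,u)$, then $w$ distinguishes $s$ from $u$, so $w$ must be enabled in exactly one of them; comparing each of $s, u$ against $t$, the word $w$ cannot simultaneously agree with $t$ on both sides (one of $s, u$ agrees with $t$ on $w$ and the other disagrees), so $w$ is a witness for at least one of the pairs $(s,t)$ or $(t,u)$, giving $\lev(s,t) \leq |w|-1 = \lev(s,u)$ or $\lev(t,u) \leq \lev(s,u)$. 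Hence $\lev(s,u) \geq \min\{\lev(s,t),\lev(t,u)\}$. I would state and prove this auxiliary inequality (for finite eqlevels, with the convention $\lev = \omega$ absorbing trivially).

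The main argument then proceeds by contradiction. Suppose the minimum $\mu$ of $\{e_1, \dots, e_m\}$ is attained at exactly one index $i$, say $e_i < e_j$ for all $j \neq i$. The edge $s_i \depictlev{e_i} s_{i+1}$ (indices taken cyclically, so $s_{m+1} = s_1$) is the unique cheapest edge. Now consider the two states $s_i$ and $s_{i+1}$: besides the direct edge of eqlevel $e_i$, they are also connected by the \emph{complementary path} going the other way around the cycle, namely $s_{i+1} \depictlev{e_{i+1}} s_{i+2} \depictlev{} \cdots \depictlev{} s_i$, all of whose edges have eqlevel strictly greater than $e_i$. Applying the triangle inequality iteratively along this complementary path yields $\lev(s_{i+1}, s_i) \geq \min\{e_{i+1}, e_{i+2}, \dots, e_{i-1}\} > e_i$, since every edge on that path is strictly larger than the unique minimum $e_i$. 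But the direct edge gives $\lev(s_i, s_{i+1}) = e_i$, and eqlevel is symmetric, so $\lev(s_{i+1}, s_i) = e_i$. This contradicts $\lev(s_{i+1}, s_i) > e_i$, completing the proof.

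The one point requiring a little care, rather than a genuine obstacle, is the iterated application of the triangle inequality along the complementary path and the bookkeeping of cyclic indices; I would phrase the induction cleanly as: for any path $t_0 \depictlev{f_1} t_1 \depictlev{f_2} \cdots \depictlev{f_r} t_r$ we have $\lev(t_0, t_r) \geq \min\{f_1, \dots, f_r\}$, proved by induction on $r$ using the base triangle inequality. The $\omega$-valued edges are harmless: an edge $s_j \sim s_{j+1}$ contributes $f = \omega$ and never lowers the minimum, and the degenerate case $m = 2$ (where the two edges $e_1, e_2$ connect the same pair in both directions, forcing $e_1 = e_2$ by symmetry) already shows the minimum cannot be attained at a single index. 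Thus the heart of the argument is entirely the deterministic triangle inequality, and everything else is routine traversal of the cycle.
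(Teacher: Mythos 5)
Your proof is correct and is essentially the paper's own argument in different notation: your triangle inequality $\lev(s,u)\geq\min\{\lev(s,t),\lev(t,u)\}$ is precisely transitivity (together with symmetry) of the equivalences $\sim_j$, and chaining it along the complementary path around the cycle is exactly the paper's chaining $s_2\sim_{e_1+1}s_3\sim_{e_1+1}\cdots\sim_{e_1+1}s_m\sim_{e_1+1}s_1$ to contradict $s_1\depictlev{e_1}s_2$. (One minor remark: your witness-based proof of the triangle inequality never actually uses determinism, consistent with the paper's observation that the proposition holds for general LTSs.)
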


\begin{proof}
We assume by contradiction that $\min\{e_1,\ldots,e_m\}=e_i$ for just one
$i\in\{1,\ldots,m\}$; w.l.o.g. we assume
$i=1$, and we note that $e_1<\omega$ (since $m\geq 2$).
Then we have $s_2\sim_{e_1+1}s_3\sim_{e_1+1}s_4\,\cdots\sim_{e_1+1} s_m\sim_{e_1+1} s_1$
and hence $s_1\sim_{e_1+1}s_2$ by transitivity and symmetry of
$\sim_{e_1+1}$; this contradicts the assumption
$s_1\depictlev{e_1}s_2$.
\end{proof}

\begin{corollary}\label{prop:cormineqlevel}
In the ``triangle'' $(b,\ell,d^R)$, 
we always have $b=\ell$ or $b=d^R$ or $\ell=d^R$ (or $b=\ell=d^R$) as
the minimum. Similarly for the ``triangles'' $(d^R,r,o)$,
$(b,d^L,r)$, and $(\ell,d^L,o)$. In the ``rectangle''
$(b,\ell,r,o)$, the minimum is also achieved by at least two elements
(concretely by $b=\ell$, $b=r$, $b=o$, $\ell=r$, $\ell=o$, or $r=o$).
\end{corollary}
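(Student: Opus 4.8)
The plan is to observe that the six entries of an eqlevel tuple are exactly the six pairwise eqlevels among the four states $A=p(m)$, $B=q(n)$, $C=\modstate(p(m))$, $D=\modstate(q(n))$ of $\LTSext(\A)$. Indeed $b=\lev(A,B)$, $\ell=\lev(A,C)$, $r=\lev(B,D)$, $o=\lev(C,D)$, $d^L=\lev(A,D)$, and $d^R=\lev(B,C)$, so the tuple is precisely the edge-labelling of the tetrahedron (complete graph) on $\{A,B,C,D\}$. Under this dictionary each ``triangle'' named in the statement is a face of the tetrahedron and the ``rectangle'' is a $4$-cycle through all four vertices, so that every assertion becomes a direct instance of Proposition~\ref{prop:mineqlevel}.

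First I would verify the four triangles one by one. The triple $(b,\ell,d^R)$ is the $3$-cycle $A\depictlev{b}B\depictlev{d^R}C\depictlev{\ell}A$; likewise $(d^R,r,o)$ is the cycle on $B,C,D$, the triple $(b,d^L,r)$ is the cycle on $A,B,D$, and $(\ell,d^L,o)$ is the cycle on $A,C,D$. Here I use that $\lev$ is symmetric, which is immediate from the definition of each $\sim_i$, so a label may be read along an edge in either direction. Applying Proposition~\ref{prop:mineqlevel} with $m=3$ to each of these cycles gives that the minimum of its three labels is attained by at least two of them, which is exactly the stated alternative, e.g.\ $b=\ell$ or $b=d^R$ or $\ell=d^R$.

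For the rectangle the only point needing care is the cyclic order, since $b,\ell,r,o$ are not listed cyclically. Reordering the vertices to $A,B,D,C$ produces the $4$-cycle $A\depictlev{b}B\depictlev{r}D\depictlev{o}C\depictlev{\ell}A$, whose four labels are precisely $b,r,o,\ell$. Proposition~\ref{prop:mineqlevel} with $m=4$ then yields that the minimum of $\{b,\ell,r,o\}$ is attained by at least two entries, that is, by one of the six pairs listed.

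I expect no real obstacle here: the whole content is carried by Proposition~\ref{prop:mineqlevel}, and the only bookkeeping is matching each entry to the pair of states it joins. The argument is uniform across both cases of Definition~\ref{def:quadruple}: in the second case $q(n)$ is absent and the given $Q_\modp$-element plays the role of both $B$ and $D$, so $B=D$, whence $r=\omega$, $b=d^L$, and $o=d^R$; the triangles and the rectangle then degenerate, but Proposition~\ref{prop:mineqlevel} still applies verbatim, since $m\geq 2$ forces the minimum along any such cycle to be finite once it is uniquely attained.
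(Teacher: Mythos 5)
Your proposal is correct and coincides with the paper's own (implicit) argument: the corollary is stated there as a direct consequence of Proposition~\ref{prop:mineqlevel}, applied exactly as you do to the four triangles and the $4$-cycle $A\depictlev{b}B\depictlev{r}D\depictlev{o}C\depictlev{\ell}A$ on the states $p(m)$, $q(n)$, $\modstate(p(m))$, $\modstate(q(n))$. Your edge-matching, the reordering for the rectangle, and the degenerate case $B=D$ (when the second component lies in $Q_\modp$, giving $r=\omega$, $b=d^L$, $o=d^R$) supply precisely the bookkeeping the paper leaves to the reader.
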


\subsection{Paths in {\large \bf $\T(\A)\times\T(\A)$}}\label{sub:paths in tt}

Since we are interested in comparing two states in a det-LTS $\T$,
it is useful to define the product $\T\times \T$; 
the transitions in $\T\times\T$ are just the 
letter-synchronized pairs of transitions in $\T$.
Eqlevel-decreasing
paths in $\T\times \T$ will be of particular interest.
A formal definition follows.

\begin{definition}
Let
$\T=(S_\stable, S_\unstable,\Sigma,
(\trule{a})_{a\in\Sigma},\trule{\varepsilon})$ be a det-LTS.
We define the det-LTS 
\begin{center}
$\T\times\T=(S_\stable\times S_\stable,
S'_\unstable, 
\Sigma,
(\trule{a})_{a\in\Sigma},\trule{\varepsilon})$
\end{center}
where 
$S'_\unstable=(S_\stable\times S_\unstable)
\cup (S_\unstable\times S_\stable)
\cup (S_\unstable\times S_\unstable)$
and the transitions are defined as follows:
\begin{enumerate}
\item
If $s,t\in S_\stable$ and 
$s\trule{a}s'$ and $t\trule{a}t'$ (for $a\in\Sigma$) then
$(s,t)\trule{a}(s',t')$.
\item
If $s\in S_\stable$, $t\in S_\unstable$,
and $t\trule{\varepsilon}t'$ then  
$(s,t)\trule{\varepsilon}(s,t')$.
\item
If $s\in S_\unstable$, $t\in S_\stable$,
and $s\trule{\varepsilon}s'$ then  
$(s,t)\trule{\varepsilon}(s',t)$.
\item
If 
$s\trule{\varepsilon}s'$ and 
$t\trule{\varepsilon}t'$
then  
$(s,t)\trule{\varepsilon}(s',t')$.
\end{enumerate}

\medskip

\noindent
A \emph{path}
$(s_0,s'_0)\gt{a_1}(s_1,s'_1)\gt{a_2}(s_2,s'_2)$ $\,\cdots$ $\gt{a_z} (s_z,s'_z)$
in $\T\times\T$  (where $(s_i,s'_i)\in S_\stable\times S_\stable$ by
Def.~\ref{def:paths})
is \emph{eqlevel-decreasing}
if $\lev(s_i,s'_i)>\lev(s_{i+1},s'_{i+1})$ for all
$i\in\{0,1,,\dots,z{-}1\}$.
\end{definition}

\medskip

We can easily verify that $\T\times\T$ is indeed a det-LTS.
We also note that in eqlevel-decreasing paths we must have
$\lev(s_{i+1},s'_{i+1})=\lev(s_i,s'_i)-1$, by
Observation~\ref{obs:eqleveldrop}.
We also observe:

\begin{observation}\label{prop:shortestsubpath}
\hfill
\begin{enumerate}
\item
Any subpath of an eqlevel-decreasing path
in $\T\times \T$ is a shortest
path from its start to its end.
\item
Suppose the path $(s,t)\trans{w}(s',t')$
is eqlevel-decreasing.
If $(s,t)\trans{v}(s'',t'')$ where $|v|<|w|$ then
$\lev(s'',t'')>\lev(s',t')$.
\end{enumerate}
\end{observation}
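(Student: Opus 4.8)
The plan is to prove both parts together, since Part 2 is essentially a contrapositive reformulation of Part 1. The key facts I would lean on are the determinism of $\T$ (so that a word $w$ enabled in a state $(s,t)$ determines a unique run), Observation~\ref{obs:eqleveldrop}, and the characterization of eqlevel-decreasing paths just noted: along such a path the eqlevel drops by exactly $1$ at each step, so that $\lev(s_i,s_i')=\lev(s_0,s_0')-i$ for all $i$.

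For Part 1, consider an eqlevel-decreasing path $(s_0,s_0')\gt{a_1}\cdots\gt{a_z}(s_z,s_z')$ and fix indices $0\leq i_1\leq i_2\leq z$; write $w=a_{i_1+1}\cdots a_{i_2}$ for the connecting word, so $(s_{i_1},s_{i_1}')\gt{w}(s_{i_2},s_{i_2}')$ and $|w|=i_2-i_1$. By the drop-by-one property we have $\lev(s_{i_1},s_{i_1}')-\lev(s_{i_2},s_{i_2}')=i_2-i_1=|w|$. Now suppose toward a contradiction that some strictly shorter word $v$ with $|v|<|w|$ also satisfies $(s_{i_1},s_{i_1}')\gt{v}(s_{i_2},s_{i_2}')$ in $\T\times\T$. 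Applying Observation~\ref{obs:eqleveldrop}(1) to the two coordinate runs $s_{i_1}\gt{v}s_{i_2}$ and $s_{i_1}'\gt{v}s_{i_2}'$ inside $\T$, I get $\lev(s_{i_2},s_{i_2}')\geq\lev(s_{i_1},s_{i_1}')-|v|>\lev(s_{i_1},s_{i_1}')-|w|=\lev(s_{i_2},s_{i_2}')$, which is absurd. Hence no shorter word connects the two endpoints, i.e.\ the subpath is a shortest path from its start to its end.

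Part 2 is the same computation packaged differently. Let $(s,t)\gt{w}(s',t')$ be eqlevel-decreasing, so $\lev(s',t')=\lev(s,t)-|w|$, and suppose $(s,t)\gt{v}(s'',t'')$ with $|v|<|w|$. Applying Observation~\ref{obs:eqleveldrop}(1) coordinatewise to $s\gt{v}s''$ and $t\gt{v}t''$ gives $\lev(s'',t'')\geq\lev(s,t)-|v|$. Since $|v|<|w|$ we have $\lev(s,t)-|v|>\lev(s,t)-|w|=\lev(s',t')$, and chaining the two inequalities yields $\lev(s'',t'')>\lev(s',t')$, as required.

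I do not anticipate a genuine obstacle here: the statement is a direct bookkeeping consequence of the fact that the eqlevel can drop by at most $1$ per letter (Observation~\ref{obs:eqleveldrop}(1)) together with the exact drop-by-$1$ behaviour that defines eqlevel-decreasing paths. The only point that deserves a moment's care is that Observation~\ref{obs:eqleveldrop} is stated for a det-LTS, so I would note explicitly that it applies to each of the two coordinate systems $\T$ (rather than to $\T\times\T$ directly), which is harmless since the product path projects to two genuine paths in $\T$ reading the same word $v$. One should also record that a shorter run $(s,t)\gt{v}(s'',t'')$ in $\T\times\T$ really does induce runs $s\gt{v}s''$ and $t\gt{v}t''$ in $\T$ with the \emph{same} label $v$, which is immediate from the definition of the product transitions.
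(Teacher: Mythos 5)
Your proof is correct and matches the paper's intent exactly: the paper states this as an Observation without proof, immediately after noting that along eqlevel-decreasing paths the eqlevel drops by exactly $1$ per step, and your argument is precisely this drop-by-one bookkeeping combined with the coordinatewise application of Observation~\ref{obs:eqleveldrop}(1). Your closing remarks about projecting a product run to two runs in $\T$ reading the same word are the right (and only) points needing care, so nothing is missing.
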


We now look at $\T(\A)\times\T(\A)$ for a doca $\A$.
\begin{definition}
We call
$(p(m),q(n))\gt{a}(p'(m'),q'(n'))$ 
a  \emph{reset step} (in $\T(\A)\times\T(\A)$) if
at least one of component-steps $p(m)\gt{a}p'(m')$,
 $q(n)\gt{a}q'(n')$ is  a reset step in $\T(\A)$.
If precisely one of component-steps is a reset step then 
$(p(m),q(n))\gt{a}(p'(m'),q'(n'))$ is 
a \emph{one-side reset step},
if both component-steps are reset steps then 
$(p(m),q(n))\gt{a}(p'(m'),q'(n'))$ is 
a \emph{both-side reset step}.
\end{definition}

We note that one of $m',n'$ is $0$ when $(p(m),q(n))\gt{a}(p'(m'),q'(n'))$
is a one-side reset step, and  $m'=n'=0$ when it is a both-side reset
step.

\begin{figure}[ht]
\centering
\epsfig{scale=0.17,file=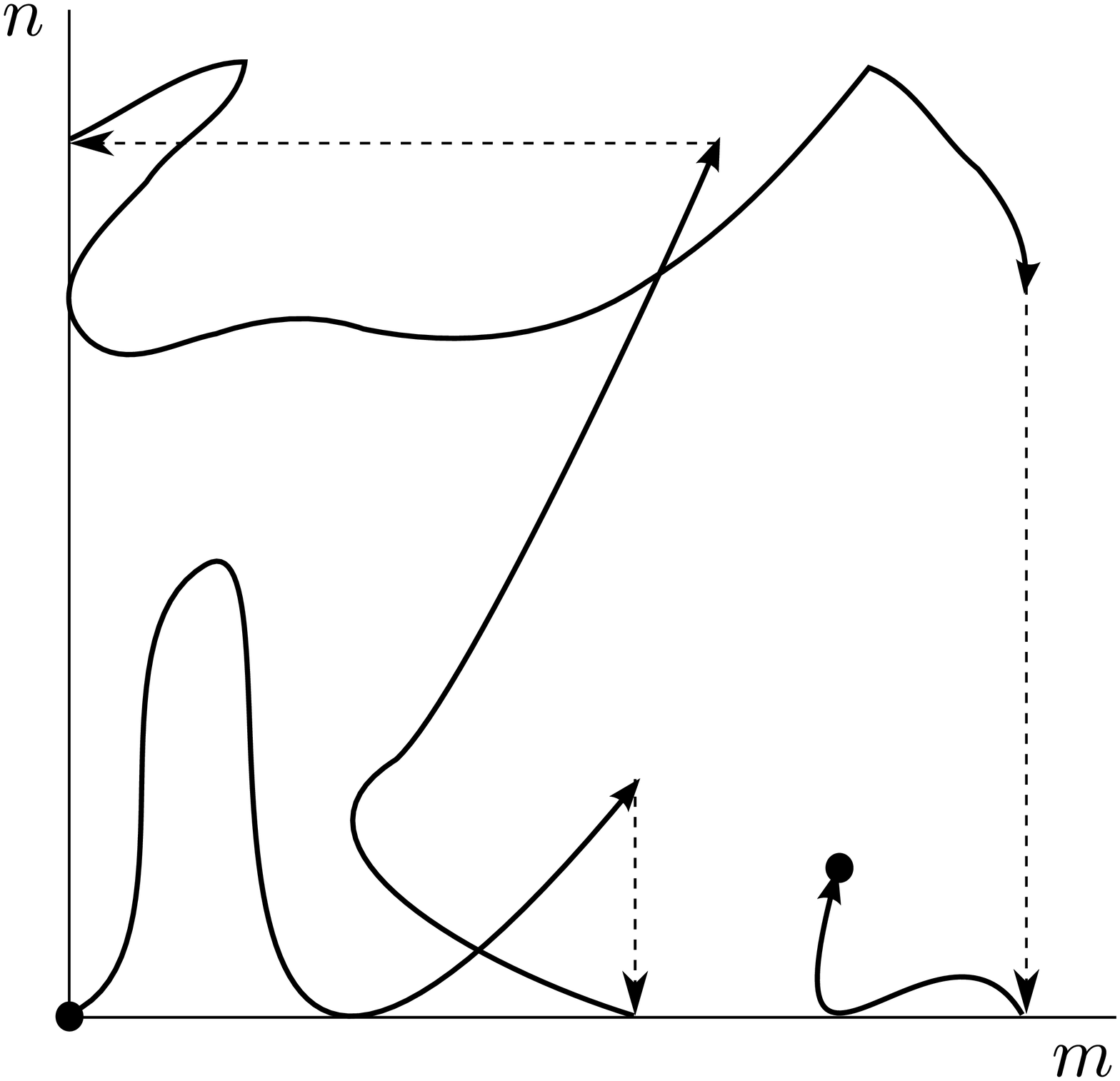}
\caption{A path from $(p(0),q(0))$
in $\T(\A)\times\T(\A)$
(with some one-side resets),
projected to $\N\times\N$.}\label{fig:examplepathinTATA}
\end{figure}

Fig.~\ref{fig:examplepathinTATA} shows an example
of a path $\T(\A)\times\T(\A)$,
projected to  $\N\times\N$ 
(a pair $(p(m),q(n))$ is projected to $(m,n)$);
the dotted lines
represent one-side reset steps.
Theorem~\ref{th:polywitness} claims, in fact,
that the eqlevel-decreasing paths in $\T(\A)\times\T(\A)$
which start from pairs of zero configurations are short.

\subsection{$\il$-equality lines}\label{sub:eq-lines}

We assume a fixed doca $\A$, and
consider the cases $\il(p(m))=\il(q(n))<\omega$
(i.e., $\ell=r<\omega$ in Fig.~\ref{fig:quadruple});
we explore  
what we can say about the respective points $(m,n)\in\N\times\N$.
By Convention after Prop.~\ref{prop:indlevel}, each such case
has the associated equalities
$\il(p(m))=\rho\cdot m+\sigma+e$ and 
$\il(q(n))=\rho'\cdot n+\sigma'+e'$, and 
 $\il(p(m))=\il(q(n))$ thus implies
$\rho\cdot m+\sigma+e=\rho'\cdot n + \sigma' + e'$.

Only in few cases we have $\rho=0$ or $\rho'=0$
(which is clear by Prop.~\ref{prop:indlevel} and
Prop.~\ref{prop:conseqilpm}(1));
in the other (many) cases
we have 
$n=\frac{\rho}{\rho'}m+\frac{(\sigma{-}\sigma')+(e-e')}{\rho'}$ 
where $\frac{\rho}{\rho'}>0$.
This naturally leads to the following notions
(illustrated in Fig.~\ref{fig:lines}).

\begin{figure}[ht]
\centering
\epsfig{scale=0.12,file=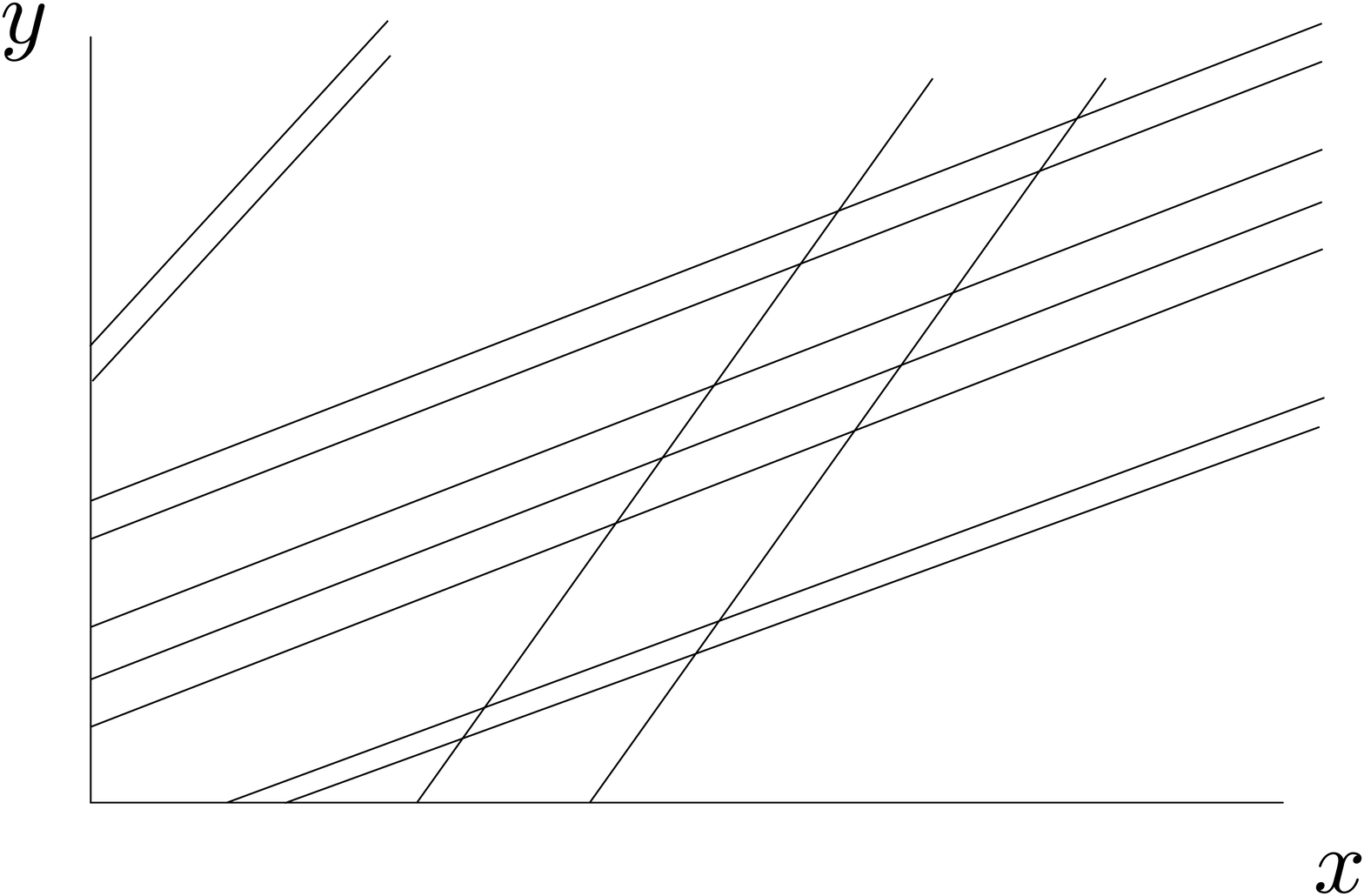}
\caption{
A sketch of $\il$-equality lines (in reality, lines contain only
points with integer coordinates)}\label{fig:lines}
\end{figure}

\begin{definition}
A pair $(\mu,\tau)$ of rational numbers is a \emph{valid
slope-shift pair} if there are some $p(m)$, $q(n)$ with the
associated equalities 
$\il(p(m))=\rho\cdot m+\sigma+e$ and 
$\il(q(n))=\rho'\cdot n+\sigma'+e'$ such that 
$\rho\cdot m+\sigma+e=\rho'\cdot n + \sigma' + e'$,
$\rho>0$, $\rho'>0$, $\mu=\frac{\rho}{\rho'}$, 
$\tau=\frac{(\sigma{-}\sigma')+(e-e')}{\rho'}$. 

Each valid slope-shift pair  $(\mu,\tau)$ defines an
\emph{$\il$-equality line}, or just a \emph{line} for short,
namely the set $\{(x,y)\in\N\times\N\mid y=\mu\cdot x +\tau\}$.

Any maximal set of parallel lines (having the same slope but
various shifts)
is a \emph{line-bunch}. (The maximality is taken w.r.t. set inclusion.)
We say that $(x,y)\in \N\times\N$ is in a line-bunch $H$ if $(x,y)$
is in a line in $H$.
\end{definition}
Though each line
contains at least one $(m,n)$
such that $\il(p(m))=\il(q(n))<\omega$ for some $p,q$, the definition
does not assume anything more specific about lines.
The line-bunches can have various ``gaps'', and if a point
$(x,y)$ is not in a line-bunch $H$ then it can still
lie between two lines
from $H$.
The following proposition is easy to verify. 

\begin{proposition}\label{prop:fewlines}
\hfill
\begin{enumerate}
\item
There are only few lines, and thus also few line-bunches. 
\\
The set $\{(x,y)\in\N\times\N\mid (x,y)\in L_1\cap
L_2$ for two different lines $L_1, L_2\}$ is small.
\item
There are only few pairs $(p(m),q(n))$ where 
$\il(p(m))=\il(q(n))<\omega$ and $(m,n)$ is not in a line.
\end{enumerate}
\end{proposition}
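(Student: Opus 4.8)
The plan is to prove Proposition~\ref{prop:fewlines} by unpacking the definitions of lines and line-bunches and reducing everything to counting arguments over the small parameters $\rho,\sigma,\rho',\sigma',e,e'$ that appear in the associated equalities from Proposition~\ref{prop:indlevel}.

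For Point~1, recall that each valid slope-shift pair $(\mu,\tau)$ has the form $\mu=\frac{\rho}{\rho'}$ and $\tau=\frac{(\sigma-\sigma')+(e-e')}{\rho'}$, where $\rho,\sigma,\rho',\sigma'$ are small rationals and $e,e'\in\elzc$. First I would note that the quantities $\rho,\rho',\sigma,\sigma'$ range over small sets (by Proposition~\ref{prop:indlevel} there are only few possible slopes and shifts in the associated equalities), and $e,e'\in\elzc$ range over the small set $\elzc$ by Lemma~\ref{prop:fewc}. Since $\mu$ and $\tau$ are each determined by these few parameters, and a product/quotient/sum of small quantities is small by our convention, there are only few distinct values of $(\mu,\tau)$; hence only few lines, and consequently few line-bunches (each bunch being a maximal set of parallel lines, so their number is bounded by the number of distinct slopes $\mu$, which is few). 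For the intersection claim, two distinct lines $L_1: y=\mu_1 x+\tau_1$ and $L_2: y=\mu_2 x+\tau_2$ meet in at most one point of $\N\times\N$, and that point (when it exists) is a solution of a linear system with small coefficients, hence has small coordinates. Since there are few pairs $(L_1,L_2)$ and each contributes at most one point, the union of all such intersection points is a small set.

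For Point~2, the goal is to bound the pairs $(p(m),q(n))$ with $\il(p(m))=\il(q(n))<\omega$ for which $(m,n)$ lies on \emph{no} line. By the discussion preceding the definition, whenever $\il(p(m))=\il(q(n))<\omega$ the associated equalities give $\rho\cdot m+\sigma+e=\rho'\cdot n+\sigma'+e'$. If both $\rho>0$ and $\rho'>0$, then this is exactly the equation of the $\il$-equality line determined by the valid slope-shift pair $(\mu,\tau)$, so $(m,n)$ \emph{is} on a line; thus the only way to avoid all lines is to have $\rho=0$ or $\rho'=0$. By Proposition~\ref{prop:indlevel} we can have $\rho=0$ (respectively $\rho'=0$) only when $m$ (respectively $n$) is below a small bound, and by Proposition~\ref{prop:conseqilpm}(1) each fixed value $g=\il(p(m))$ is attained by only few configurations $p(m)$. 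I would therefore argue that the off-line pairs arise only from the small set of $m$ (or $n$) below the small bound, and combined with the finiteness of the control states $Q_\stable$, this yields only few such pairs.

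The step I expect to require the most care is Point~2, specifically making the case analysis on $\rho=0$ versus $\rho'=0$ airtight: one must confirm that the associated-equality convention genuinely forces $(m,n)$ onto a line whenever both slopes are positive, and that the degenerate cases really are confined to small $m$ or small $n$ via Proposition~\ref{prop:indlevel}. Point~1 is essentially bookkeeping over small parameter sets together with the single-intersection-point fact for distinct lines, so the main obstacle is ensuring that the "few parameters $\Rightarrow$ few lines $\Rightarrow$ small intersection set" chain is applied with the correct smallness bounds rather than any subtle counting error.
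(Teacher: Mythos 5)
Your proposal is essentially the argument the paper intends: the paper dismisses this proposition as ``easy to verify,'' and the verification it has in mind is exactly your counting over the small parameter space (few possible $(\rho,\sigma,e,\rho',\sigma',e')$ by Proposition~\ref{prop:indlevel} and Lemma~\ref{prop:fewc}, hence few valid slope-shift pairs), the at-most-one-intersection-point fact for distinct lines, and, for Point~2, the observation stated just before the definition that $\rho=0$ or $\rho'=0$ occurs only in few cases, via Proposition~\ref{prop:indlevel} and Proposition~\ref{prop:conseqilpm}(1).

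One caution: your side claim in Point~1 that the intersection point of two distinct lines ``has small coordinates'' is unjustified at this stage and should be dropped. The shift $\tau=\frac{(\sigma-\sigma')+(e-e')}{\rho'}$ involves $e,e'\in\elzc$, and at this point of the paper only the \emph{cardinality} of $\elzc$ is known to be small (Lemma~\ref{prop:fewc}); that its \emph{elements} are small numbers is precisely Theorem~\ref{th:strongpolywitness}, proved only later, so invoking it here would be circular. Fortunately the proposition asserts only that the intersection \emph{set} is small, i.e.\ of small cardinality, which your correct count (few pairs of lines, at most one point per pair) already gives; the coordinates of those points play no role.
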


\subsection{Eqlevel-decreasing line-climbing 
paths are short}\label{sub:line-climbing}

We recall Fig.~\ref{fig:gapineqlevels} 
which assumes a large gap $e_U{-}e_D$; 
to finish a proof of
Theorem~\ref{th:strongpolywitness}, 
we aim to show that all gaps 
in $\elzc$ are, in fact, small.
In the next subsection (\ref{subsec:gapssmall}) we show that 
a large gap  $e_U{-}e_D$
would entail a long eqlevel-decreasing line-climbing 
path in $\T(\A)\times\T(\A)$ (depicted in Fig.~\ref{fig:cyclesinlines}).
In this subsection we show that all such paths are, in fact, short.
Fig.~\ref{fig:cyclesinlines} illustrates a line-climbing path from 
a pair projected to $P_1$ to a larger pair projected to $P_2$.
The cyclicity and further structures in the figure will be discussed later.

\begin{figure}[ht]
\centering
\epsfig{scale=0.25,file=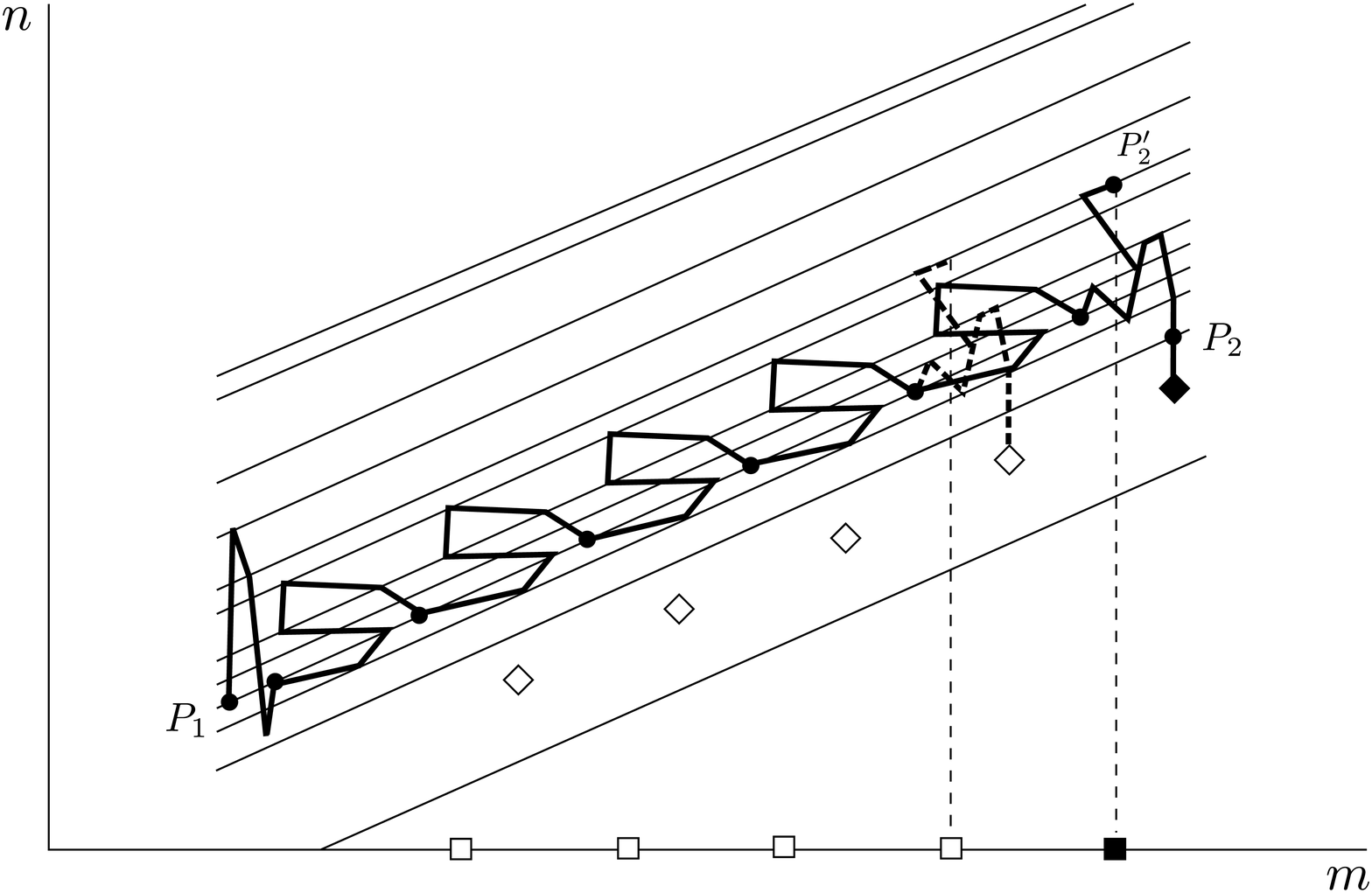}
\caption{A line-climbing path (projections of all 
visited configuration-pairs are in IL-equality lines in one line-bunch)}\label{fig:cyclesinlines}
\end{figure}

\begin{definition}\label{def:bunchclimb}
A \emph{path in} $\T(\A)\times\T(\A)$ is
\emph{positive} if each pair  $(p(m),q(n))$ in the path satisfies 
$m>0$, $n>0$; this entails that there are no reset steps in the path.

A positive path 
$(p_0(m_0),q_0(n_0))\gt{a_1}$ $(p_1(m_1),q_1(n_1))\gt{a_2}$ $\cdots$
 $\gt{a_{z}}(p_z(m_z),q_z(n_z))$
 is \emph{line-climbing} if $m_0<m_z$
 and all $(m_i,n_i)$, for $i=0,1,2,\dots,z$,
 are in one line-bunch. 
\end{definition}

We do not require that $(m_0,n_0)$ and $(m_z,n_z)$ are in the same
line, and we might have $n_z\leq n_0$;
hence ``line-climbing'' might be  understood as a shorthand for 
``(left-to-right) line-bunch climbing''.

To get some intuition for what follows, imagine that
Fig.~\ref{fig:cyclesinlines} illustrates the projection of a ``cyclic''
line-climbing eqlevel-decreasing path from $P_1$ to $P_2$ which 
is followed by a simple step leading out of the respective
line-bunch, namely to the black-diamond point.
Cutting off the copies of the cycle in the path would give rise to the
sequence of white-diamond points. 

Fig.~\ref{fig:cyclesinlines} also illustrates a similar path from
$P_1$ to $P'_2$ which is followed by another type of leaving the
line-bunch, namely by a  one-side
reset step to the black-box point.
Cutting off the copies of the cycle in the path would now give rise to the
sequence of white-box points. 

If the original path, including the line-bunch leaving step,
is eqlevel-decreasing
then the eqlevel of the ``exit pair'' 
(the black diamond or the black box)
is less than the eqlevels of all
``earlier exit pairs''
(white diamonds or white boxes)
 (recall Observation~\ref{prop:shortestsubpath}(2)).
The sequence of white-diamond (or white-box) points, finished by the 
black-diamond (or black-box) point, inspires the following definition.

\begin{definition}\label{def:strangeseq}
For $p,q\in Q_\stable$, a \emph{sequence} of pairs
\begin{center}
$(p(m_0),q(n_0)), (p(m_1),q(n_1)), (p(m_2),q(n_2)),
\dots, (p(m_z),q(n_z))$
\end{center}
where $z\geq 1$
is \emph{strange periodic}
if the following conditions hold:
\begin{enumerate}
\item
$(m_i,n_i)=(m_0+i\cdot c_1,n_0+i\cdot c_2)$ for some
$c_1,c_2\in\N$ and $i=0,1,\dots,z$;
\item
$\lev(p(m_i),q(n_i))>\lev(p(m_z),q(n_z))$
for all $i\in\{0,1,\dots,z{-}1\}$ (hence $c_1>0$ or $c_2>0$);
\item
the pairs $(m_0,n_0),(m_1,n_1),\dots, (m_z,n_z)$
are not all in one $\il$-equality line.
\end{enumerate}
\end{definition}
\noindent
Prop.~\ref{prop:fewlines} implies
that in any
strange periodic sequence there are 
only few pairs $(p(m_i),q(n_i))$ such that
$\il(p(m_i))=\il(q(n_i))<\omega$.

We now show that 
all strange periodic sequences are short, 
and then we derive that all 
line-climbing eqlevel-decreasing paths are short.
(Fig.~\ref{fig:cyclesinlines} suggests that such paths can be assumed 
to use a ``cycle'';
this will be established later by another use of
Lemma~\ref{lem:valiantpositpath}.)

\begin{proposition}\label{prop:strangeshort}
Strange periodic sequences are short.
\end{proposition}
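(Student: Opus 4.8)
We need to prove that strange periodic sequences are short, i.e. bounded by a polynomial in $k$. Recall the setup: a strange periodic sequence is a sequence of pairs $(p(m_i),q(n_i))$ for fixed $p,q\in Q_\stable$, where the points $(m_i,n_i)$ progress arithmetically by a fixed increment $(c_1,c_2)$, the eqlevels strictly decrease to a minimum at the last index $z$, and not all points lie on a single $\il$-equality line.

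Let me think about what structure we have to exploit...

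The key constraint is condition 2: $\lev(p(m_i),q(n_i)) > \lev(p(m_z),q(n_z))$ for all $i < z$. So the final eqlevel $e^* := \lev(p(m_z),q(n_z))$ is strictly smaller than all earlier eqlevels. Combined with the arithmetic progression structure, this is a strong rigidity condition.

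**My proposed approach.** The plan is to use the eqlevel-tuple machinery from Section~\ref{sub:eqltuples} together with the $\il$-equality line structure. For each index $i$ consider the associated eqlevel tuple $(b_i,\ell_i,r_i,o_i,d^L_i,d^R_i)$ for the pair $(p(m_i),q(n_i))$, where $b_i=\lev(p(m_i),q(n_i))$, $\ell_i=\il(p(m_i))$, $r_i=\il(q(n_i))$, etc. The first observation I would make is that along the sequence the $\il$-values $\ell_i=\il(p(m_i))$ and $r_i=\il(q(n_i))$ are governed by Prop.~\ref{prop:indlevel}: for $m_i$ above a small bound, $\il(p(m_i))=\rho m_i+\sigma+e$ and similarly for $q$, so they grow linearly (or stay constant, but constancy happens only for few indices by Prop.~\ref{prop:conseqilpm}(1)). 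I would first dispose of the ``few irregular indices'' — those where an $\il$-value is not given by its dominant linear equality, or where $(m_i,n_i)$ lies on a line-crossing point — since by Prop.~\ref{prop:fewlines} there are only few such indices and they partition the sequence into a small number of blocks. It then suffices to bound the length of each block in which the $\il$-values vary regularly.

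**The main argument.** Within a regular block, I expect to use Corollary~\ref{prop:cormineqlevel}: in the rectangle $(b,\ell,r,o)$ the minimum is attained at least twice. Since condition 3 says not all points lie on a single $\il$-line, for the ``many'' indices we have $\ell_i\neq r_i$ (being off the line means the equality $\il(p(m_i))=\il(q(n_i))$ fails), so $\ell$ and $r$ diverge along the block. The crucial move is to track where the minimum of the rectangle sits. If $b_i$ is the strict minimum infinitely often we get a contradiction with the double-minimum property; so for most $i$ the basic eqlevel $b_i$ equals one of $\ell_i,r_i,o_i$. Now here is the leverage from condition 2: the $b_i$ are all strictly larger than $b_z=e^*$, whereas $\ell_i,r_i$ grow linearly and unboundedly as $i$ increases (once $\rho,\rho'>0$). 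This forces $b_i=o_i=\lev(\modstate(p(m_i)),\modstate(q(n_i)))$ for the bulk of indices — because $b_i$ cannot equal the growing quantities $\ell_i$ or $r_i$ while staying bounded below by being $> e^*$ only in a controlled way... actually the cleanest route is: since $b_i>b_z=e^*$ must hold, and the rectangle minimum is doubly attained, I track the pattern of which pair achieves the minimum and argue by pigeonhole. By Prop.~\ref{prop:ideafewc}(1) the quantity $o_i=\lev(\modstate(p(m_i)),\modstate(q(n_i)))$ depends only on $(m_i \bmod d_1, n_i \bmod d_2)$ for small $d_1,d_2$, hence $o_i$ is periodic in $i$ with small period along the arithmetic progression. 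So $o$ takes only few distinct values and repeats. The plan is then to locate two indices $i<j$ in the block with the same rectangle-minimum-pattern, the same $o$-value, and matching residues, and to derive that $\lev(p(m_i),q(n_i))=\lev(p(m_j),q(n_j))$, contradicting the strict decrease in condition 2 unless $j-i$ is small.

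**The main obstacle.** The hard part will be pinning down exactly how $b_i$ is forced to coincide with the periodic quantity $o_i$ rather than with the linearly-growing $\ell_i$ or $r_i$, and handling the interplay with the diagonal values $d^L_i,d^R_i$ through the triangle cases of Corollary~\ref{prop:cormineqlevel}. The subtle point is that $\ell_i$ and $r_i$ grow, so if $b_i$ tracked them it would grow too, which is compatible with condition 2 (eqlevels merely decrease from $i$ toward $z$, they need not be bounded); the genuine contradiction must come from combining the arithmetic periodicity of $o$ (via Prop.~\ref{prop:ideafewc}(1)) with the fact that eqlevels in an eqlevel-decreasing regime drop by exactly $1$ per step (Observation~\ref{obs:eqleveldrop}), so a long block would make the $b_i$ descend through too many consecutive integers while the controlling periodic data $o_i$ recurs, yielding two equal eqlevels where strict inequality is required. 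Making this pigeonhole quantitatively clean — so that the block length is genuinely bounded by a polynomial in $k$ rather than merely finite — is where I expect the real care to be needed, and I would lean on Lemma~\ref{prop:fewc} (smallness of $\elzc$) and Prop.~\ref{prop:ideafewc}(2) to keep all the relevant value-sets small throughout.
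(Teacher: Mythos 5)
There is a genuine gap, and it sits at the heart of your argument. Your closing mechanism is a pigeonhole: find two indices $i<j$ with the same rectangle-minimum pattern, the same $o$-value and matching residues, and ``derive that $\lev(p(m_i),q(n_i))=\lev(p(m_j),q(n_j))$''. No result in the paper supports that derivation, and it is false in general: $b_i=\lev(p(m_i),q(n_i))$ is an eqlevel in the infinite system $\T(\A)$ and depends on the actual counter values, not merely on residues. Prop.~\ref{prop:ideafewc}(1) gives residue-periodicity only for the $\modstate$-quantity $o_i$ (and Prop.~\ref{prop:ideafewcfix}(1) for a diagonal with one side fixed), never for $b_i$. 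Your fallback contradiction is also built on a misreading of Def.~\ref{def:strangeseq}(2): the condition says only that $b_i>b_z$ for all $i<z$; the intermediate $b_i$ need not be monotone, and consecutive pairs of a strange periodic sequence are \emph{not} joined by single transitions (they arise by cutting whole cycle copies out of a path), so Observation~\ref{obs:eqleveldrop} and the claim that ``eqlevels drop by exactly $1$ per step'' are inapplicable. A long sequence is perfectly compatible with $b_i$ wandering upward, so no descent-through-consecutive-integers contradiction is available.

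The idea your proposal is missing is the paper's \emph{dense periodic subsequence} construction anchored at the final index $z$. Using Prop.~\ref{prop:conseqilpm}(2), one finds index sets of small base and period (counting back from $z$) on which $\ell_j\leq\ell_z$ and $r_j\leq r_z$; Prop.~\ref{prop:ideafewc}(1) similarly yields $o_j\leq o_z$, and, in the degenerate case $c_2=0$, Prop.~\ref{prop:ideafewcfix}(1) yields $d^R_j\leq d^R_z$. Since Cor.~\ref{prop:cormineqlevel} applied at $z$ forces one of $\ell_z,r_z,o_z$ (resp.\ $\ell_z,d^R_z$) to be $\leq b_z<\omega$, and $b_j>b_z$ on the whole sequence, the minimum of the tuple at every $j$ in the dense set is attained \emph{away} from $b_j$; double attainment then forces $\ell_j=r_j<\omega$ or $\ell_j=o_j<\omega$ or $r_j=o_j<\omega$ (resp.\ $\ell_j=d^R_j<b_j$). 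The first alternative occurs only few times by condition~3 of Def.~\ref{def:strangeseq} together with Prop.~\ref{prop:fewlines}, and the others pin the configurations into small sets via Prop.~\ref{prop:ideafewc}(2), Prop.~\ref{prop:ideafewcfix}(2) and Prop.~\ref{prop:conseqilpm}(1); since the counter values $m_0+jc_1$ (or $n_0+jc_2$) are pairwise distinct, this bounds the dense index set and hence $z$. You correctly assembled most of these ingredients (the rectangle corollary, fewness of $\ell=r$, smallness of the relevant value sets), but without the anchored monotone subsequence there is no valid step that forces the minimum off $b_j$, and the substitute mechanism you propose does not hold.
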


\begin{proof}
Let us assume a strange periodic 
sequence
\begin{equation}\label{eq:strangeper}
(p(m_0),q(n_0)), (p(m_1),q(n_1)), (p(m_2),q(n_2)),\dots, (p(m_z),q(n_z))
\end{equation}
as in Def.~\ref{def:strangeseq}.
Hence there are $c_1,c_2\in\N$
such that 
$(m_i,n_i)=(m_0+i\cdot c_1,n_0+i\cdot c_2)$ for $i=0,1,\dots,z$;
moreover, $c_1>0$ or $c_2>0$, 
and the pairs in~(\ref{eq:strangeper})  are thus pairwise different.

For $i\in\{0,1,\dots,z\}$, by
\begin{center}
$(b_i,\ell_i,r_i,o_i,d^L_i,d^R_i)$
we denote the eqlevel tuple associated with
$(p(m_i),q(n_i))$
\end{center}
(recall Fig.~\ref{fig:quadruple} and
Cor.~\ref{prop:cormineqlevel}).
As we already noted,
we  have 
\begin{equation}\label{eq:fewlisr}
\ell_i=r_i<\omega \textnormal{ only for few } i\in\{0,1,2,\dots,z\}. 
\end{equation}
We now explore certain ``dense''
periodic subsequences of~(\ref{eq:strangeper}).
By a \emph{periodic subsequence}, with the \emph{period} $\per>0$ and the
\emph{base} $b\geq 0$, we mean
the sequence of pairs $(p(m_j),q(n_j))$ where 
$j$ ranges over the \emph{index set}
\begin{center}
$\J=\{z-x\cdot\per, z-(x{-}1)\cdot\per, z-(x{-}2)\cdot\per,
\dots, z-2\cdot\per, z-\per\}$
\end{center}
for $x=(z{-}b)\div\per$. 
If both $b$ and $\per$ are small
(i.e., bounded by $\poly(k)$ for a fixed polynomial $\poly$
independent of 
the assumed doca $\A$ with $k$ control states)
then we say that this periodic \emph{subsequence} is \emph{dense}. 
We note that
\begin{center}
if a dense subsequence is short then 
the whole sequence~(\ref{eq:strangeper}) is short (i.e., $z$ is small).
\end{center}
By (2) in Def.~\ref{def:strangeseq} we have $b_i>b_z$ for all $i<z$,
hence also $b_j>b_z$ for all $j\in\J$ where $\J$ is the index set 
of a periodic subsequence.
Using Prop.~\ref{prop:conseqilpm}(2),
we now observe that there is 
a dense subsequence, with the index set $\J_1$, where
$\ell_j\leq \ell_z$ for all $j\in\J_1$ (when $\ell_z<\omega$
and $c_1>0$ then we
can even establish $\ell_j<\ell_z$). 
Similarly there is a dense subsequence,
with the index set $\J_2$,
where $r_j\leq r_z$ for all
$j\in\J_2$. 
By using Prop.~\ref{prop:ideafewc}(1) we derive that there is 
also a dense subsequence, with the index set $\J_3$,
where $o_j\leq o_z$ for all $j\in\J_3$. 
(Given $d_1,d_2$ guaranteed for $p,q,m_z,n_z$ by 
Prop.~\ref{prop:ideafewc}(1), we can
take $d_1\cdot d_2$ as the period of the subsequence.)

Moreover, if $c_2=0$, and thus $q(n_i)=q(n_0)$ in all pairs 
in~(\ref{eq:strangeper}), then Prop.~\ref{prop:ideafewcfix}(1)
implies that there is a dense subsequence,
 with the index set $\J_4$,
where 
$d^R_j\leq d^R_z$ for all $j\in\J_4$. 

We now perform a case analysis.
\begin{enumerate}
\item
$c_1>0$, $c_2=0$ (the case $c_1=0$, $c_2>0$ is symmetric)

Here we have $q(n_i)=q(n_0)$ in all pairs 
in~(\ref{eq:strangeper}). Considering the triangle
$\{b_z,\ell_z,d^R_z\}$ (recall Fig.~\ref{fig:quadruple} and
Cor.~\ref{prop:cormineqlevel}), we note that 
we must have $\ell_z\leq b_z<\omega$ or 
$d^R_z\leq b_z<\omega$.
Hence there is a dense subsequence, indexed by $\J$, where 
 $\ell_j\leq \ell_z\leq b_z<b_j$ for all $j\in\J$, or 
 $d^R_j\leq d^R_z\leq b_z<b_j$ for all $j\in\J$.
In both cases,  Cor.~\ref{prop:cormineqlevel} implies
that $\ell_j=d^R_j<b_j$ for all $j\in\J$. 
Since each $d^R_j$ belongs to the set
$\{\,e\mid \modstate(p(m))\depictlev{e}q(n_0)$ for some $m\,\}$,
Prop.~\ref{prop:ideafewcfix}(2) implies that 
the set $\{d^R_j\mid j\in\J\}=\{\ell_j\mid j\in\J\}$ is small.
Prop.~\ref{prop:conseqilpm}(1) then implies
that the set $\{p(m_0+j\cdot c_1)\mid j\in\J \}$ is small;
this implies 
that $\J$ is small and thus~(\ref{eq:strangeper}) is short.

\item
$c_1>0$, $c_2>0$

Looking at the rectangle $\{b_z,\ell_z,r_z,o_z\}$, we note that 
we have $\ell_z\leq b_z<\omega$ or $r_z\leq b_z<\omega$
or $o_z\leq b_z<\omega$.
Hence there is a dense subsequence, indexed by $\J$, where 
 $\ell_j\leq \ell_z\leq b_z<b_j$ for all $j\in\J$, or 
 $r_j\leq r_z\leq b_z<b_j$ for all $j\in\J$,
or $o_j\leq o_z\leq b_z<b_j$ for all $j\in\J$.
In any case,
Cor.~\ref{prop:cormineqlevel} implies that for each $j\in\J$
we have $\ell_j=r_j<\omega$ or $\ell_j=o_j<\omega$ or $r_j=o_j<\omega$. 

We note that the set 
$\{(p(m_0+j\cdot c_1),q(n_0 + j\cdot c_2))\mid j\in\J,
\ell_j=r_j<\omega\}$
is small by~(\ref{eq:fewlisr}), 
and the set $\{(p(m_0+j\cdot c_1),q(n_0 + j\cdot c_2))\mid j\in\J,
\ell_j=o_j<\omega$ or $r_j=o_j<\omega \}$
is small by 
Prop.~\ref{prop:ideafewc}(2)
and Prop.~\ref{prop:conseqilpm}(1).
This implies that $\J$ is small and thus~(\ref{eq:strangeper}) is short. 
\end{enumerate}
\end{proof}

\begin{proposition}\label{prop:climbshort}
Eqlevel-decreasing line-climbing paths are short.
\end{proposition}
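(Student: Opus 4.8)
The plan is to reduce Proposition~\ref{prop:climbshort} to the already-established Proposition~\ref{prop:strangeshort} (strange periodic sequences are short), exactly as the preceding discussion around Fig.~\ref{fig:cyclesinlines} suggests. So I want to extract from an arbitrary eqlevel-decreasing line-climbing path a strange periodic sequence whose length is comparable to the length of the path; shortness of the former then forces shortness of the latter.

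First I would set up the path. Suppose $(p_0(m_0),q_0(n_0)) \gt{a_1} \cdots \gt{a_z} (p_z(m_z),q_z(n_z))$ is an eqlevel-decreasing line-climbing path, so it is positive (no reset steps), $m_0 < m_z$, and all projected points $(m_i,n_i)$ lie in a single line-bunch $H$. Since there are only $k^2$ choices of control-state pair $(p_i,q_i)$, by pigeonhole some pair $(p,q)$ of control states recurs at many indices; restricting to those indices and to a single line within $H$ (there are only few lines in a bunch by Prop.~\ref{prop:fewlines}(1)), I want to find a long increasing subsequence of indices $i_0 < i_1 < \cdots < i_t$ all carrying the same control-state pair $(p,q)$ and all lying on lines of $H$ with a common difference structure. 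The key point is that positivity together with Lemma~\ref{lem:valiantpositpath} (applied on each side) lets me assume the path, between two recurrences of $(p,q)$, repeats a short control-state cycle on each coordinate; this forces the counter increments to be arithmetic, $(m_{i_j},n_{i_j}) = (m_{i_0}+j\cdot c_1,\, n_{i_0}+j\cdot c_2)$ with $c_1,c_2 \in \N$ and $c_1>0$ (as the path climbs in $m$). That gives condition~(1) of Def.~\ref{def:strangeseq}.

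Condition~(2) comes for free from the eqlevel-decreasing hypothesis via Observation~\ref{prop:shortestsubpath}(2): since the indices are strictly increasing and $(p(m_{i_t}),q(n_{i_t}))$ is reached by a strictly longer prefix than each earlier $(p(m_{i_j}),q(n_{i_j}))$, we get $\lev(p(m_{i_j}),q(n_{i_j})) > \lev(p(m_{i_t}),q(n_{i_t}))$ for all $j<t$. For condition~(3), I need the extracted points \emph{not} to all lie on a single $\il$-equality line. This is where the main obstacle sits. If the arithmetic progression $(m_{i_0}+j c_1, n_{i_0}+j c_2)$ happened to lie entirely on one $\il$-equality line, Def.~\ref{def:strangeseq} would not apply. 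The resolution is to observe that on a single line we would have $\il(p(m_{i_j}))=\il(q(n_{i_j}))<\omega$ (the defining property of $\il$-equality lines), i.e. $\ell_{i_j}=r_{i_j}<\omega$ for all these indices; but such ``$\ell=r$'' points are few along any fixed progression by~(\ref{eq:fewlisr}) / Prop.~\ref{prop:fewlines}(2), which directly bounds the subsequence length. So either the progression leaves the line (and Prop.~\ref{prop:strangeshort} applies to bound $t$), or it stays on the line and the $\ell=r$ scarcity bounds $t$ directly; in both cases $t$ is small.

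Finally I would assemble the bound. The three pigeonhole/periodicity reductions (control-state pair, line within the bunch, common increment pattern) each cost only a small multiplicative factor, so a long path $z$ yields a proportionally long valid strange periodic sequence (after discarding the few $\ell=r$ points and the few points where two lines of the bunch intersect, using Prop.~\ref{prop:fewlines}(1)); since that sequence must be short by Prop.~\ref{prop:strangeshort}, $z$ itself is small. The delicate bookkeeping — making the "discard the few bad indices still leaves a genuine arithmetic strange periodic subsequence" step rigorous while preserving the arithmetic increment structure of Def.~\ref{def:strangeseq}(1) — is the part I expect to require the most care, since one must ensure that after thinning to a common control-state pair and a common line, the surviving indices are still equally spaced (or can be further thinned to an equally spaced dense subsequence, in the spirit of the ``dense subsequence'' machinery used inside the proof of Prop.~\ref{prop:strangeshort}).
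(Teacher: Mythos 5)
There is a genuine gap, and it sits exactly where you located the ``main obstacle.'' Your resolution of condition~(3) of Def.~\ref{def:strangeseq} rests on the claim that if the extracted progression lies on a single $\il$-equality line then $\il(p(m_{i_j}))=\il(q(n_{i_j}))<\omega$ for all its points, ``the defining property of $\il$-equality lines.'' No such property exists: a line is defined merely by a valid slope-shift pair, which requires the \emph{existence} of one pair $p(m),q(n)$ of witnesses with the matching associated equalities; the paper states explicitly, right after the definition, that ``the definition does not assume anything more specific about lines.'' For an arbitrary control-state pair $(p,q)$ occurring on your path, a projection $(m_i,n_i)$ lying on a line of the bunch gives no information about $\ell_i$ and $r_i$ at all. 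Consequently your fallback -- ``$\ell=r$ points are few along the progression by~(\ref{eq:fewlisr}) / Prop.~\ref{prop:fewlines}(2)'' -- also fails: (\ref{eq:fewlisr}) is derived \emph{inside} the proof of Prop.~\ref{prop:strangeshort} precisely from condition~(3) (points not all on one line), and Prop.~\ref{prop:fewlines}(2) only bounds $\ell=r$ pairs whose projection is \emph{not} in a line. When the whole progression sits on one line -- which is the generic case for a line-climbing path, e.g.\ one repeating a cycle whose effect vector has the bunch's slope $\mu$ -- nothing bounds the number of $\ell_i=r_i<\omega$ indices, and indeed such paths typically have many of them; that is exactly why line-climbing paths are the hard case and cannot be fed into Prop.~\ref{prop:strangeshort} directly.

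The paper escapes this by never applying Prop.~\ref{prop:strangeshort} to points \emph{on} the path. It first fixes a maximal line-climbing path, builds a small auxiliary doca $\B$ with control states $(p,q,L)$ ($L$ a line of the bunch), and uses Observation~\ref{prop:shortestsubpath}(1) plus Lemma~\ref{lem:valiantpositpath} in $\T(\B)$ to put the path in the form $u_1v^iu_2$ -- this is the rigorous version of your pigeonhole step, and it is salvageable along your lines. The decisive move is then a case analysis on how the maximal path terminates: the strange periodic sequence is formed from the \emph{exit pairs} reached by $u_1v^ju_2a$, where $a$ is the terminating step that leaves the bunch (or is a reset). Condition~(3) then holds for slope reasons: the exit progression has the bunch's slope, so a single $\il$-equality line through all its points would be parallel to, hence a member of, the maximal bunch $H$ -- contradicting that the exit point left $H$; in the one-side-reset case $c_2=0$ and a positive-slope line meets $n=0$ at most once. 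Your proposal also omits the termination case analysis (eqlevel~$0$, both-side reset handled by cutting $\per_s\cdot\per_{s'}$ cycle copies, etc.) and the technical requirement that $n_0$ exceed a small bound so the $\T(\B)$-path corresponds to a genuinely positive path in $\T(\A)\times\T(\A)$. The missing idea, in short, is to step \emph{off} the line-bunch before invoking Prop.~\ref{prop:strangeshort}.
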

\begin{proof}
We consider an eqlevel-decreasing line-climbing path in a fixed
line-bunch $H$, in the form
\begin{equation}\label{eq:lineclimb}
(p_0(m_0),q_0(n_0))\gt{a_1}
(p_1(m_1),q_1(n_1))\gt{a_2}  
\cdots
\gt{a_z}(p_z(m_z),q_z(n_z))
\end{equation}
as in Def.~\ref{def:bunchclimb}; we recall that 
the path is positive and
$m_0<m_z$.
Moreover, we assume that~(\ref{eq:lineclimb})
can not be prolonged by one step, by which we mean that 
one of 
the following conditions holds.
\begin{enumerate}
\item
$\lev(p_z(m_z),q_z(n_z))=0$.
\item
Each eqlevel decreasing 
step $(p_z(m_z),q_z(n_z))\trans{a}(p'(m'),q'(n'))$ 
is of one of the following types:
\begin{enumerate}
\item
it is a (one-side
or both-side) reset step,
\item
it spoils the ``one line-bunch property'' 
($(m',n')$ is out of the line-bunch $H$),  
\item
$m_0\geq m'$ (which entails $m_z=m_0{+}1$ and $m'=m_0$ when the step
is simple).
\end{enumerate}
\end{enumerate}
E.g., $(p_0(m_0),q_0(n_0))$ might be projected to $P_1$ 
in Fig.~\ref{fig:cyclesinlines}; the projections $P_2$ and $P'_2$
represent two possible end-pairs $(p_z(m_z),q_z(n_z))$ after which the
line-bunch $H$ is left by eqlevel decreasing steps.

We now note that the path~(\ref{eq:lineclimb}) in $\T(\A)\times\T(\A)$
can be alternatively presented as
\begin{equation}\label{eq:lineclimblinenot}
((p_0,q_0,L_0),m_0)\gt{a_1}
((p_1,q_1,L_1),m_1)\gt{a_2}  
((p_2,q_2,L_2),m_2)\gt{a_3}  
\cdots
\gt{a_z}((p_z,q_z,L_z),m_z)
\end{equation}
where $L_i$ denotes the (unique) $\il$-equality line in 
the line-bunch $H$
which contains $(m_i,n_i)$.
This presentation looks like a path in $\T(\B)$ for a doca $\B$ 
which has the triples $(p,q,L)$
as the control states
(where $p,q$ are stable control states
of $\A$ and $L$ is a denotation of a line
from the line-bunch $H$).
We can think of 
such a doca
$\B$ which has no reset control states and no zero rules
and arises from $\A$ as follows:
\begin{quote}
If $(p,a,1,p',j_1)$ and  $(q,a,1,q',j_2)$ are (positive) rules of $\A$,
where  $p',q'$ are stable,
and $L,L'$ are two lines from $H$ defined by valid slope-shift pairs
$(\mu,\tau)$, $(\mu,\tau')$, respectively,
and $j_2-\mu\cdot j_1=\tau'-\tau$

then $((p,q,L),a,1,(p',q',L'),j_1)$ is a (positive) rule of $\B$.
\end{quote}
An equivalent formulation of the condition $j_2-\mu\cdot j_1=\tau'-\tau$ 
is to say that
for all positive $m,n\in\N$ we have
$(m,n)\in L$ iff $(m{+}j_1,n{+}j_2)\in L'$
(i.e., $n=\mu\cdot m +\tau$ iff $n{+}j_2=\mu\cdot (m {+}j_1)+\tau'$).

For any tuple $(p,q,L,a)$ there is obviously at most one 
tuple $(p',q',L',j_1)$ such that 
 $((p,q,L),a,1,(p',q',L'),j_1)$ is a rule of $\B$; hence $\B$ is
 indeed a doca. The size of $\B$ (in particular 
 the number of control states of $\B$) is small
since the number of lines in $H$ is small
(recall Prop.~\ref{prop:fewlines}(1)).
 
 It is clear that
any positive path 
in $\T(\A)\times \T(\A)$ 
which visits only the pairs projected to the line-bunch $H$
corresponds to a path in $\T(\B)$;
the paths~(\ref{eq:lineclimb}) and~(\ref{eq:lineclimblinenot})  
illustrate this correspondence.

By Observation~\ref{prop:shortestsubpath}(1),
the path~(\ref{eq:lineclimb}) is a shortest path 
from $(p_0(m_0),q_0(n_0))$ to $(p_z(m_z),q_z(n_z))$
in $\T(\A)\times\T(\A)$.
By Lemma~\ref{lem:valiantpositpath},
a shortest path from $((p_0,q_0,L_0),m_0)$
to $((p_z,q_z,L_z),m_z)$ 
in $\T(\B)$  is of the form
 $((p_0,q_0,L_0),m_0)\gt{w} ((p_z,q_z,L_z),m_z)$
where 
$w=u_1v^iu_2$
for some short $u_1,v,u_2$ (short w.r.t. the size of $\B$ which is small)
and some $i\geq 0$; moreover, we can assume that 
the effect (the counter change) of
the respective control state cycle
$((p,q,L),..)\gt{v}((p,q,L),..)$ is positive (since $m_0<m_z$).

There is a slight problem that the path 
 $((p_0,q_0,L_0),m_0)\gt{w} ((p_z,q_z,L_z),m_z)$ in $\T(\B)$ might 
 not correspond to a positive path
 from $(p_0(m_0),q_0(n_0))$ to $(p_z(m_z),q_z(n_z))$
in $\T(\A)\times\T(\A)$ 
since 
$\B$ can go through a configuration $((p,q,L),m)$ where
$(\mu,\tau)$ is the slope-shift pair
of $L$ and $\mu\cdot m +\tau\leq 0$.
Nevertheless $u_1,v,u_2$ are short, 
and this problem thus cannot arise when $n_0$ is
larger than a small bound $\mathsf{b}$.
For showing that the path ~(\ref{eq:lineclimb}) is short, 
it suffices to show that its suffix starting in the first
$(p_j(m_j),q_j(n_j))$ where $n_j$ exceeds $\mathsf{b}$ is short.
(The prefix before such $(p_j(m_j),q_j(n_j))$ is obviously short.)

We thus immediately assume that $n_0$ is larger than $\mathsf{b}$,
which then allows us to assume that $a_1a_2\dots a_z$
in~(\ref{eq:lineclimb}) is $w=u_1v^iu_2$, as deduced from $\T(\B)$.  
We now perform a case analysis.

\begin{enumerate}
\item
$m_z=m_0{+}1$

By applying  Cor.~\ref{cor:valiantpositpath}
to the doca $\B$, we deduce that~(\ref{eq:lineclimb}) is short.

\item
$\lev(p_z(m_z),q_z(n_z))=0$

Path~(\ref{eq:lineclimb}) is short since 
$i\leq |u_1|+|u_2|+|v|$.
Otherwise 
by cutting off a copy of the cycle $v$, i.e. by performing
$u_1v^{i-1}u_2$ from $(p_0(m_0),q_0(n_0))$,
we would reach 
$(p_z(m_z{-}d_1),q_z(n_z{-}d_2))$
where $d_1$ is the effect of the cycle
$((p,q,L),..)\gt{v}((p,q,L),..)$ and 
$d_2=\mu\cdot d_1$ for the slope $\mu$ of $L$ (i.e. of the line-bunch
$H$).
We would 
thus reach
a pair with the zero eqlevel earlier 
(contradicting Observation~\ref{prop:shortestsubpath}(2)).

\item
There is an eqlevel-decreasing both-side reset step
$(p_z(m_z),q_z(n_z))\trans{a}(p'(0),q'(0))$
where
$p_z(m_z)\trule{a}s(m)\trule{\varepsilon} p'(0)$, 
$q_z(n_z)\trule{a}s'(n)\trule{\varepsilon} q'(0)$. 

Now $i\leq |u_1u_2v|+\per_{s}\cdot\per_{s'}$, 
since otherwise
by cutting off
$\per_{s}\cdot\per_{s'}$ copies of $v$ we would reach
$(p'(0),q'(0))$ earlier. Hence  (\ref{eq:lineclimb}) is short in this
case as well.

\item
There is an eqlevel decreasing simple step
$(p_z(m_z),q_z(n_z))\trans{a}(p'(m'),q'(n'))$ (as from $P_2$ 
in Fig.~\ref{fig:cyclesinlines}).

Then (``the diamond points in Fig.~\ref{fig:cyclesinlines}'', i.e.)
the sequence of pairs $(p'(m'_j),q'(n'_j))$ where
\begin{center}
$(p_0(m_0),q_0(n_0))\xrightarrow{u_1v^ju_2a}(p'(m'_j),q'(n'_j))$
\end{center}
and $j$ ranges over 
$|u_1u_2v|, |u_1u_2v|+1,|u_1u_2v|+2,
\dots, i{-}1, i\,$
is obviously a strange periodic sequence
(by recalling Observation~\ref{prop:shortestsubpath}(2)).
Since this sequence is short 
(by Prop.~\ref{prop:strangeshort}), also  (\ref{eq:lineclimb}) is
short.

\item
There is an eqlevel decreasing one-side reset step
$(p_z(m_z),q_z(n_z))\trans{a}(p'(m'),q'(0))$ (as from $P'_2$ 
in Fig.~\ref{fig:cyclesinlines}); we assume 
$q(n_z)\trule{a}s(n)\trule{\varepsilon} q'(0)$.

Then (``a subsequence of 
box points in Fig.~\ref{fig:cyclesinlines}'', namely)
the sequence of pairs $(p'(m'_j),q'(0))$
where 
\begin{center}
$(p_0(m_0),q_0(n_0))\xrightarrow{u_1v^ju_2a}(p'(m'_j),q'(0))$
\end{center}
and $j$ ranges over 
$i-x\cdot\per_s, i-(x{-}1)\cdot\per_s, i-(x{-}2)\cdot\per_s,
\dots,$
$i-2\cdot\per_s, i-\per_s, i$
where $x=(i{-}|u_1u_2v|)\div\per_s$ 
is obviously a strange periodic sequence.
Since this sequence is short 
(by Prop.~\ref{prop:strangeshort}), also  (\ref{eq:lineclimb}) is
short.
\end{enumerate}
\end{proof}

\subsection{Gaps in {\large {\bf $\elzc$}} 
are small}\label{subsec:gapssmall}

Assuming a doca $\A$, with the associated det-LTS $\LTSext(\A)$,
by Def.~\ref{def:elzc} we have
\begin{center}
$\elzc=\{e\in\N\mid$
there are two stable zero configurations $C,C'$ in $\LTSext(\A)$ s.t.
$C\depictlev{e}C'\}$.
\end{center}
We assumed $0\in\elzc$ and we fixed
an ordering $e_0<e_1<\cdots<e_f$ of $\elzc$.
We finally aim to contradict the existence 
of a large gap between $e_i=e_D$
and $e_{i+1}=e_U$ for some $i, 0\leq i<f$
(recall Fig.~\ref{fig:gapineqlevels});
this will finish a proof of
Theorem~\ref{th:strongpolywitness}. 

Before proving Lemma~\ref{prop:bnotminshort},
we sketch the idea informally, using Fig.~\ref{fig:downfromeU}.
Let us consider an eqlevel-decreasing path  
in $\LTSext(\A)\times\LTSext(\A)$, 
like~(\ref{eq:sinkeuseq}) below, which starts from
a pair  $(C_0,C'_0)$ of stable zero configurations satisfying
$\lev(C_0,C'_0)=e_U$;
let $(C_j,C'_j)$ be the pair  visited by our path 
after $j$ steps.
If both $C_0,C'_0$ are in $Q_\modp$
(recall that
$Q_\modp=\{\modstate(p(m))\mid p\in Q_\stable, m\geq 0\}$)
then also 
$C_1,C'_1$ are stable zero configurations (maybe in $\T(\A)$),
and thus $e_D=e_U{-}1$; the gap is really small in this case.
We thus further assume $C_0\not\in Q_\modp$ (hence $C_0=p(0)$ is in
$\T(\A)$);
this also handles the case  $C'_0\not\in Q_\modp$ by symmetry.

We are now not primarily interested in studying how 
the concrete pairs $(C_j,C'_j)$
can look like; we are interested in
the tuples $(b_j,\ell_j,r_j,o_j,d^L_j,d^R_j)$ 
associated with $(C_j,C'_j)$ by Def.~\ref{def:quadruple}
(recall Fig.~\ref{fig:quadruple}). 
The dependence of this tuple on $j$ is partly sketched in  
 Fig.~\ref{fig:downfromeU}.

\begin{figure}[ht]
\centering
\epsfig{scale=0.25,file=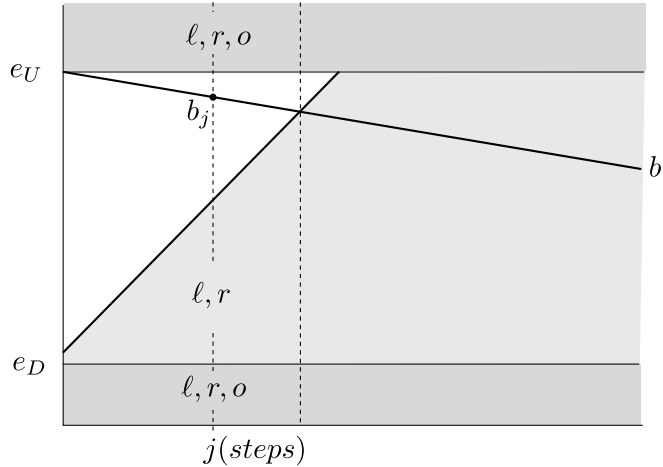}
\caption{
Constraints on $b_j,\ell_j,r_j,o_j$ after $j$
steps of 
an eqlevel-decreasing path 
with $b_0=e_U$}\label{fig:downfromeU}
\end{figure}

Since our path is eqlevel-decreasing
(the eqlevel drops by $1$ in each step), we know that
$b_j=e_U-j$, which is depicted by a line 
(in the standard sense, having nothing to do with IL-equality lines)
starting in point $(0,e_U)$ and having the slope $-1$.
(For a better overall appearence,  
the vertical unit length
in Fig.~\ref{fig:downfromeU}
is
smaller than the horizontal one.)

Each $o_j$ is either $\omega$ or an element of $\elzc$ 
(of $\textsc{E}_3$ after Def.~\ref{def:elzc}); in particular,
$o_j\geq e_U$ or $o_j\leq e_D$, which is depicted as a constraint in 
Fig.~\ref{fig:downfromeU}, using the horizontal lines at levels $e_U$
and $e_D$.

We now recall 
Prop.~\ref{prop:indlevel} and the fact that each finite $\il(q(0))$ is in
$\elzc$ (in $\textsc{E}_2$ after Def.~\ref{def:elzc}).
Hence for each $\ell_j$ we have either $\ell_j\geq e_U$ or 
$\ell_j\leq e_D+\rho_\textsc{m}\cdot j + \sigma_\textsc{m}$ where 
$\rho_\textsc{m}$
is the maximal number appearing as $\rho$
in the fixed equalities $\il(p(m))=\rho\cdot m+\sigma+e$, 
and $\sigma_\textsc{m}$
is the maximal number appearing there as $\sigma$.
(We use the fact that the counter value is at most $j$
in $C_j$, as well as in $C'_j$ when $C'_j$ is also in $\T(\A)$,
since we started from zero configurations.)
We recall that both $\rho_\textsc{m}$ and
$\sigma_\textsc{m}$ are small rational numbers.
The above constraints on $\ell_j$ are also depicted in 
Fig.~\ref{fig:downfromeU}, using the horizontal line at level $e_U$
and the line
starting in
$(0,e_D{+}\sigma_\textsc{m})$ and having the slope $\rho_\textsc{m}$.
The same constraints hold for $r_j$.

We note that if the horizontal coordinate of the 
intersection of the ``$b$-line'' (with slope
${-}1$) and the ``$\ell,r$-line'' (with the slope $\rho_\textsc{m}$)
is small then $e_U-e_D$ is small.
This is clear by noting that 
$b_j=e_U{-}j\leq e_D+\rho_\textsc{m}\cdot j+\sigma_\textsc{m}$
implies $e_U{-}e_D\leq (1{+}\rho_\textsc{m})\cdot
j+\sigma_\textsc{m}$. 

In fact, we will show even something stronger, namely that the maximal 
prefix of our path in which $b_j$ (for $j>0$) is ``solitary'', i.e. 
$b_j\not\in\{\ell_j,r_j,o_j\}$, is short.
This will be based on
Cor.~\ref{prop:cormineqlevel}, applied to the 
``rectangle'' $(b_j,\ell_j,r_j,o_j)$.
The previously established facts,
like that about few possible values $o_j$, 
will entail that in a long b-solitary prefix 
we would ``usually'' have $\ell_j=r_j<\omega$, 
which in turn would
entail a long line-climbing segment; this would contradict 
Prop.~\ref{prop:climbshort}.

\begin{definition}\label{def:bsolitary}
A \emph{pair} $(C,C')$ 
of stable configurations in $\LTSext(\A)$
with the associated eqlevel tuple
 $(b,\ell,r,o,d^L,d^R)$
is
\emph{b-solitary} if
$b\not\in\{\ell,r,o\}$. 
\\
A path in $\LTSext(\A)\times \LTSext(\A)$
is \emph{b-solitary} if each configuration-pair in the path is b-solitary.
\end{definition}

We note that in a b-solitary pair $(C,C')$
we must have that at least $C$ is in $\T(\A)$,
by our choice in
Def.~\ref{def:quadruple}.

\begin{lemma}\label{prop:bnotminshort}
All gaps $e_U{-}e_D$ in $\elzc$ are small.
\end{lemma}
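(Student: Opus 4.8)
The plan is to assume, for contradiction, that the gap $e_U{-}e_D$ is large, and to derive a contradiction by following an eqlevel-decreasing path down from eqlevel $e_U$. Concretely, I would fix a pair $(C_0,C'_0)$ of stable zero configurations with $\lev(C_0,C'_0)=e_U$ and build an eqlevel-decreasing path $(C_0,C'_0)\gt{a_1}(C_1,C'_1)\gt{a_2}\cdots$ in $\LTSext(\A)\times\LTSext(\A)$, writing $(b_j,\ell_j,r_j,o_j,d^L_j,d^R_j)$ for the eqlevel tuple of $(C_j,C'_j)$ (Def.~\ref{def:quadruple}). If both $C_0,C'_0\in Q_\modp$ then $C_1,C'_1$ are again stable zero configurations, so $e_D=e_U{-}1$ and we are done; hence I assume $C_0=p(0)$ lies in $\T(\A)$ (the case $C'_0\notin Q_\modp$ being symmetric). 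Being eqlevel-decreasing gives $b_j=e_U{-}j$. Since $o_j\in\elzc\cup\{\omega\}$, and $\elzc$ has no element strictly between $e_D$ and $e_U$, every finite $o_j<e_U$ satisfies $o_j\le e_D$. Using Prop.~\ref{prop:indlevel} together with the fact that the counter values in $C_j,C'_j$ are at most $j$, each finite $\ell_j$ (and likewise $r_j$) satisfies either $\ell_j\ge e_U$ or $\ell_j\le e_D+\rho_\textsc{m}\cdot j+\sigma_\textsc{m}$, where $\rho_\textsc{m},\sigma_\textsc{m}$ are the (small) maximal values occurring as $\rho,\sigma$ in the fixed equalities $\il(p(m))=\rho\cdot m+\sigma+e$. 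These are exactly the constraints sketched in Fig.~\ref{fig:downfromeU}.

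I would then reduce the lemma to the single claim that \emph{the b-solitary prefix of this path is short} (Def.~\ref{def:bsolitary}). Granting this claim, let $j^*\ge 1$ be the first index at which the pair ceases to be b-solitary (if the path reaches eqlevel $0$ while still b-solitary, then $e_U$ itself equals this short length and the gap is trivially small). By Def.~\ref{def:bsolitary}, $b_{j^*}\in\{\ell_{j^*},r_{j^*},o_{j^*}\}$, and I would finish by a short case analysis. If $b_{j^*}=o_{j^*}$, then $e_U{-}j^*=o_{j^*}\in\elzc$ is finite and $<e_U$, hence $\le e_D$, so $e_U{-}e_D\le j^*$ is small. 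If $b_{j^*}=\ell_{j^*}$, then by Prop.~\ref{prop:indlevel}, $e_U{-}j^*=\rho m+\sigma+e$ with $e\in\elzc$; here $e\ge e_U$ is impossible (it would force $\ell_{j^*}\ge e_U>b_{j^*}$), so $e\le e_D$ and $e_U{-}j^*\le\rho_\textsc{m} j^*+\sigma_\textsc{m}+e_D$, giving $e_U{-}e_D\le(1{+}\rho_\textsc{m})j^*+\sigma_\textsc{m}$, again small; the case $b_{j^*}=r_{j^*}$ is identical. A both-side reset step entering a pair of zero configurations is absorbed into this analysis in the same way as the $o$-case, since it produces an eqlevel in $\elzc$ equal to some $e_U{-}j$.

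The heart of the proof, and the step I expect to be the main obstacle, is therefore showing that the b-solitary prefix is short. In a b-solitary pair we have $b_j\notin\{\ell_j,r_j,o_j\}$, so by Cor.~\ref{prop:cormineqlevel} applied to the rectangle $(b_j,\ell_j,r_j,o_j)$ the minimum is finite, strictly below $b_j$, and realized by two of $\ell_j,r_j,o_j$; thus at each such $j$ one of $\ell_j=r_j<\omega$, $\ell_j=o_j<\omega$, $r_j=o_j<\omega$ holds with the common value $\le e_D$. The indices of the last two ``$o$-involving'' types are few: there $o_j\in\elzc$ ranges over a small set (Lemma~\ref{prop:fewc}), and the coincidence $\ell_j=o_j$ (or $r_j=o_j$) forces $\il$ of one component into $\elzc$, so by Prop.~\ref{prop:conseqilpm}(1) the relevant configurations lie in a small set; one-side reset steps are controlled the same way, since they push a component to a zero configuration whose independence level lies in $\elzc$ (Prop.~\ref{prop:ideafewcfix} and Prop.~\ref{prop:ideafewc} are the quantitative inputs here). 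Removing these few indices leaves runs in which $\ell_j=r_j<\omega$, i.e.\ $\il(p_j(m_j))=\il(q_j(n_j))$; by Prop.~\ref{prop:fewlines} all but few of the corresponding points $(m_j,n_j)$ lie on a single $\il$-equality line of one line-bunch, so each such run (after splitting at the few off-line points and line crossings, and according to monotonicity of $m_j$) is an eqlevel-decreasing line-climbing path and is short by Prop.~\ref{prop:climbshort}. The delicate part is to verify that these runs genuinely form line-climbing paths in the sense of Def.~\ref{def:bunchclimb} — in particular that they are positive and that the $m$-coordinate climbs — and to bound the number of runs so that the whole prefix is a short concatenation; assembling the exception-counting with Prop.~\ref{prop:climbshort} while correctly accounting for reset steps is where the real work lies.
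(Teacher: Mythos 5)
Your outline reproduces the paper's strategy faithfully (the reduction to the b-solitary prefix, the case analysis at the first non-solitary index, the constraints of Fig.~\ref{fig:downfromeU} are all exactly the paper's), but it stops short precisely where you yourself say ``the real work lies'', and two genuine gaps remain there. The smaller one is in your exception counting: when $r_i=o_i<\omega$ (or $\ell_i=o_i<\omega$) you pin only \emph{one} component into a small set --- $o_i$ ranges over the small set of Prop.~\ref{prop:ideafewc}(2), so Prop.~\ref{prop:conseqilpm}(1) makes $\{q_i(n_i)\}$ small --- but that does not make the set of \emph{indices} small, since $p_i(m_i)$ is unconstrained and the no-repeat property only forbids repeated pairs. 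The paper closes this with a second application of Cor.~\ref{prop:cormineqlevel}, now to the triangle $(b_i,\ell_i,d^R_i)$: b-solitarity excludes $b_i=\ell_i$, so the minimum there is $b_i=d^R_i$ or $\ell_i=d^R_i$; since $q_i(n_i)$ ranges over a small set, Prop.~\ref{prop:ideafewcfix}(2) makes $\{d^R_i\}$ small, the case $b_i=d^R_i$ occurs for few $i$ because the $b_i$ are pairwise distinct, and the case $\ell_i=d^R_i$ pins $p_i(m_i)$ as well, so both components lie in small sets and no-repeat finishes. The same two-step device (with $d^L_i=b_i$ and Prop.~\ref{prop:ideafewcfix}(2)) is what bounds the initial stretch of the path in which $C'_i$ is still in $Q_\modp$, where $r_i=\omega$ forces $\ell_i=o_i$ throughout; your sketch does not treat this stretch separately, and your one-component argument would not bound its length either.

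The more serious gap is the assembly step, for which the paper has an idea absent from your proposal: the \emph{sum-increasing subsequence}~(\ref{eq:sumicrease}). One selects the checkpoints $i_0<i_1<\cdots$ at which $m_i+n_i$ first exceeds all earlier values and cuts the path into segments between consecutive checkpoints. This buys exactly the two things your outline cannot supply. First, monotone climbing: two checkpoints lying on the same $\il$-equality line satisfy $m_{i_h}<m_{i_{h'}}$ (positive slope together with strictly increasing sums), so a run of consecutive usual segments genuinely is line-climbing in the sense of Def.~\ref{def:bunchclimb}, and pigeonhole over the few lines bounds such runs by $\mathsf{b}_1\cdot\mathsf{b}_2$, with $\mathsf{b}_2$ from Prop.~\ref{prop:climbshort}. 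Your plan to split ``according to monotonicity of $m_j$'' fails without this: inside one line-bunch the path may oscillate in $m$, producing arbitrarily many short monotone pieces with nothing in your argument to bound their number. Second, transfer of shortness: if the checkpoint subsequence is short, then all counter sums along the path stay below a small bound (they start small and grow by at most $2$ per new checkpoint), and then the no-repeat property bounds the length of the \emph{entire} path --- this is how ``each piece is short'' becomes ``the whole b-solitary prefix is short''. Everything in your proposal up to these two points is correct and matches the paper, but without the diagonal-triangle counting and the sum-increasing decomposition the proof does not go through.
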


\begin{proof}
We assume some $e_D,e_U\in\elzc$  
where $e_D<e_U$ and there is
no $e\in\elzc$ such that $e_D<e<e_U$,
and consider 
an eqlevel-decreasing path 
\begin{equation}\label{eq:sinkeuseq}
(C_0,C'_0)\gt{a_1}(C_1,C'_1)\gt{a_2}(C_2,C'_2)\gt{a_3}\cdots
\gt{a_z}(C_z,C'_z)
\end{equation}
in $\LTSext(\A)\times\LTSext(\A)$
where $C_0,C'_0$ are stable zero configurations, 
$C_0\depictlev{e_U}C'_0$, and $C_z\depictlev{0}C'_z$. 
We thus have $\eqlevel(C_j,C'_j)=e_U-j$ for all $j\in\{0, 1,\dots,z\}$.

Our aim is to show that   $e_U-e_D$ is small.
If $C_0\in Q_\modp$ and $C'_0\in Q_\modp$
then $C_1,C'_1$ are also (stable) zero configurations, and thus
$\eqlevel(C_1,C'_1)=e_U{-}1\in \elzc$; 
we thus have $e_U-e_D=1$.

We thus further assume that $C_0\not\in Q_\modp$ 
(while $C'_0\not\in Q_\modp$ is handled by symmetry).
Let $(b_i,\ell_i,r_i,o_i,d^L_i,d^R_i)$ be the eqlevel tuple associated
with $(C_i,C'_i)$ ($i=0,1,\dots,z$), as in Def.~\ref{def:quadruple};
in the case $C'_i\in Q_\modp$ we thus
have $r_i=\omega$, $b_i=d^L_i$, $o_i=d^R_i$.

We now note that if 
there is some small $j>0$ such that $(C_j,C'_j)$ is not b-solitary then
$e_U-e_D$ is small. This follows from the following two facts.
\begin{enumerate}
\item
If $b_j=o_j$ then $b_j=e_U-j\leq e_D$ 
(since $o_i$
belongs to $\elzc$ for all $i$);
hence $e_U-e_D\leq j$.
\item
If $b_j=\ell_j$ or $b_j=r_j$ then 
$b_j=e_U-j\leq e_D+\rho_\textsc{m}\cdot j+\sigma_\textsc{m}$,
and thus $e_U{-}e_D\leq (1{+}\rho_\textsc{m})\cdot
j+\sigma_\textsc{m}$, as was already discussed before
Def.~\ref{def:bsolitary}.

\end{enumerate}
We now fix $j$ so that 
\begin{equation}\label{eq:maxBsolprefix}
(C_1,C'_1)\gt{a_2}(C_2,C'_2)\gt{a_3}\cdots
\gt{a_j}(C_j,C'_j)
\end{equation}
is the maximal b-solitary prefix 
of the path~(\ref{eq:sinkeuseq}) in which the first step is removed.
We will show that $j$ is small, by which the proof will be finished;
we further assume $j\geq 1$.

The assumption $C_0\not\in Q_\modp$ implies
$C_1\not\in Q_\modp$ (hence $C_1=p(m)$ for some $p\in Q_\stable$ and some
$m\in\{0,1\}$).
Suppose $(C_1,C'_1)\gt{a_2}(C_2,C'_2)\gt{a_3}\cdots
\gt{a_{j_1}}(C_{j_1},C'_{j_1})$ is the maximal prefix 
of~(\ref{eq:maxBsolprefix})
such that
$C'_{j_1}\in Q_\modp$;
we put $j_1=0$ if $C'_{1}\not\in Q_\modp$.
For all $i\in\{1,2,\dots j_1\}$ we have $b_i=e_U-i$, $r_i=\omega$,
and $b_i\not\in\{\ell_i,r_i,o_i\}$;
hence
$\ell_i=o_i<b_i$ (by Cor.~\ref{prop:cormineqlevel}).
By Prop.~\ref{prop:ideafewc}(2) and Prop.~\ref{prop:conseqilpm}(1),
the set $\{C_i\mid 1\leq i\leq j_1\}$ is small,
 which implies that the set $\{b_i\mid 1\leq i\leq j_1\}=
\{d^L_i\mid 1\leq i\leq j_1\}$
 is small,
by Prop.~\ref{prop:ideafewcfix}(2).
Since $b_{i_1}\neq b_{i_2}$
if $i_1\neq i_2$, we get that $j_1$ is small.
It is thus sufficient to show that the suffix
\begin{equation}\label{eq:suffixmaxBsolprefix}
(C_{j_1+1},C'_{j_1+1})\gt{a_{j_1+2}}(C_{j_1+2},C'_{j_1+2})\gt{a_{j_1+3}}\cdots
\gt{a_j}(C_j,C'_j)
\end{equation}
of~(\ref{eq:maxBsolprefix}) is short. Let us rewrite 
(\ref{eq:suffixmaxBsolprefix}) as 
\begin{equation}\label{eq:pmsuffixmaxBsolprefix}
(p_0(m_0),q_0(n_0))\gt{a'_1}(p_1(m_1),q_1(n_1))\gt{a'_2}
\cdots
\gt{a'_{j'}}(p_{j'}(m_{j'}),q_{j'}(n_{j'}))
\end{equation}
where $j'=j-(j_1{+}1)$, and 
$(p_i(m_i),q_i(n_i))=(C_{j_1+1+i},C'_{j_1+1+i})$,
$a'_i=a_{j_1+1+i}$ for $i=0,1,\dots,j'$.
We note that $m_0+n_0$ is small (since $j_1$ is small and
$C_0,C'_0$ are zero configurations).

For simplicity, by 
$(b_i,\ell_i,r_i,o_i,d^L_i,d^R_i)$, where $0\leq i\leq j'$,
we further denote the eqlevel tuple associated 
with $(p_i(m_i),q_i(n_i))$ (not with $(C_i,C'_i)$ anymore).
Since the path~(\ref{eq:pmsuffixmaxBsolprefix}) is eqlevel-decreasing,
there is \emph{no repeat}, i.e.
$(p_{i_1}(m_{i_1}),(q_{i_1}(n_{i_1}))\neq (p_{i_2}(m_{i_2}),(q_{i_2}(n_{i_2}))$
if $i_1\neq i_2$.

For each $i\in\{0,1,\dots,j'\}$, the pair
$(p_i(m_i),q_i(n_i))$ is b-solitary, and thus
\begin{center}
$\min\{b_i,\ell_i,r_i,o_i\}$ 
is $r_i=o_i$ or $\ell_i=o_i$ or
$\ell_i=r_i$. 
\end{center}
We now aim to show that 
\begin{equation}\label{eq:fewilequal}
\text{there are only few } i\in\{0,1,\dots,j'\}
\text{ for which we do not have } \ell_i= r_i<\omega.
\end{equation}
To establish~(\ref{eq:fewilequal}),
it suffices to show that 
the sets $\{i\mid 0\leq i\leq j', r_i=o_i<\omega \}$ and
$\{i\mid 0\leq i\leq j', \ell_i=o_i<\omega \}$ are small;
by symmetry it suffices just to show that the former set is small.

We first note 
that the set
$$\{q_i(n_i)\mid 0\leq i\leq j', r_i=o_i<\omega \}$$
is small
by Prop.~\ref{prop:ideafewc}(2) 
and~\ref{prop:conseqilpm}(1).
Hence also the set $$\{d^R_i\mid 0\leq i\leq j', r_i=o_i<\omega\}$$
is small, by Prop.~\ref{prop:ideafewcfix}(2).
The set
$$\{i\mid 0\leq i\leq j',
r_i=o_i<\omega,
\min\{b_i,\ell_i,d^R_i\}=b_i=d^R_i\}$$ is thus
also small (recall that $b_{i_1}\neq b_{i_2}$
if $i_1\neq i_2$).
The set 
$$\{p_i(m_i)\mid 0\leq i\leq j',
r_i=o_i<\omega,
\min\{b_i,\ell_i,d^R_i\}=\ell_i=d^R_i\}$$
is also small, by recalling Prop.~\ref{prop:conseqilpm}(1).  
Since $\min\{b_i,\ell_i,d^R_i\}$ is $\ell_i=d^R_i$ or 
$b_i=d^R_i$ for all $i\in\{0,1\dots, j'\}$ 
(recall that $b_i=\ell_i$ is excluded in b-solitary pairs), we get
that both sets 
\begin{center}
$\{q_i(n_i)\mid 0\leq i\leq j', r_i=o_i<\omega \}$
and 
$\{p_i(m_i)\mid 0\leq i\leq j', r_i=o_i<\omega \}$
\end{center}
are small. 
Since there is no repeat in~(\ref{eq:pmsuffixmaxBsolprefix}), we get
that the set $\{i\mid 0\leq i\leq j', r_i=o_i<\omega \}$ is small.
We have thus established~(\ref{eq:fewilequal}). 

Let us now consider  the
\emph{sum-increasing subsequence} 
\begin{equation}\label{eq:sumicrease}
(p_{i_0}(m_{i_0}),q_{i_0}(n_{i_0})),
(p_{i_1}(m_{i_1}),q_{i_1}(n_{i_1})),
(p_{i_2}(m_{i_2}),q_{i_2}(n_{i_2})),
\dots
\end{equation}
of the sequence of pairs in~(\ref{eq:pmsuffixmaxBsolprefix}), 
where $0=i_0<i_1<i_2<\cdots$, and $i_{h+1}$ is the first such that 
$m_{i_{h+1}}+n_{i_{h+1}}$ is bigger than $m_{i_{h}}+n_{i_{h}}$
(for $h=0,1,2,\dots$).
If this subsequence is short then~(\ref{eq:pmsuffixmaxBsolprefix}) 
is obviously short since we started with small $m_0+n_0$
and $m_{i_{h+1}}+n_{i_{h+1}}\leq m_{i_{h}}+n_{i_{h}}+2$
(and there is no repeat in~(\ref{eq:pmsuffixmaxBsolprefix})).

For $h=0,1,2\dots$ we now consider the 
 subpaths
 of~(\ref{eq:pmsuffixmaxBsolprefix}) starting in
$(p_{i_h}(m_{i_h}),q_{i_h}(n_{i_h}))$
and finishing in
$(p_{i_{h+1}}(m_{i_{h+1}}),q_{i_{h+1}}(n_{i_{h+1}}))$; we call them
\emph{segments}.
A \emph{segment} is called \emph{unusual} if
\begin{itemize}
\item
the segment visits a pair $(p(m),q(n))$ such that 
$(m,n)$ is in no line-bunch, or is in the intersection of two different
line-bunches, or satisfies $m=0$ or $n=0$, or
\item
the segment contains a step
$(p(m),q(n))\gt{a}(p'(m{+}j_1),q'(n{+}j_2))$ such that 
$(m,n)$ and $(m{+}j_1,n{+}j_2)$ are in two different line-bunches.
\end{itemize}
Using~(\ref{eq:fewilequal}) and Prop.~\ref{prop:fewlines}
and the no-repeat property, we can
easily verify 
that there are only few unusual segments.

Any other segment, called \emph{usual}, is thus a positive path projected 
to one line-bunch;
moreover, the concatenation
of consecutive usual segments is also 
projected to one line-bunch.
We note that if $(p_{i_h}(m_{i_h}),q_{i_h}(n_{i_h}))$ and 
$(p_{i_{h'}}(m_{i_{h'}}),q_{i_{h'}}(n_{i_{h'}}))$, 
 for $h<h'$,
are in the same line
then $m_{i_h}<m_{i_{h'}}$.
Since there are only few lines, less than some small $\mathsf{b}_1$, 
and the lengths of eqlevel-decreasing line-climbing
paths are less than some small $\mathsf{b}_2$
by Prop.~\ref{prop:climbshort}, we cannot have more 
than $\mathsf{b}_1\cdot\mathsf{b}_2$ 
consecutive usual segments.
This finally implies 
that~(\ref{eq:sumicrease}) is short, and thus 
also~(\ref{eq:pmsuffixmaxBsolprefix}) is short.
Hence
$e_U-e_D$ is small.
\end{proof}

Now Lemma~\ref{prop:fewc} and Lemma~\ref{prop:bnotminshort}
give a proof of Theorem~\ref{th:strongpolywitness},
and thus also of Theorem~\ref{th:polywitness}.

\section{Additional remarks}\label{s:remarks}

The notions and their properties from the main
proof also help to answer related questions.
Here we only mention \emph{regularity}.
It is straightforward to verify that the language 
(the set of enabled traces) of a doca configuration $p(m)$ 
is non-regular iff we have
$p(m)\gt{u}q_1(n)\gt{v}q_2(n{+}k)\gt{w}q'(0)$
where $q_1(n)\gt{vw}q'(0)$ is a positive path and
$\il(q'(0))<\omega$.
(In this case, from $p(m)$ we can reach $q(n'))$ for some $q$ and
infinitely many $n'$ where $\il(q(n'))<\omega$.)
It is then a routine (though a bit technical)
to show that the regularity problem for doca
is in $\NL$ 
(and $\NL$-complete) as well.

\section*{Appendix (classical doca equivalence)}{\label{S App}}

\begin{figure}[ht]
\centering
\epsfig{scale=0.25,file=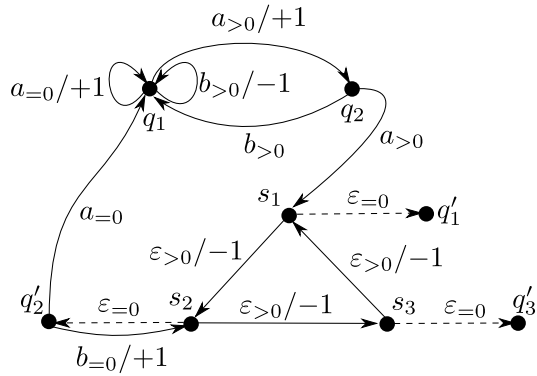}
\caption{
A classical doca}\label{fig:classicaldoca}
\end{figure}

The aim of this Appendix is to sketch the ideas of a routine reduction  
of the standard doca language equivalence problem to our 
$\docatrace$.
A classical definition would define
a \emph{doca} 
as a tuple
$\A=(Q,\Sigma,\delta,q_0,F)$ where
 $Q$ is a finite set of {\em control states},
 $\Sigma$ is a finite \emph{alphabet}, 
$\delta\subseteq Q\times (\Sigma\cup\{\varepsilon\})
\times\{0,1\}\times Q\times \{-1,0,1\}$ is
a {\em transition relation} 
satisfying the below given two conditions,
$q_0\in Q$ is the \emph{initial state}, and $F\subseteq Q$ is the set
of \emph{accepting states}.

In this context, $\varepsilon\not\in\Sigma$ is handled as a special symbol
but it plays the role of the empty word in the semantics.
The conditions for $\delta$ are the following.
\begin{enumerate}
\item
For each triple $(p,a,c)$, where $p\in Q$, 
$a\in\Sigma\cup\{\varepsilon\}$, $c\in\{0,1\}$ 
there is at most one  pair $(q,j)$ such that $(p,a,c,q,j)\in\delta$;
 moreover, $j\neq -1$ if $c=0$.
\item
If $(p,\varepsilon,c,q,j)\in\delta$ then there are no 
$a\in\Sigma$, $q'\in Q$, $j'\in\{-1,0,1\}$ such that 
$(p,a,c,q',j')\in\delta$.
\end{enumerate}
A {\em configuration} of $\A$ is a pair 
$(p,n)\in Q\times\N$; we write $p(n)$ instead of $(p,n)$, as previously.
We now define relations $\gt{w}$, $w\in\Sigma^*$, on $Q\times\N$ 
inductively as follows: 
$p(n)\gt{\varepsilon}p(n)$; 
if $(p,a,\sgn(n),q,j)\in\delta$ (where $a\in\Sigma\cup\{\varepsilon\}$)
then 
$p(n)\trans{a}q(n{+}j)$ (here $\sgn(n)=1$ if $n>0$ and 
$\sgn(n)=0$ if $n=0$); 
if $p(n)\gt{u}p'(n')$ and $p'(n')\gt{v}p''(n'')$ then 
$p(n)\gt{uv}p''(n'')$. Since the symbol $\varepsilon$ is handled as 
the empty word, we have $\varepsilon u=u\varepsilon =u$.
We define the \emph{language accepted by} $\A$ as
\begin{center}
$L(\A)=\{w\in\Sigma^*\mid q_0(0)\gt{w}q(n)$ for some 
$q\in F$, $ n\in\N\}$.
\end{center}
The \emph{language equivalence problem} asks, given two 
doca $\A_1$, $\A_2$ if $L(A_1)=L(A_2)$.

We now sketch the ideas of reducing this problem to our problem $\docatrace$.
First we note that we can take the disjoint union $\A$ of $\A_1,\A_2$ and
ask about the equality of languages of two different (initial)
configurations.
The doca $\A$, with $k$ control states,
can be routinely replaced 
by a doca $\A_{\shc}$ (with the ``Shrinked Counter''),
where a configuration $p(m)$ of $\A$ is represented by 
the configuration $p_i(j)$  of $\A_{\shc}$
where $i=m\bmod k$ and $j=(m\div k)$.
The control state set of $\A_{\shc}$ is $k$-times bigger,
to pay for shrinking the counter.

It is then easy to get rid of $\varepsilon$-rules which are not in
$\varepsilon$-cycles, and to get rid of $\varepsilon$-cycles 
with nonnegative effects. Finally, the only $\varepsilon$-rules which
remain are popping (decrementing the counter), and they are in cycles,
which is exemplified by the states $s_1,s_2,s_3$ 
in Fig.~\ref{fig:classicaldoca}.
To each such state $s$ in an $\varepsilon$-cycle we 
can add a control state $q_s$ with the
zero rule $(s,\varepsilon,0,q_s,0)$, to clearly separate the ``reset
control states'' from the ``stable ones''; this is illustrated
by $q'_1,q'_2,q'_3$ in Fig.~\ref{fig:classicaldoca}.
The final step of the transformation
to our reset-form doca (as in Fig.~\ref{fig:doca-example}) is now
obvious.
In the example, 
all  $s_1,s_2,s_3$ get the period $3$,
and we put $\goto_{s_2}(2)=q'_1$, 
$\goto_{s_3}(0)=q'_3$, etc.
(In fact, using $s_1$ is sufficient in our special case since 
the non-$\varepsilon$ incoming arcs of $s_2,s_3$ correspond to zero
rules only.)

Trace equivalence coincides with language equivalence when all states
are declared as accepting. 
A reduction from language equivalence to trace equivalence
can be sketched as follows.
For any triple $(q,a,c)$ such that $q\in Q_\stable$, $a\in\Sigma$,
$c\in\{0,1\}$ and there is no $(q,a,c,q',j)\in\delta$ we add the rule
$(q,a,c,q_\sink,0)$
where $q_\sink$ is an added ``sink loop'' state, with rules
$(q_\sink,a,c,q_\sink,0)$ for all $a\in\Sigma$ and $c\in\{0,1\}$.
We assume having arranged that all accepting control states are
stable, and we now add the ``loop'' rules $(q,a_\mathsf{acc},c,q,0)$ for 
a special fresh letter $a_\mathsf{acc}$ and all $q\in F$, $c\in\{0,1\}$
(so that $\lev(p(m),q(n))=0$ when $p\in F$, 
$q\not\in F$ or vice versa).

Since the reduction keeps the lengths of non-equivalence witnesses polynomially
related, the analogues of Theorems~\ref{th:polywitness}
and~\ref{th:nlcomplete} hold for the classical equivalence problem as
well.

\bibliographystyle{abbrv}
\bibliography{doca}

\end{document}